\newcommand\tab[1][1cm]{\hspace*{#1}}
\newcommand{\ignore}[1]{}
\newcommand{\cons}[1]{\mbox{\sc {Cons}}(#1)}
\newcommand{\cptree}{$C^+_\textit{tree}$}
\newcommand{\glb}{glb}
\newcommand{\lub}{lub}
\newcommand{\sat}{{\sc SAT}}
\newcommand{\partialmaxsat}{Partial MaxSAT}
\newcommand{\weightedmaxsat}{Weighted MaxSAT}
\newcommand{\weightedpartialmaxsat}{Weighted Partial MaxSAT}
\newcommand{\pmaxsat}{PMaxSAT}
\newcommand{\wmaxsat}{WMaxSAT}
\newcommand{\wpmaxsat}{WPMaxSAT}
\newcommand{\maxtwosat}{Max2SAT}
\newcommand{\mintwosat}{Min2SAT}
\newcommand{\pminsat}{PMinSAT}
\newcommand{\wminsat}{WMinSAT}
\newcommand{\wpminsat}{WPMinSAT}
\newcommand{\certainty}[1]{\mbox{\sc {Certainty}}(#1)}
\newtheorem{reduction}{Reduction}[section]
\newtheorem{construction}{Construction}[section]
\newcommand\vldbdoi{XX.XX/XXX.XX}
\newcommand\vldbpages{XXX-XXX}
\newcommand\vldbvolume{14}
\newcommand\vldbissue{1}
\newcommand\vldbyear{2020}
\newcommand\vldbauthors{\authors}
\newcommand\vldbtitle{\shorttitle} 
\begin{document}
\title{Consistent Answers of Aggregation Queries using SAT Solvers}
\author{Akhil A. Dixit}
\orcid{0000-0003-2138-1319}
\affiliation{\institution{University of California Santa Cruz}}
\email{akadixit@ucsc.edu}

\author{Phokion G. Kolaitis}
\affiliation{\institution{University of California Santa Cruz and IBM Research}} 
\email{kolaitis@ucsc.edu}

\begin{abstract}
The framework of database repairs and consistent answers to queries is a principled approach to managing inconsistent databases. We describe the first system able to compute the consistent answers of general aggregation queries with the \texttt{COUNT($A$)}, \texttt{COUNT(*)}, \texttt{SUM($A$)}, \texttt{MIN($A$)}, and \texttt{MAX($A$)} operators, and with or without grouping constructs. Our system uses reductions to optimization versions of Boolean satisfiability (SAT) and then leverages powerful SAT solvers. We carry out an extensive set of experiments on both synthetic and real-world data that demonstrate the usefulness and scalability of this approach.
\end{abstract}

\maketitle

\ignore{
\begingroup\small\noindent\raggedright\textbf{PVLDB Reference Format:}\\
\vldbauthors. \vldbtitle. PVLDB, \vldbvolume(\vldbissue): \vldbpages, \vldbyear.\\
\href{https://doi.org/\vldbdoi}{doi:\vldbdoi}
\endgroup
\begingroup
\renewcommand\thefootnote{}\footnote{\noindent
This work is licensed under the Creative Commons BY-NC-ND 4.0 International License. Visit \url{https://creativecommons.org/licenses/by-nc-nd/4.0/} to view a copy of this license. For any use beyond those covered by this license, obtain permission by emailing \href{mailto:info@vldb.org}{info@vldb.org}. Copyright is held by the owner/author(s). Publication rights licensed to the VLDB Endowment. \\
\raggedright Proceedings of the VLDB Endowment, Vol. \vldbvolume, No. \vldbissue\ %
ISSN 2150-8097. \\
\href{https://doi.org/\vldbdoi}{doi:\vldbdoi} \\
}\addtocounter{footnote}{-1}\endgroup
}
\section{Introduction}

The framework of database repairs and consistent query answering, introduced by Arenas, Bertossi, and Chomicki \cite{Arenas99}, is a principled approach to managing inconsistent databases, i.e., databases that violate one or more integrity constraints on their schema.
 In this framework, inconsistencies are handled at query time by considering all possible \textit{repairs} of the inconsistent database, where a repair of an inconsistent database $\mathcal{I}$ is a consistent database $\mathcal{J}$ that differs from $\mathcal{I}$ in a ``minimal'' way. The \emph{consistent answers} to a query $q$ on a given database $\mathcal{I}$ is the intersection of the results of $q$ applied on each repair of $\mathcal{I}$.
 Thus, a consistent answer provides the guarantee that it will be found no matter on what repair the query has been evaluated.
 Computing the consistent answers  can be an intractable problem, because an inconsistent database may have exponentially many repairs.
 \ignore{The main algorithmic problem in this framework is to compute the \emph{consistent answers} to a query $q$ on a given database $\mathcal{I}$, that is, the tuples that lie in the intersection of the results of $q$ applied on each repair of $\mathcal{I}$ (see the monograph \cite{Bertossi11}). Computing the consistent answers  can be an intractable problem, because an inconsistent database may have exponentially many repairs.}
 In particular, computing the consistent answers of a fixed Select-Project-Join (SPJ) query can be a coNP-complete problem. By now, there is an extensive body of work on the complexity of consistent answers for SPJ queries (see Section \ref{sec:cqa-complexity}).

 \smallskip
 
\emph{Range Semantics: Concept and Motivation.}
 Aggregation queries are the most frequently asked queries; they are of the form

\smallskip
\centerline{$Q :=\; \texttt{SELECT } Z, f(A)  \texttt{ FROM } T(U, Z, A) \texttt{ GROUP BY }Z,$}
\smallskip
\noindent where $f(A)$ is one 
 the standard aggregation operators
 \texttt{COUNT($A$)}, \texttt{COUNT(*)},
 \texttt{SUM($A$)},   \texttt{AVG($A$)}.
 \texttt{MIN($A$)},  \texttt{MAX($A$)},
and $T(U,Z,A)$ is  the relation returned by a SPJ query $q$ expressed in SQL. 
A \emph{scalar aggregation} query is an aggregation query without a \texttt{GROUP BY} clause.

What is the semantics of an aggregation query over an inconsistent database?
Since an aggregation query may return  different answers on different repairs of an inconsistent database, there is typically  \emph{no} consistent answer as per the earlier definition of consistent answers. To obtain meaningful semantics to aggregation queries, Arenas et al. \cite{ArenasB03} introduced  the \textit{range consistent answers}.

\begin{table*}[t]
\caption{Running example -- an inconsistent database instance $\mathcal{I}$ (primary key attributes are underlined)} \label{toydb}
  \begin{tabular}{lll}
\begin{tabular}{c|c c c c}
\texttt{CUSTOMER}&\underline{CID}&CNAME&CITY\\\cline{2-4}
&C1&John&LA&$f_1$\\
&C2&Mary&LA&$f_2$\\
&C2&Mary&SF&$f_3$\\
&C3&Don&SF&$f_4$\\
&C4&Jen&LA&$f_5$\\
\end{tabular}
&
\begin{tabular}{c|c c c c c}
\texttt{ACCOUNTS}&\underline{ACCID}&TYPE&CITY&BAL\\\cline{2-5}
&A1&Checking&LA&900&$f_6$\\
&A2&Checking&LA&1000&$f_7$\\
&A3&Saving&SJ&1200&$f_8$\\
&A3&Saving&SF&-100&$f_9$\\
&A4&Saving&SJ&300&$f_{10}$
\end{tabular}
&
\begin{tabular}{c|c c c}
\texttt{CUSTACC}&CID&ACCID\\\cline{2-3}
&C1&A1&$f_{11}$\\
&C2&A2&$f_{12}$\\
&C2&A3&$f_{13}$\\
&C3&A4&$f_{14}$\\
\end{tabular}
\end{tabular}
 \label{table:run}
\end{table*}

Let $Q$ be a scalar aggregation query and let $\Sigma$ be a set of integrity constraints. The set of \textit{possible answers} to $Q$ on an inconsistent   instance $\mathcal{I}$ w.r.t.\  $\Sigma$   is the set of the answers to $Q$ over all  repairs of $\mathcal{I}$ w.r.t.\ $\Sigma$, i.e.,  $\text{Poss}(Q, \Sigma) = \{Q(\mathcal{J})\; |\; \mathcal{J} \text{ is a repair of }\mathcal{I} \text{ w.r.t.\ }\Sigma\}$. By definition, the \emph{range consistent answers}  to $Q$ on $\mathcal{I}$ is the interval $[\glb(Q,\mathcal{I}),
\lub(Q,\mathcal{I})]$, where the endpoints of this interval are, respectively, the greatest lower bound (glb) and the least upper bound (lub) of the set $\text{Poss}(Q, \Sigma)$
of possible answers to $Q$ on $\mathcal I$. For example, the range consistent answers to the query

\smallskip{\small 
\noindent\texttt{SELECT SUM(ACCOUNTS.BAL) FROM ACCOUNTS, CUSTACC\\WHERE ACCOUNTS.ACCID = CUSTACC.ACCID AND CUSTACC.CID = `C2'}}
\smallskip

\ignore{
\smallskip\\
\texttt{\tab[1cm]SELECT MAX(ACCOUNTS.BAL) FROM ACCOUNTS, CUSTACC\\\tab[1cm]WHERE ACCOUNTS.ACCID = CUSTACC.ACCID\\\tab[1.5cm]AND CUSTACC.CID = `C2'}
\smallskip\\
}
\noindent on the instance in Table \ref{table:run} is the interval $[900, 2200]$. The meaning is that no matter how the database $\mathcal I$ is repaired, the answer to the query is guaranteed to be in the range between 900 and  2200.
 
Arenas et al. \cite{Arenas03} focused on scalar aggregation queries only. Fuxman, Fazli, and Miller \cite{Fuxman05}  extended the notion of range consistent answers  to aggregation queries with grouping (see Section \ref{sec-aggregation}).

\ignore{
 For a query

\smallskip
\centerline{$Q :=\; \texttt{SELECT } Z, f(A)  \texttt{ FROM } T(U, Z, A) \texttt{ GROUP BY }Z,$}
\smallskip
\noindent a tuple $(T, [glb, lub])$ is a \emph{range  consistent answer} to $Q$ on $\mathcal{I}$ if:\\

\begin{itemize}
\item For every repair $\mathcal{J}$ of $\mathcal{I}$, there exists $d$ s.t.\ $(T, d) \in Q(\mathcal{J})$ and $glb \leq d \leq lub$.
\item For some repair $\mathcal{J}$ of $\mathcal{I}$, we have  $(T, glb) \in Q(\mathcal{J})$
\item For some repair $\mathcal{J}$ of $\mathcal{I}$, we have $(T, lub) \in Q(\mathcal{J})$.
\end{itemize}

(i) For every repair $\mathcal{J}$ of $\mathcal{I}$, there exists $d$ such that $(T, d) \in Q(\mathcal{J})$ and $glb \leq d \leq lub$; (ii)  For some repair $\mathcal{J}$ of $\mathcal{I}$, we have  $(T, glb) \in Q(\mathcal{J})$; (iii) 
 For some repair $\mathcal{J}$ of $\mathcal{I}$, we have  $(T, lub) \in Q(\mathcal{J})$.
 }

Range semantics have become the standard semantics of aggregation queries in the framework of database repairs (see \cite[Section 5.6]{Bertossi11}). Furthermore, range semantics have been adapted to give semantics to aggregation queries in several other contexts, including data exchange \cite{DBLP:conf/pods/AfratiK08} and ontologies \cite{DBLP:journals/ws/KostylevR15}. Finally, range semantics have  been suggested as an alternative way to overcome some of the issues arising from SQL's handling of null values \cite{DBLP:conf/pods/GuagliardoL16}.


\smallskip

\emph{Earlier Systems for Consistent Query Answering.}
Several academic prototype systems for consistent query answering have been developed \cite{Arenas03,Barcelo03,ChomickiH04,Fuxman05,FuxmanM05,Greco03,Kolaitis13,MannaRT11,MarileoB10,DixitK19}. These systems use different approaches, including   logic programming \cite{Barcelo03,Greco03}, compact representations of repairs \cite{Chomicki04}, or reductions to solvers \cite{MannaRT11,Kolaitis13,DixitK19}. In particular, in \cite{DixitK19}, we reported on CAvSAT, a system that at that time was able to compute the consistent answers of unions of SPJ queries w.r.t.\ denial constraints (which include functional dependencies as a special case) via reductions to SAT solvers.
Among all these systems, however, only  the ConQuer system by Fuxman et al.\ \cite{Fuxman05,FuxmanM05} is capable of handling aggregation queries. Actually,  ConQuer can only handle a restricted class of aggregation query, namely, those aggregation queries w.r.t.\ key constraints for which  the underlying SPJ query belongs to the class called $C_\textit{forest}$. For such a query $Q$, the range consistent answers of $Q$ are SQL-rewritable, which means that there is a SQL query $Q'$ such that the range semantics answers of $Q$ on an instance  $\mathcal I$ can be obtained by directly evaluating $Q'$ on $\mathcal I$. This leaves out, however, many aggregation queries, including all aggregation queries whose range consistent answers   are not SQL-rewritable or are NP-hard to compute. Up to now, no system supports such queries.




\smallskip

\emph{Summary of Contributions.}
In this paper, we report on and evaluate the performance of 
AggCAvSAT (Aggregate Consistent Answers via Satisfiability Testing), which is 
an enhanced version of CAvSAT and is also  
 the first system that is capable to compute  the range consistent answers to all aggregation queries involving the operators \texttt{SUM($A$)}, \texttt{COUNT($A$)}, or \texttt{COUNT(*)}  with or without grouping.

We first corroborate the need for  a system that goes well beyond ConQuer by showing that there is an aggregation query $Q$ involving \texttt{SUM($A$)} such that the consistent answers of the underlying SPJ query $q$ w.r.t.\ key constraints are SQL-rewritable, but the range consistent answers of $Q$ are NP-hard (Theorem \ref{theorem1} in Section \ref{sec-aggregation}).

The distinctive feature of AggCAvSAT is that it uses polynomial-time reductions to reduce  the range consistent answers of aggregation queries to  optimization variants of Boolean Satisfiability (SAT),  such as Partial MaxSAT and Weighted Partial MaxSAT. These reductions, described in Sections \ref{main-body} and \ref{beyond-key-constraints}, are natural but are much more sophisticated than the reductions used in \cite{DixitK19} to reduce the consistent answers of SPJ queries to SAT. After the reductions have been carried out, AggCAvSAT
 deploys powerful SAT solvers, such as the MaxHS solver \cite{Davies2011}, to compute the range consistent answers of aggregation queries. Furthermore, AggCAvSAT can handle databases that are inconsistent not only w.r.t.\ key constraints, but also w.r.t.\ arbitrary \emph{denial} constraints,  a much broader class of constraints.
 
 An extensive experimental evaluation of AggCAvSAT is reported in Section \ref{sec:experiments}. We carried out a suite of experiments on both synthetic and real-word databases, and for a variety of aggregation queries with and without grouping. The synthetic databases were generated using two different methods: (a) the \texttt{DBGen} tool of TPC-H was used to generate consistent data and then inconsistencies were injected artificially; (b)   the PDBench inconsistent database generator from the probabilistic database management system MayBMS \cite{Antova08} was used. The experiments demonstrated the scalability of AggCAvSAT along both the size of the data and the degree of inconsistency in the data. 
 Note that AggCAvSAT was also competitive in comparison to ConQuer (especially when the degree of inconsistency was not excessive), even though the latter is tailored to only handle a restricted class of aggregation queries whose range consistent answers are SQL-rewritable.
 

\smallskip

\emph{Consistent Answers vs.\ Data Cleaning.}
There is a large body of work on managing inconsistent databases via data cleaning. There are fundamental differences between the framework of the consistent answers and the framework of data cleaning (see \cite[Section 6]{Bertossi11}). In particular, the  consistent answers provide the guarantee that each such answer will be found no matter on which repair  the query at hand is evaluated, while data cleaning provides no similar  guarantee. Data cleaning has the attraction that it produces a single consistent instance but the process need not be deterministic and the  instance  produced need not even be a repair (i.e., it need not be a maximally consistent instance).
Recent data cleaning systems, such as 
 HoloClean 
 \cite{Rekatsinas17} and Daisy
 \cite{DBLP:conf/icde/GiannakopoulouK20,DBLP:conf/sigmod/GiannakopoulouK20},
 produce 
  a probabilistic database instance as the output  (which again need not be a repair).

It is not clear how  to compare query answers over the database returned by a data cleaning system and the (range) consistent answers computed by a  consistent query answering system. 
In fact, no such comparison is given in the HoloClean 
 \cite{Rekatsinas17} and Daisy
 \cite{DBLP:conf/icde/GiannakopoulouK20,DBLP:conf/sigmod/GiannakopoulouK20} papers.
At the performance level, the data cleaning approaches  remove inconsistencies in the data offline, hence the time-consuming tasks are done prior to answering the queries; in contrast,  systems for consistent query answering work online. 
 It is an interesting project, left for future research, to develop a methodology and carry out a fair comparison on a level playing field between 
 systems for data
 cleaning and systems
 for consistent query answering.
 
\ignore{
Finally, it should be noted that there is a large body of work on managing inconsistent databases via data cleaning. Two recent systems in this area are HoloClean 
 \cite{Rekatsinas17} and Daisy
 \cite{DBLP:conf/icde/GiannakopoulouK20,DBLP:conf/sigmod/GiannakopoulouK20}.
 There are fundamental differences between data cleaning systems and systems for consistent query answering.
 Specifically, data cleaning systems, given an inconsistent database as input, produce either a single database instance or a probabilistic database instance as  output. This instance need not be a repair of the original inconsistent instance.  As a result, it is hard to compare query answers over the database cleaned by HoloClean or Daisy on the one hand and the consistent answers computed by AggCAvSAT on the other hand. Furthermore, at the performance level, the data cleaning approaches tend to remove inconsistencies in the data offline, hence the time-consuming tasks are done prior to answering the queries; in contrast,  systems for consistent query answering work online. 
 It is an interesting project, left for future research, to develop a methodology and carry out a fair comparison between 
 systems for data
 cleaning  and systems
 for consistent query answering. 
 }

\section{Preliminaries}
\subsubsection*{Integrity Constraints and Database Queries}

A relational database schema $\mathcal{R}$ is a finite collection of relation symbols, each with a fixed positive integer as its arity. The attributes of a relation symbol are names for its columns; they can be identified with their positions, thus  $Attr(R) = \{1, ..., n\}$ denotes the set of  attributes of $R$. An  $\mathcal{R}$-\emph{instance}  is a collection $\mathcal{I}$ of finite relations $R^\mathcal{I}$, one for each relation symbol $R$ in $\mathcal R$. An expression of the form $R^\mathcal{I} (a_1, ..., a_n)$ is a \textit{fact} of the instance $\mathcal{I}$ if $(a_1, ..., a_n) \in R^\mathcal{I}$.
\ignore{
Relational database schemata are typically accompanied by a set of \textit{integrity constraints}, i.e.,   rules that impose semantic restrictions on the allowable instances.}
\ignore{A \textit{functional dependency} $X \rightarrow Y$ on a relation $R$ 
asserts that if two tuples in $R$ agree on the attributes in $X$, then they must also agree on the attributes in $Y$.} A 
\textit{key} is a minimal subset $X$ of $Attr(R)$ such that the functional dependency $X \rightarrow Attr(R)$ holds.

Starting with Codd's seminal work \cite{Codd70,Codd72}, first-order logic has been successfully used as a database query language; in fact, it forms the core of SQL. A \textit{conjunctive query} is expressible by a first-order formula of the form $q({\bf z}):= \exists {\bf w}\; (R_1({\bf x}_1) \land ... \land R_m({\bf x}_m))$, where each ${\bf x}_i$ is a tuple of variables and constants, $\bf z$ and $\bf w$ are tuples of variables with no variable in common, and the variables in ${\bf x}_1, \cdots, {\bf x}_m$ appear in exactly one of the tuples $\bf z$ and $\bf w$. A conjunctive query with no free variables (i.e., all variables are existentially quantified) is a boolean query, while a conjunctive query with $k$ free variables in $\bf z$ is a $k$-ary query. Conjunctive queries are also known as \emph{select-project-join} (SPJ) queries with equi-joins, and are among the most frequently asked queries. For example, on the instance  $\mathcal{I}$ from Table \ref{toydb}, the binary conjunctive query $q(z,x):= \exists w\; (\texttt{CUST}(w,x,y) \land \texttt{CUSTACC}(w,z))$ returns the set of all pairs $(z,x)$ such that $z$ is an account ID of an account owned by customer named $x$.

Equivalently, this query can be expressed in SQL as
\smallskip\\
\texttt{\tab[1cm]SELECT CUSTACC.ACCID, CUST.CNAME\\\tab[1cm]FROM CUST, CUSTACC\\\tab[1cm]WHERE CUST.CID = CUSTACC.CID}
\smallskip\\
A  \textit{union of conjunctive queries} is expressible by a disjunction $q({\bf z}) := q_1 \vee \cdots \vee q_n$ of conjunctive queries,  where all conjunctive queries $q_i$ have  the same arity. Unions of conjunctive queries are strictly more expressive than conjunctive queries.

\subsubsection*{Database Repairs and Consistent Answers} Let $\Sigma$ be a set  of integrity constraints  on a database schema $\mathcal{R}$.
An $\mathcal R$-instance $\mathcal{I}$ is \emph{consistent} if $\mathcal{I} \models \Sigma$, i.e., $\mathcal{I}$ satisfies every constraint in $\Sigma$; otherwise, $\mathcal{I}$ is  \emph{inconsistent}. 
For example, let $\mathcal I$ be the instance depicted in Table \ref{table:run}. There are two key constraints, namely, \texttt{CUST}(CID) and \texttt{ACC}(ACCID). Clearly,  $\mathcal{I}$ is inconsistent since the facts $f_2, f_3$ of \texttt{CUST} and facts $f_8, f_9$ of \texttt{ACC} violate  these key constraints.

A \emph{repair} of an inconsistent  instance $\mathcal{I}$ w.r.t.\ $\Sigma$  is a  consistent instance $\mathcal{J}$ that differs from $\mathcal{I}$ in a ``minimal" way.  Different notions of minimality give rise to different types of repairs (see \cite{Bertossi11} for a comprehensive survey). Here, we  focus on \emph{subset repairs}, the most extensively studied type of repairs. An instance $\mathcal{J}$ is  a \emph{subset repair} of an  instance $\mathcal{I}$ if
$\mathcal J$ is a maximal consistent subinstance of $\mathcal I$, that is, 
$\mathcal{J} \subseteq \mathcal{I}$ (where $\mathcal{I}$ and $\mathcal{J}$ are viewed as sets of facts), $\mathcal{J}\models \Sigma$, and there exists no instance $\mathcal{J'}$ such that $\mathcal{J'}\models \Sigma$ and $\mathcal{J} \subset \mathcal{J'}\subset \mathcal{I}$. Arenas et al.\ \cite{Arenas99} used repairs to give rigorous semantics to query answering on inconsistent databases. Specifically, assume that $q$ is a query,  $\mathcal{I}$ is an $\mathcal R$-instance, and $\bf t$ is a tuple of values. We  say that $\bf t$ is a \emph{consistent answer} to $q$ on $\mathcal{I}$ w.r.t.\ $\Sigma$ if ${\bf t} \in q(\mathcal{J})$, for every repair $\mathcal{J}$ of $\mathcal{I}$. We write $\cons{q, \mathcal{I}, \Sigma}$ to denote the set of all \emph{consistent answers} to $q$ on  $\mathcal{I}$ w.r.t.\ $\Sigma$, i.e.,

\centerline{

$\cons{q, \mathcal{I}, \Sigma} = \bigcap\; \{q(\mathcal{J}): \mbox{$\mathcal{J}$ is a repair of $\mathcal{I}$ w.r.t.\ $\Sigma$}\}.$}

If $\Sigma$ is a fixed set of integrity constraints and $q$ is a fixed  query, then the main computational problem associated with the consistent answers is: given an instance $\mathcal{I}$,  compute \cons{$q$, $\mathcal{I}$, $\Sigma$}; we write $\cons{q,\Sigma}$ to denote this problem.
If $q$ is a boolean  query, then computing the consistent answers becomes the decision problem $\certainty{q,\Sigma}$: given an instance $\mathcal{I}$,
is $q$ true on every repair $\mathcal{J}$ of $\mathcal{I}$ w.r.t.\ $\Sigma$? When the constraints in $\Sigma$ are understood from the context, we will write 
$\cons{q}$ and
$\certainty{q}$ in place of $\cons{q,\Sigma}$
and 
$\certainty{q,\Sigma}$, respectively.


\subsubsection*{Complexity of Consistent Answers} \label{sec:cqa-complexity}
There has been an extensive study of the consistent answers of conjunctive queries \cite{Bertossi11,FuxmanM05,Fuxman07,Wijsen09,WijsenJ10,WijsenR10W,Kolaitis12,Koutris16,Koutris17}. If $\Sigma$ is a fixed set of 
key constraints and $q$ 
is a boolean conjunctive query, then  $\certainty{q,\Sigma}$ is always in coNP, but, depending on the query and the constraints,  $\certainty{q,\Sigma}$ exhibits a variety of behaviors within coNP.
The most definitive result to date is a \emph{trichotomy} theorem by Koutris and Wijsen \cite{Koutris16,Koutris17}; it asserts that if $q$ is a self-join-free (no repeated relation symbols) boolean conjunctive query with one key constraint per relation, then $\certainty{q}$ is either SQL-rewritable, or in P but not SQL-rewritable, or coNP-complete. 
\ignore{
Moreover, there is a quadratic algorithm to decide, given such a query, which of the three cases of the trichotomy holds.} It is an open problem whether or not this trichotomy extends to arbitrary boolean conjunctive queries and to broader classes of constraints (e.g., denial constraints).

We illustrate the trichotomy theorem with three examples.
\ignore{for which the complexity of CQA  was pinpointed before the trichotomy theorem was established, thus providing a hint for the general result.} In what follows,  the underlined attributes constitute the keys to the relations.  First, if $q_1$ is the  query $\exists x, y,z(R(\underline{x},z)\land S(\underline{z},y))$, then $\certainty{q_1}$ is SQL-rewrtiable \cite{Fuxman07}. Second, if $q_2$ is
the query $\exists x,y (R(\underline{x},y)\land S(\underline{y},x))$, then $\certainty{q_2}$ is in P, but is not SQL-rewritable \cite{WijsenR10W}. Third, if $q_3$ is the query $\exists x,y,z (R(\underline{x},y)\land S(\underline{z},y))$, then
$\certainty{q_3}$ is coNP-complete \cite{Fuxman07}.

\subsubsection*{Boolean Satisfiability and SAT Solvers}
Boolean Satisfiability (\sat{}) is arguably the prototypical and  the most widely studied NP-complete problem. \sat{} is the following decision problem:  \textit{given a boolean formula $\varphi$, is $\varphi$ satisfiable?} 
There has been an extensive body of research on different aspects of boolean satisfiability 
(see the handbook  \cite{Biere09}). In particular, significant progress has been made on developing 
\sat{}-solvers, so much so that the advances in this area of research are often referred to as the  ``SAT Revolution'' \cite{Vardi09}).
Typically, a \sat{}-solver takes a boolean formula $\varphi$ in \textit{conjunctive normal form} (CNF) as an input and outputs a satisfying assignment for $\varphi$ (if one exists) or tells that the formula $\varphi$ is unsatisfiable. Recall that a formula $\varphi$ is in CNF if it is a conjunction of clauses, where each clause is a disjunction of literals. For example, the formula $ (x_1 \lor x_2 \lor \neg x_3) \land (\neg x_2 \lor x_3) \land (\neg x_1 \lor x_4)$ has a satisfying assignment $x_1 = 1$, $x_2 = 0$, $x_3 = 0$, and $x_4 = 1$.

At present, \sat{}-solvers are capable of solving quickly \sat{}-instances with millions of clauses and variables. \sat{}-solvers have been widely used in both academia and industry as general-purpose tools. Indeed, many real-world problems from a variety of domains, including scheduling, protocol design, software verification, and model checking, can be naturally encoded as \sat{}-instances, and solved quickly using solvers,  such as Glucose \cite{AudemardS09} and CaDiCaL \cite{cadical}. Furthermore, \sat{}-solvers have  been used in solving open problems in mathematics \cite{DBLP:conf/sat/HeuleKM16,DBLP:conf/lpar/OostemaMH20}.
In \cite{DixitK19}, we used \sat{}-solvers  to build a prototypical system for consistent query answering, which we called CAvSAT. This system can compute consistent answers to unions of conjunctive queries over relational databases that are inconsistent w.r.t.\ a fixed set of arbitrary denial constraints.
\section{
Range Consistent Answers}\label{sec-aggregation}
Frequently asked database queries often involve one of the standard aggregation operators   \texttt{COUNT($A$)}, \texttt{COUNT(*)}, \texttt{SUM($A$)}, \texttt{AVG($A$)}, \texttt{MIN($A$)},  \texttt{MAX($A$)}, and, possibly,  
a \texttt{GROUP BY} clause. In what follows, we will use the term
\emph{aggregation queries} to refer to queries with aggregate operators and with or without a \texttt{GROUP BY} clause. Thus, in full generality, an aggregation query can be expressed as
\smallskip\\
\centerline{$Q :=\; \texttt{SELECT } Z, f(A)  \texttt{ FROM } T(U, Z, A) \texttt{ GROUP BY }Z,$}\smallskip\\
where $f(A)$ is one of the aforementioned aggregate operators and $T(U,Z,A)$ is the relation returned by a query $q$, which typically is a conjunctive query or a union of conjunctive queries expressed in SQL. This  query $q$  
 is called the \textit{underlying  query} 
 of $Q$, the attribute represented by the variable $w$ is called the
   \textit{aggregation attribute}, and the attributes represented by $Z$ are called the \textit{grouping attributes}.
A \emph{scalar aggregation} query is one without a \texttt{GROUP BY} clause.

\ignore{
$T(U, Z, w)$ is a relation expressible using a union of conjunctive queries, $f$ is one of \texttt{MIN($w$)}, \texttt{MAX($w$)}, \texttt{SUM($w$)}, \texttt{COUNT($w$)}, \texttt{COUNT(*)} or \texttt{AVG($w$)}.
on top of a conjunctive query or a union of conjunctive queries. We refer to these as aggregation queries, and they can be expressed as\smallskip\\
\centerline{$Q :=\; \texttt{SELECT }Z, f \texttt{ FROM } T(U, Z, w) \texttt{ GROUP BY }Z,$}\smallskip\\
where $T(U, Z, w)$ is a relation expressible using a union of conjunctive queries, $f$ is one of \texttt{MIN($w$)}, \texttt{MAX($w$)}, \texttt{SUM($w$)}, \texttt{COUNT($w$)}, \texttt{COUNT(*)} or \texttt{AVG($w$)}. A conjunctive query (or a union of conjunctive queries) that expresses relation $T(U, Z, w)$ is called an \textit{underlying conjunctive query} (or an \textit{underlying union of conjunctive queries}) of $Q$. The attribute represented by the variable $w$ is referred as the \textit{aggregation attribute}, and the ones represented by $Z$ are called the \textit{grouping attributes}. The aggregation attribute may be preceded by \texttt{DISTINCT} keyword in SQL, which is used to ignore duplicate values of the aggregation attribute.
}

It is often the case that an aggregation query returns  different answers on different repairs of an inconsistent database; thus, even for a scalar aggregation query, there is typically  \emph{no} consistent answer as per the  definition of consistent answers given earlier. In fact, 
to produce an empty set of consistent answers,
it suffices to have just two repairs on which a scalar aggregation query returns difference answers.  Aiming  to obtain more meaningful answers to aggregation queries, Arenas et al. \cite{ArenasB03}  proposed the \textit{range consistent answers}, as an  alternative notion of consistent answers.

Let $Q$ be a scalar aggregation query. The set of \textit{possible answers} to $Q$ on an inconsistent instance $\mathcal{I}$ consists of the answers to $Q$ over all  repairs of $\mathcal{I}$, i.e.,  $\text{Poss}(Q, \Sigma) = \{Q(\mathcal{J})\; |\; \mathcal{J} \text{ is a repair of }\mathcal{I} \text{ w.r.t.\ }\Sigma\}$. By definition, the \emph{range consistent answers}  to $Q$ on $\mathcal{I}$ is the interval $[\glb(Q,\mathcal{I}),
\lub(Q,\mathcal{I})]$, where the endpoints of this interval are, respectively, the greatest lower bound (glb) and the least upper bound (lub) of the set $\text{Poss}(Q, \Sigma)$ of possible answers to $Q$ on $\mathcal I$. 

For example,
the range consistent answers of the query
\smallskip\\{\small 
\noindent\texttt{SELECT SUM(ACCOUNTS.BAL) FROM ACCOUNTS, CUSTACC\\WHERE ACCOUNTS.ACCID = CUSTACC.ACCID AND CUSTACC.CID = `C2'}}
\smallskip

\ignore{ 
\smallskip\\
\texttt{\tab[1cm]SELECT MAX(ACC.BAL)\\\tab[1cm]FROM ACC, CUSTACC\\\tab[1cm]WHERE ACC.ACCID = CUSTACC.ACCID\\\tab[2cm]AND CUSTACC.CID = `C2'}
\smallskip\\
}

\noindent on the instance in Table \ref{table:run} is the interval $[900, 2200]$. The  guarantee is that   no matter how  the database $\mathcal I$ is repaired, the answer to the query is guaranteed to be in the range between 900 and  2200. Note that, the glb-answer comes from a repair of $\mathcal{I}$ that contains the fact $f_9$, while the lub-answer is from a repair that contains the fact $f_8$. 
 
 Arenas et al. \cite{Arenas03} focused on scalar aggregation queries only.
 Fuxman, Fazli, and Miller \cite{Fuxman05}  extended the notion of range consistent answers  to aggregation queries with grouping, i.e., to queries  
 $$Q :=\; \texttt{SELECT }Z, f(A) \texttt{ FROM } T(U, Z, A) \texttt{ GROUP BY }Z.$$ 
 For such queries, 
 a tuple $(T, [glb, lub])$ is a \emph{range  consistent answer} to $Q$ on $\mathcal{I}$, if the following conditions hold:

$\bullet$~ For every repair $\mathcal{J}$ of $\mathcal{I}$, there exists $d$ s.t.\ $(T, d) \in Q(J)$ and $glb \leq d \leq lub$.

$\bullet$~ For some repair $\mathcal{J}$ of $\mathcal{I}$, we have that $(T, glb) \in Q(J)$

$\bullet$~ For some repair $\mathcal{J}$ of $\mathcal{I}$, we have that $(T, lub) \in Q(J)$.

If $Q$ is an aggregation query,  $\cons{Q}$ denotes  the problem: given an instance $\mathcal{I}$, compute the range semantics of $Q$ on $\mathcal{I}$.
\subsubsection*{Complexity of Range Consistent Answers}
Arenas et al. \cite{ArenasB03} investigated the computational complexity of the range consistent answers for scalar aggregation queries of the form
\smallskip\\
\centerline{$\texttt{SELECT }  f(A)  \texttt{ FROM } R(U, A)$},\smallskip\\
where $f(A)$ is one of the standard aggregation operators and $R(U,A)$ is a  relational schema with  functional dependencies. The main findings in Arenas et al.~\cite{ArenasB03} can be summarized as follows.

\smallskip

$\bullet$
     If the relational schema $R(U,A)$ has at most one functional dependency and $f(A)$ is one of the aggregation operators \texttt{MIN($A$)}, \texttt{MAX($A$)}, \texttt{SUM($A$)}, \texttt{COUNT(*)},  \texttt{AVG($A$)},
    then the range consistent answers of the query $ \texttt{SELECT }  f($A$)  \texttt{ FROM } R(U, A)$ is in P.
    
\smallskip

$\bullet$
 There is a relational schema $R(U,A)$ with one key dependency such that computing the range consistent answers of the query $ \texttt{SELECT }  \texttt{COUNT}(A)  \texttt{ FROM } R(U, A)$ is an NP-complete problem.
 
 \smallskip
 
$\bullet$ There is a relational schema $R(U,A)$ with two functional dependencies, such that computing the range consistent answers of the query  $ \texttt{SELECT }  f(A)  \texttt{ FROM } R(U, A)$ is a NP-complete problem, where $f(A)$
is one of the standard aggregation operators.

\ignore{
have discovered a dichotomy in the complexity of consistent answers to scalar aggregation queries, i.e., for queries without grouping and with exactly one relation symbol, where the integrity constraints in consideration were functional dependencies. This dichotomy asserts the following. Computing consistent answers to a scalar aggregation query over a database schema with at most one functional dependency constraint is in P for aggregation functions \texttt{MIN(A)}, \texttt{MAX(A)}, \texttt{SUM(A)}, \texttt{COUNT(*)}, and \texttt{AVG(A)}. Moreover, there exists a scalar aggregation query with each of the aforementioned functions such that it is an NP-complete task to compute its consistent answers over a database schema with two functional dependency constraints. Interestingly, this problem is NP-complete for aggregation function \texttt{COUNT(A)} even for one functional dependency.}
It remains an open problem to pinpoint the complexity  of the range consistent answers for richer aggregation queries of the form 

\centerline{$Q:= \; \texttt{SELECT } Z,  f(A) \texttt{ FROM } T(U, Z, A)~ \texttt{GROUP BY}~ Z,$}
\noindent where $T(U,Z,A)$ is the relation returned by a conjunctive query $q$ or by a union $q:= q_1\cup \cdots \cup q_k$  of conjunctive queries.
It can be shown, however, that if computing the consistent answers $\cons{q}$ of the underlying query $q$ is a hard problem, then 
computing the range consistent answers $\cons{Q}$ of the aggregation query $Q$ is a hard problem as well.
This gives rise to the following question:
what can we say about the complexity of the range consistent answers $\cons{Q}$ if computing the consistent answers $\cons{q}$
of the underlying query is an easy problem?

\ignore{
Observe that computing the range semantic consistent answers to an aggregation query $Q$ with the \texttt{COUNT(*)} aggregation function is at least as hard as computing the consistent answers to the underlying conjunctive query (or a union of conjunctive queries) of $Q$, as shown in Proposition \ref{agg-hardness1}.

\begin{proposition}\label{agg-hardness1}
Let $Q := \texttt{SELECT COUNT(*) FROM } T$, where $T$ is a unary relation expressible by a boolean conjunctive query $q$. 
\begin{itemize}
    \item If $\certainty{q, \mathcal{I}} \in P$ then $\cons{Q, \mathcal{I}} \in P$.
    \item If $\certainty{q, \mathcal{I}}$ is coNP-complete, then $\cons{Q, \mathcal{I}}$ is coNP-hard.
\end{itemize}
\end{proposition}
\begin{proof}

It is easy to see that $\cons{q, \mathcal{I}, \Sigma}$ is true if and only if the \textit{lub}-answer in $\cons{Q, \mathcal{I}, \Sigma}$ is greater than 0.
\end{proof}
}

Fuxman and Miller \cite{Fuxman07} identified a class, called $C_\textit{forest}$, of self-join free conjunctive queries whose consistent answers are SQL-rewritable. In his PhD thesis, 
Fuxman \cite{FuxmanPhDthesis} 
introduced the class $C_\textit{aggforest}$ consisting of all aggregation queries such that
the aggregation operator is one of \texttt{MIN($A$)}, \texttt{MAX($A$)}, \texttt{SUM($A$)},  \texttt{COUNT(*)},
the underlying query $q$ is a conjunctive query in $C_\textit{forest}$, and   there is one key constraint for each relation in the underlying query $q$.
Fuxman \cite{FuxmanPhDthesis} showed that 
 the range consistent answers of every query
 in $C_\textit{aggforest}$
 are SQL-rewritable (earlier, similar results
for a proper subclass of $C_\textit{aggforest}$
were obtained by Fuxman, Fazli, and Miller).  

It is known that there are self-join free conjunctive queries outside the class $C_\textit{forest}$ whose
consistent answers are  SQL-rewritable. In fact, Koutris and Wijsen \cite{Koutris17} have characterized the self-join free conjunctive queries whose consistent answers are SQL rewritable.   
However, the SQL rewritability of aggregation queries beyond those in $C_\textit{aggforest}$ has not been investigated. 
 In the sequel, we show that there exists a self-join-free conjunctive query whose consistent answers are SQL-rewritable, but this property is not preserved when an aggregation operator is added on top of it. Specifically, 
 we reduce the \textsc{Maximum Cut} problem to the problem of computing the  range consistent answers to an aggregation query involving \texttt{SUM} and whose underlying conjunctive query has SQL-rewritable consistent answers. We begin by recalling the definition of the \textsc{Maximum Cut} problem, a fundamental  NP-complete problem \cite{Karp1972}. We state and prove a helping lemma (Lemma \ref{lemma1}) before stating the main result in Theorem \ref{theorem1}.
 \begin{definition}{\textsc{Maximum Cut}.}
 For an undirected graph $G = (V, E)$, a \emph{cut} of $G$ is a partition $(S, \overline{S})$ of $V$, where $S \subseteq V$ and $\overline{S} = V \backslash S$. The set of edges with one vertex in $S$ and one vertex in $\overline{S}$ is denoted by $E(S, \overline{S})$, and the the size of the cut $(S, \overline{S})$ is $|E(S, \overline{S})|$. 
 
 The \textsc{Maximum Cut} problem asks: Given an undirected graph $G$ and an integer $k$, is there a cut of $G$ that has size at least $k$? 
 \end{definition}
 
 \begin{theorem}\label{theorem1}
Let $\mathcal{R}$ be a schema with three relations $R_1(\underline{A_1}, B_1)$, $R_2(\underline{A_2}, B_2)$, and $R_3(\underline{A_1, B_1, A_2, B_2, C})$. Let $Q$ be the following aggregation query:
\begin{center}
\smallskip
$Q:= $ \texttt{SELECT SUM($A$)}\texttt{ FROM } $q(A)$,
\smallskip\\
\end{center}
\noindent where $q(A)$ is the following self-join-free conjunctive query:
\begin{center}
\smallskip
$q(A):= \exists x \exists y \; R_1(\underline{x}, \text{`red'}) \land R_2(\underline{y}, \text{`blue'}) \land R_3(\underline{x, \text{`red'}, y, \text{`blue'}, A})$.
\smallskip\\
\end{center}

\noindent Then
the following two statements hold.
\begin{enumerate}
 \item $\cons{q}$ is SQL-rewritable.
 \item $\cons{Q}$ is NP-hard.
 \end{enumerate}
\end{theorem}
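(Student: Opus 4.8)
The plan is to exploit the fact that the key of $R_3$ is its \emph{entire} tuple of attributes, so $R_3$ can never violate a key and is therefore identical in every repair; consequently, repairs differ only in the values they assign to the keys of $R_1$ and $R_2$, a repair choosing exactly one $B_1$-value per $A_1$-key and one $B_2$-value per $A_2$-key. Call an $R_1$-key $x$ \emph{forced red} if $(x,\text{`red'})$ is the unique $R_1$-fact with key $x$, and define \emph{forced blue} for $R_2$-keys $y$ analogously. First I would show that $a\in\cons{q}$ if and only if there exist $x,y$ with $x$ forced red, $y$ forced blue, and $R_3(x,\text{`red'},y,\text{`blue'},a)$. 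The ``if'' direction is immediate, since such $x,y$ witness $a$ in every repair. For ``only if'' I would exhibit a single adversarial repair $\mathcal{J}_0$ that assigns a non-red value to every $R_1$-key that has one and a non-blue value to every $R_2$-key that has one; in $\mathcal{J}_0$ the red keys are exactly the forced-red ones and the blue keys exactly the forced-blue ones, so if no such triple exists then $a\notin q(\mathcal{J}_0)$. This characterization is first-order, giving the rewriting whose free variable $A$ ranges over $\cons{q}$:
\[
\exists x\,\exists y\,\bigl(R_1(x,\text{`red'})\wedge\neg\exists u(R_1(x,u)\wedge u\neq\text{`red'})\wedge R_2(y,\text{`blue'})\wedge\neg\exists w(R_2(y,w)\wedge w\neq\text{`blue'})\wedge R_3(x,\text{`red'},y,\text{`blue'},A)\bigr),
\]
which is expressible in SQL; hence $\cons{q}$ is SQL-rewritable.

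\textbf{Part (2).} The plan is to reduce \maxcut{} to computing the $\lub$ endpoint of $\cons{Q}$. Given $G=(V,E)$, I would build an instance in which a repair encodes a choice of two sets $\mathrm{Red},\mathrm{Blue}\subseteq V$: for each $v\in V$ place $(v,\text{`red'})$ and $(v,\text{`grn'})$ in $R_1$ and $(v,\text{`blue'})$ and $(v,\text{`ylw'})$ in $R_2$, so a repair independently chooses which vertices are red (in $R_1$) and which are blue (in $R_2$). Since $R_3$ is rigid I would encode the graph there: for every edge $\{u,w\}\in E$ add $R_3(u,\text{`red'},w,\text{`blue'},1)$ and $R_3(w,\text{`red'},u,\text{`blue'},1)$, and for every vertex $v$ add a \emph{penalty} fact $R_3(v,\text{`red'},v,\text{`blue'},-M)$ with $M>2|E|$. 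Under $Q$, an edge fact contributes $1$ to the sum exactly when one endpoint is red and the other blue, and a penalty fact contributes $-M$ exactly when a vertex is simultaneously red and blue.

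\textbf{Correctness and conclusion.} I would then prove that $\lub(Q,\mathcal{I})$ equals the size of a maximum cut of $G$, which is the combinatorial correspondence I expect to isolate as the helper lemma. The penalty facts are decisive: any repair with $\mathrm{Red}\cap\mathrm{Blue}\neq\emptyset$ suffers a loss of at least $M>2|E|$ that cannot be recouped, so every optimal repair has $\mathrm{Red}$ and $\mathrm{Blue}$ disjoint; for disjoint sets the aggregate counts precisely the edges with one endpoint in $\mathrm{Red}$ and the other in $\mathrm{Blue}$, and this quantity is maximized by a genuine partition $\mathrm{Red}=S$, $\mathrm{Blue}=\overline{S}$, yielding exactly $|E(S,\overline{S})|$. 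Hence deciding whether $\lub(Q,\mathcal{I})\ge k$ is equivalent to deciding whether $G$ has a cut of size at least $k$, and since \maxcut{} is NP-complete, computing $\cons{Q}$ (which includes its $\lub$ endpoint) is NP-hard.

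\textbf{Main obstacle.} The crux is that the repairs of $R_1$ and of $R_2$ are \emph{independent}, so a single vertex partition is not directly representable: without intervention the optimizer would simply make every vertex both red and blue and fire every edge fact. The per-vertex penalty fact $R_3(v,\text{`red'},v,\text{`blue'},-M)$ is exactly the device that forces $\mathrm{Red}\cap\mathrm{Blue}=\emptyset$ in every optimal repair, collapsing the two independent choices into one partition. A second point to pin down is the aggregation semantics: the sum must count \emph{witnessing} join tuples (the relation $T(U,Z,A)$ retains the join variables $x,y$ as part of $U$), so that distinct cut edges contribute to the sum with multiplicity even though they share the aggregation value $1$.
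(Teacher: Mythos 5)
Your proposal is correct. Part (1) is essentially the paper's argument: your ``forced red / forced blue'' rewriting is logically the same first-order query $q'$ the paper uses, and your adversarial repair $\mathcal{J}_0$ is exactly the repair the paper constructs for the converse direction. Part (2), however, takes a genuinely different gadget. The paper puts \emph{both} colors `red' and `blue' into $R_1$ and into $R_2$ for every vertex, uses the diagonal penalty $R_3(v,\text{`red'},v,\text{`blue'},-|E|-1)$, and then must prove a normalization lemma (its Lemma \ref{lemma1}) showing that any repair can be transformed, without decreasing the aggregate, into one that ``produces a red-blue coloring,'' i.e., one in which the independent $R_1$- and $R_2$-choices cohere into a single vertex partition; that lemma requires a two-case analysis of the mismatched vertices. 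Your use of dummy colors `grn' and `ylw' reduces the $R_1$-choice to ``red or not'' and the $R_2$-choice to ``blue or not,'' so the penalty fact directly eliminates the only bad combination ($\mathrm{Red}\cap\mathrm{Blue}\neq\emptyset$) in any optimal repair, and no normalization lemma is needed. What you pay for this is the small extra step you gloss over: the maximum of the edge count over \emph{disjoint but not necessarily covering} pairs $(\mathrm{Red},\mathrm{Blue})$ equals the maximum cut. This is a one-line observation (replace $\mathrm{Red}$ by $V\setminus\mathrm{Blue}$; this can only add bichromatic edges), but it should be stated. Also note your $M>2|E|$ is more conservative than needed (once $\mathrm{Red}$ and $\mathrm{Blue}$ are disjoint each edge fires at most once), which is harmless. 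Finally, both your proof and the paper's rely on the same convention that \texttt{SUM} aggregates over the bag of witnessing assignments rather than the set of distinct $A$-values; you flag this explicitly, and the paper uses it implicitly when it speaks of ``summands'' arising from tuples.
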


\begin{proof}
 To show that
 $\cons{q}$ is SQL-rewritable,
  consider the following first-order query $q'$:
 \begin{align*}
 q'(A):= \; & \exists x \exists y ( R_1(\underline{x}, \text{`red'}) \land R_2(\underline{y}, \text{`blue'}) \land R_3(\underline{x, \text{`red'}, y, \text{`blue'}, A})\\ &\land  \forall z(R_1(x, z) \rightarrow z = \text{`red'}) \land \forall w(R_2(y, w) \rightarrow w = \text{`blue'})).
 \end{align*} 
 \ignore{It is not hard to verify that for every instance
 $\mathcal I$ and every value $a$, we have that $a\in q'(\mathcal{I}) \Leftrightarrow a \in \cons{q, \mathcal{I}}$.}
We will show that for every instance $\mathcal I$ and every value $a$, we have that $a\in q'(\mathcal{I}) \Leftrightarrow a \in \cons{q, \mathcal{I}}$. Since $q'$ filters out the tuples from $R_1$ and $R_2$ that participate in the violations of the key constraints, we have that if $a\in q'(\mathcal{I})$, then $a \in q(\mathcal{J})$, for every repair $\mathcal{J}$ of $\mathcal{I}$, which means that $a \in \cons{q, \mathcal{I}}$. In the other direction, we claim that if $a \in \cons{q, \mathcal{I}}$, then $a\in q'(\mathcal{I})$. Indeed, if $a\not \in q'(\mathcal{I})$, then for all $x$ and $y$ such that $R_1(x, \text{`red'}) \land R_2(y,\text{`blue'}) \land R_3(x,\text{`red'}, y, \text{`blue'}, a)$, we would have that there is some $z$ such that $R_1(x,z)$ and $z\not = \text{`red'}$ or there is some $w$ such that $R_2(y,w)$ and $w\not = \text{`blue'}$. Construct a repair $\mathcal J$ of $\mathcal I$ as follows. First, for every $x$, if $\text{`red'}$ is the only value $z$ such that $R_1(x,z)$ is a fact of $\mathcal I$, then put $R_1(x,\text{`red'})$ in $\mathcal J$; otherwise, pick an element $z^*\not = \text{`red'}$ such that $R_1(x,z^*)$ is a fact of $\mathcal I$ and put $R_1(x,z^*)$ in $\mathcal J$. Second, for every  $y$, if $\text{`blue'}$ is the only value $w$ such that $R_2(y,w)$ is a fact of $\mathcal I$, then put $R_2(y,\text{`blue'})$ in $\mathcal J$; otherwise, pick an element $w^*\not =  \text{`blue'}$ such that $R_1(y,w^*)$ is a fact of $\mathcal I$ and put $R_2(x,w^*)$ in $\mathcal J$. Third, put every tuple of the relation $R_3$ of $\mathcal I$ into $\mathcal J$. Clearly, $\mathcal J$ is a repair of $\mathcal I$. Moreover, $a \not \in q({\mathcal J})$. Indeed, if $a \in q({\mathcal J})$, then there are elements $x$ and $y$ such that
  ${\mathcal J}\models R_1(x, \text{`red'}) \land R_2(y,\text{`blue'}) \land R_3(x,\text{`red'}, y, \text{`blue'}, a)$. Since $a\not \in q'({\mathcal I})$, we have that there is some  $z'$ such that $R_1(x,z')$ and $z'\not = \text{`red'}$ or there is some $w'$ such that
 $R_2(y,w')$ and $w'\not = \text{`blue'}$. In the first case, the construction of $\mathcal J$ implies that $R_1(x,\text{'red'})$ is not a fact of $\mathcal J$, while in the second case, the construction of $\mathcal J$ implies that
 $R_2(y,\text{`blue'})$ is not a fact of $\mathcal J$; in either case, we have arrived at a contradiction.
 
 To show that $\cons{Q}$ is NP-hard, consider the following reduction from undirected graphs to $\mathcal R$-instances, where $\mathcal{R}$ is the  schema with relations $R_1(\underline{A_1}, B_1)$, $R_2(\underline{A_2}, B_2)$, and $R_3(\underline{A_1, B_1, A_2, B_2, C})$.
 \begin{reduction}\label{maxcut-cqa} L Given an undirected graph $G = (V, E)$, construct an $\mathcal{R}$-instance $\mathcal{I}$ as follows. Let $m = -|E| - 1$.
 \begin{flushleft}
 
\begin{itemize}[noitemsep]
\item For each $v \in V$, add tuples $R_1(v, \text{`red'})$, $R_1(v, \text{`blue'})$, $R_2(v, \text{`red'})$, and $R_2(v, \text{`blue'})$ to $\mathcal{I}$.
\item For each $v \in V$, add a tuple $R_3(v, \text{`red'}, v, \text{`blue'}, m)$ to $\mathcal{I}$.
\item For each edge $(u, v) \in E$, add tuples $R_3(u, \text{`red'}, v, \text{`blue'}, 1)$ and $R_3(v, \text{`red'}, u, \text{`blue'}, 1)$ to $\mathcal{I}$.
\end{itemize}
\end{flushleft}
\end{reduction}
We will show that the preceding Reduction \ref{maxcut-cqa} reduces \textsc{Maximum Cut} to computing the range semantics of the aggregation query $Q$.

For the rest of this section, let $G$ be an undirected graph and $\mathcal{I}$ be the database instance constructed from $G$ using Reduction \ref{maxcut-cqa}. We say that a repair $\mathcal{J'}$ of $\mathcal{I}$ produces a \emph{red-blue coloring} of $G$ if for every vertex $v \in V$, we have that the tuples $R_1(v, \text{`red'})$ and $R_2(v, \text{`red'})$ are either both present in $\mathcal{J'}$ or both absent in $\mathcal{J'}$. We now prove a useful lemma.
 
 \begin{lemma}\label{lemma1}
  For every repair $\mathcal{J}$ of $\mathcal{I}$, there exists a repair $\mathcal{J'}$ of $\mathcal{I}$ (not necessarily different from $\mathcal J$) such that $\mathcal{J'}$ produces a red-blue coloring of $G$ and $Q(\mathcal{J'}) \geq Q(\mathcal{J})$. 
\end{lemma}
\begin{proof}
Let $\mathcal{J}$ be a repair of $\mathcal{I}$. Construct an $\mathcal R$-instance $\mathcal{J'}$ from $\mathcal{J}$ as follows. For every vertex $v \in V$, if both tuples $R_1(v, x)$ and $R_2(v, x)$ are  present in $\mathcal{J}$ for $x \in \{\text{`red'}, \text{`blue'}\}$, then add them to $\mathcal{J'}$. Otherwise, add the tuples $R_1(v, \text{`red'})$ and $R_2(v, \text{`red'})$ to $\mathcal{J'}$. Also, copy all tuples from relation $R_3$ of $\mathcal{J}$ to relation $R_3$ of $\mathcal{J'}$. Clearly, $\mathcal{J'}$ is a repair of $\mathcal{I}$ and $\mathcal{J'}$ produces a red-blue coloring of $G$. Observe that $Q(\mathcal{J'})$ can be different than $Q(\mathcal{J})$ only if there exists at least one vertex $v \in V$ such that either $R_1(v, \text{`red'}), R_2(v, \text{`blue'}) \in \mathcal{J}$ or $R_1(v, \text{`blue'}), R_2(v, \text{`red'}) \in \mathcal{J}$. 

We will show that  $Q(\mathcal{J'}) \geq Q(\mathcal{J})$ holds.

\textbf{Case 1}: Let  $v$ be a node such that $R_1(v, \text{`red'}), R_2(v, \text{`blue'}) \in \mathcal{J}$. In this case, while populating the database instance $\mathcal{J'}$, vertex $v$ changes its color in relation $R_2$, i.e., we have that $R_2(v, \text{`red'}) \in \mathcal{J'}$ and $R_2(v, \text{`blue'}) \not\in \mathcal{J'}$. Therefore, the summands arising from the tuples of the form $R_1(u, \text{`red'})$, $R_2(v, \text{`blue'})$, and $R_3(u, \text{`red'}, v, \text{`blue'}, 1)$ of $\mathcal{J}$ (for some vertex $u \neq v \in V$) do not appear in $Q(\mathcal{J'})$. Notice that each of these summands contributes value 1 to $Q(\mathcal{J})$ and the number of these summands is at most  $|E|$. At the same time, the summand that contributes value $m$ to $Q(\mathcal{J})$ arising from the tuples $R_1(v, \text{`red'})$, $R_2(v, \text{`blue'})$, and $R_3(v, \text{`red'}, v, \text{`blue'})$ of $\mathcal{J}$ also does not appear in $Q(\mathcal{J'})$. Since $m = -|E| - 1$, it follows that $Q(\mathcal{J'})$ cannot be made smaller than $Q(\mathcal{J})$ on account of such a node $v$.

\textbf{Case 2}: Let $v$ be a node such that $R_1(v, \text{`blue'}), R_2(v, \text{`red'}) \in \mathcal{J}$. In this case, while populating $\mathcal{J'}$, vertex $v$ changes its color in relation $R_1$, i.e., we have that $R_1(v, \text{`red'}) \in \mathcal{J'}$ and $R_1(v, \text{`blue'}) \not\in \mathcal{J'}$. Compared to $Q(\mathcal{J})$, this can only increase the number of summands that contribute 1 to $Q(\mathcal{J'})$, by possibly having new summands arising from the tuples of type $R_1(v, \text{`red'})$, $R_2(w, \text{`blue'})$, and $R_3(v, \text{`red'}, w, \text{`blue'}, 1)$ of $\mathcal{J'}$ (for some vertex $w \neq v \in V$). Moreover, for every vertex $u \in V$, it is true that, if $R_1(u, \text{`red'}) \in \mathcal{J}$ then $R_1(u, \text{`red'}) \in \mathcal{J'}$; similarly, if $R_2(u, \text{`blue'}) \in \mathcal{J}$ then $R_2(u, \text{`blue'}) \in \mathcal{J'}$. Therefore, every summand that contributes 1 to $Q(\mathcal{J})$ also contributes 1 to $Q(\mathcal{J'})$. Hence, $Q(\mathcal{J'})$ cannot be made smaller than $Q(\mathcal{J})$ on account of such a node $v$.

The preceding analysis implies that
$Q(\mathcal{J'}) \geq Q(\mathcal{J})$.
\end{proof}

By Lemma \ref{lemma1}, there exists a repair $\mathcal{J}$ of $\mathcal{I}$ such that  $\mathcal{J}$ produces a red-blue coloring of $G$ and $Q(\mathcal{J})$ is the $lub$-answer in $\cons{Q, \mathcal{I}}$. We will show that, for a non-negative integer $k$, there is a cut $(S, \overline{S})$ of $G$ such that $|E(S, \overline{S})| \geq  k$ if and only if there exists a repair $\mathcal{J}$ of $\mathcal{I}$ such that $\mathcal{J}$ produces a red-blue coloring of $G$ and $Q(\mathcal{J}) \geq  k$. Once this is shown, it will follow that it is NP-hard to even compute the $lub$-answer in $\cons{Q, \mathcal{I}}$. 


Let $(S,\overline{S})$ be a cut of $G$ such that
$|E(S,\overline{S})|  \geq k$. Construct an $\mathcal{R}$-instance $\mathcal{J}$ as follows. For each vertex $v \in S$, add tuples $R_1(v, \text{`red'})$ and $R_2(v, \text{`red'})$ to $\mathcal{J}$. For each vertex $v \in \overline{S}$, add tuples $R_1(v, \text{`blue'})$ and $R_2(v, \text{`blue'})$ to $\mathcal{J}$. Add all tuples from relation $R_3$ of $\mathcal{I}$ to $\mathcal{J}$. Observe that $\mathcal{J}$ is a repair of $\mathcal{I}$ and that $\mathcal{J}$ produces a red-blue coloring of $G$. Also, every edge $(u, v) \in E$ such that $u \in S$ and $v \in \overline{S}$ is part of a witness to  a summand that contributes 1 to $Q(\mathcal{J})$. Moreover, no summand in $Q(\mathcal{J})$ arises from  a tuple of the form $R_3(v, \text{`red'}, v, \text{`blue'}, m)$ for some $v \in V$. Since we have that $|E(S, \overline{S})| \geq  k$, it must be the case that $Q(\mathcal{J}) \geq k$.
In the other direction, let $\mathcal{J}$ be a repair of $\mathcal{I}$ such that $\mathcal{J}$  produces a red-blue coloring of $G$ and   $Q(\mathcal{J}) \geq  k$. 
Construct two sets $S$ and $\overline{S}$ of vertices of $G$ as follows. Let $v \in S$ if $R_1(v, \text{`red'}) \in \mathcal{J}$, and let $v \in \overline{S}$ if $R_1(v, \text{`blue'}) \in \mathcal{J}$. Clearly, $(S, \overline{S})$ is a cut of $G$. Every edge $(u, v) \in E$ such that $u \in S$ and $v \in \overline{S}$ is part of a witness to a summand that contributes 1 to $Q(\mathcal{J})$ since the tuples $R_1(u, \text{`red'})$, $R_2(v, \text{`blue'})$, and $R_3(u, \text{`red'}, v, \text{`blue'}, 1)$ of $\mathcal{J}$ satisfy the underlying conjunctive query of $Q$. In fact, since $\mathcal J$ produces a red-blue coloring of $G$, every summand that contributes to $Q(\mathcal{J})$ must arise from such tuples. Since $Q(\mathcal{J}) \geq  k$, it must be the case that $|E(S, \overline{S})| \geq  k$.
\end{proof}

\ignore{
\begin{proposition}\label{agg-hardness}
Assume that $\mathcal{R}$ is a relational database schema containing the three relations $\texttt{R}(\underline{A, B, C, D})$, $\texttt{S}(\underline{A, B, C, D})$, and $\texttt{T}(\underline{C}, D)$. Let $Q:= q_1 \cup q_2$, where $q_1, q_2$ are the conjunctive queries
\begin{table}[!ht]
  \begin{tabular}{ll}
\begin{tabular}{l}
$q_1:=$ \texttt{SELECT R.A, R.B}\\
\tab[0.75cm]\texttt{FROM R, T}\\
\tab[0.75cm]\texttt{WHERE R.C = T.C}\\
\tab[1.2cm]\texttt{AND R.D = T.D}
\end{tabular}
&
\begin{tabular}{l}
$q_2:=$ \texttt{SELECT S.A, S.B}\\
\tab[0.75cm]\texttt{FROM S, T}\\
\tab[0.75cm]\texttt{WHERE S.C = T.C}\\
\tab[1.2cm]\texttt{AND S.D = T.D},
\end{tabular}
\end{tabular}
\end{table}

\noindent and let $Q$ be the aggregation query $Q :=\; \texttt{SELECT SUM(B) FROM } q$. Then, the following statements hold.
\begin{itemize}
\item $\cons{q, \mathcal{I}, \Sigma}$ is SQL-rewritable.
\item $\cons{Q, \mathcal{I}, \Sigma}$ is NP-hard.
\end{itemize}
\end{proposition}
\begin{proof}
The queries $q_1$ and $q_2$ are both in the class \cptree{} (as per the definition of \cptree{} in \cite{Grieco05}), since they do not contain repeated relation symbols and their join graphs are acyclic. Moreover, there are no two interacting atoms in $q$ (as per Definition 2 in \cite{Lembo06}).  From Theorem 2 of \cite{Lembo06}, it follows  that $\cons{q, \mathcal{I}, \Sigma}$ is SQL-rewritable. We prove the second part of Proposition \ref{agg-hardness} via a reduction from (the decision version) of \maxtwosat{} to $\cons{Q, \mathcal{I}, \Sigma}$. The decision version of \maxtwosat{} problem asks: Given a 2-CNF formula $\phi$ and an integer $k$, is there an assignment to the variables of $\phi$ that satisfies at least $k$ clauses of $\phi$?  It is well known that this problem is NP-complete.

Let $\phi = c_1 \land \cdots \land c_n$ be an arbitrary 2-CNF formula, such that each clause $c_i$ of $\phi$ is of the form $c_i = (p_i \lor q_i)$, where $p_i$ and $q_i$ are the literals. Populate an $\mathcal{R}$-instance $\mathcal{I}$ as follows:
\begin{enumerate}[noitemsep]
\item For each variable $v$ in $\phi$, add tuples $(v, 0)$ and $(v, 1)$ to \texttt{T}.
\item For each clause $c_i \in \phi$, do the following:
\begin{itemize}[noitemsep]
\item If $p_i$ is a positive literal, add a tuple $(i, 1, p_i, 1)$ to \texttt{R}, otherwise add tuple $(i, 1, \overline{p_i}, 0)$ to \texttt{R}.
\item If $q_i$ is a positive literal, add a tuple $(i, 1, q_i, 1)$ to \texttt{S}, otherwise add tuple $(i, 1, \overline{q_i}, 0)$ to \texttt{S}.
\end{itemize}
\end{enumerate}
Each repair of relation \texttt{T} corresponds to a distinct assignment to the variables of $\phi$, i.e., each fact $(v, 0)$ or $(v, 1)$ from a repair of \texttt{T} corresponds to the value $0$ or $1$ being assigned to the variable $v$ respectively. Observe that for each repair of $\mathcal{J}$ of $\mathcal{I}$, the facts in $q(\mathcal{J})$ are in one-to-one correspondence with the clauses satisfied by an assignment dictated by the facts of \texttt{T} present in $\mathcal{J}$. Thus, the largest $k$ for which there exists an assignment to the variables of $\phi$ that satisfies at least $k$ clauses of $\phi$ is the LUB-answer in $\cons{Q, \mathcal{I}, \Sigma}$.

Similarly, the GLB-answer of $\cons{Q, \mathcal{I}, \Sigma}$ corresponds to the minimum number of simultaneously satisfiable clauses of $\phi$. It follows from the NP-completeness of \mintwosat{} \cite{Kohli94} that computing GLB-answer in $\cons{Q', \mathcal{I}, \Sigma}$ is also NP-hard.
\end{proof}
}
\section{Consistent Answers via SAT Solving}\label{main-body}
In this section, we give  polynomial-time reductions from computing the range  consistent answers of aggregation queries to variants of \sat{}. The reductions in this section assume that the database schema has one key constraint per relation; in Section \ref{beyond-key-constraints}, we show how these reductions can  be extended to schemata with arbitrary denial constraints. Our reductions rely on several well-known \emph{optimization} variants of \sat{} that we describe next. 

\smallskip
$\bullet$~
    \weightedmaxsat{} (or \wmaxsat{}) is the maximization variant of \sat{} in which each clause is assigned a positive weight and the goal is to find  an assignment that maximizes the sum of the weights of the satisfied clauses. We will write $(l_1 \lor \cdots \lor l_k, w)$ to denote
     a clause $(l_1 \lor \cdots \lor l_k)$ with weight $w$.

\smallskip

$\bullet$~ \partialmaxsat{} (or \pmaxsat{}) is the maximization variant of \sat{} in which some clauses of the formula are assigned infinite weight (\textit{hard clauses}), while each of the rest is assigned weight one (\textit{soft clauses}). The goal is to find an assignment that   satisfies all hard clauses and the maximum number of soft clauses. If the hard clauses of a \pmaxsat{} instance are not simultaneously satisfiable, then we say that the instance is \emph{unsatisfiable}. 
For simplicity, a hard clause $(l_1 \lor \cdots \lor l_k, \infty)$ is  denoted as $(l_1 \lor \cdots \lor l_k)$.  

\smallskip

$\bullet$~   \weightedpartialmaxsat{} (or \wpmaxsat{}) is the maximization variant of \sat{} in which some of the clauses of the formula  
are assigned infinite weight (\textit{hard clauses}), while each of the rest is assigned a positive weight (\textit{soft clauses}).
The goal is to find an 
 assignment that satisfies all hard clauses and maximizes the sum of weights of the satisfied soft clauses. Clearly, \wpmaxsat{} is a common generalization of both 
 \wmaxsat{} (no hard clauses) and 
 \pmaxsat{} (each soft clause has weight one).

 Modern solvers, such as MaxHS \cite{Davies2011}, can efficiently solve large instances of 
 these maximization variants of \sat{}.
Note that these maximization problems have  dual minimization problems, called 
 \wminsat{}, \pminsat, and \wpminsat{}, respectively.  For example, in \wpminsat{}, the goal is to  find  an assignment that satisfies all hard clauses and minimizes the sum of weights of the satisfied soft clauses. These minimization problems are of interest to us, because some of the computations of the range consistent answers have natural reductions to such minimization problems.
 At present, the only existing \wpminsat{} solver is MinSatz \cite{Li11}. Since this solver has certain size limitations, we will deploy K\"{u}gel's technique \cite{Kuegel12} to first reduce \wpminsat{}  to \wpmaxsat{}, and then use the MaxHS solver in our experiments.
  This technique uses the concept of CNF-\emph{negation} \cite{Kuegel12, Zhu12}.
  By definition, if $C=(l_1 \lor \cdots \lor l_k)$ is a clause, then the CNF-negation $\text{CNF}(\overline{C})$ of $C$ is the CNF-formula  $\neg l_1 \land (l_1 \lor \neg l_2) \land \cdots \land (l_1 \lor l_2 \lor \cdots \lor l_{k-1} \lor \neg l_k)$.
  It is easy to verify that the following properties hold: (i) if an assignment $s$ does not satisfy $C$, then $s$ satisfies every clause of $\text{CNF}(\overline{C})$; (ii) if an assignment $s$ satisfies $C$, then $s$ satisfies all but one of the clauses of $\text{CNF}(\overline{C})$, namely, the clause
  $(l_1\lor \cdots \lor l_{j-1} \lor \neg l_j)$, where $j$ is the smallest index such that $s(l_j) = 1$. It follows that  $C$ is satisfiable if and only if $\text{CNF}(\overline{C})$ is unsatisfiable.
  For a weighted clause $C = (l_1 \lor \cdots \lor l_k, w)$, the CNF-negation $\text{CNF}(\overline{C})$ of $C$ is the formula  $(\neg l_1, w) \land (l_1 \lor \neg l_2, w) \land \cdots \land (l_1 \lor l_2 \lor \cdots \lor l_{k-1} \lor \neg l_k, w)$.

We also need to recall the notions of \emph{key-equal groups}
and \emph{bags of witnesses}. 
Let $\mathcal I$ be a database instance.

\smallskip

$\bullet$~
 We say that two facts of a relation $R$ of $\mathcal I$ are \textit{key-equal}, if they agree on the key attributes of $R$. A set $S$ of facts of  $\mathcal I$ is called a \textit{key-equal group} of facts if every two facts in $S$ are key-equal, and no fact in $S$ is key-equal to some fact in  ${\mathcal I}\backslash S$.
 
 \smallskip
 
$\bullet$~ Let $q({\bf z}):= \exists {\bf w}\; (R_1({\bf x}_1) \land ... \land R_m({\bf x}_m))$ be a conjunctive query, where each ${\bf x}_i$ is a tuple of variables and constants, and let ${\bf a} \in q(\mathcal I)$ be an answer to $q$ on $\mathcal I$.
Let \textit{vars}($q$) and \textit{cons}($q$) be the sets of variables and constants occurring in $q$.
A function $f : vars(q) \cup cons(q) \rightarrow vals({\mathcal I})$ is a \textit{witnessing assignment to $\bf{a}$} if the following hold: $f({\bf z}) = {\bf a}$;  if $x_j$ is a constant in $q$, then $f(x_j) = x_j$; and  if $R_i(x_1,\cdots,x_n)$ is an atom of $q$, then $R_i(f(x_1),\cdots,f(x_n))$ is a fact of $\mathcal I$.
We say that a  set $S$ of facts from $\mathcal I$ is a \textit{witness to {\bf a}} if there is a witnessing assignment $f$ to $\bf a$ such that $S = \{R_i(f(x_1), \cdots,f(x_n)):  R_i(x_1,\cdots,x_n)$ is an atom of $q\}$.
\ignore{
\begin{enumerate}[noitemsep,leftmargin=*]
    \item[i.] if $x_j$ is a constant in $q$, then $f(x_j) = x_j$,
    \item[ii.] $S = \{R_i(f(x_1), \cdots,f(x_n)):  R_i(x_1,\cdots,x_n)$ is an atom of $q\}$.
\end{enumerate}

Let $S$ be a set of facts of $\mathcal I$ and \textit{vals}$(S)$ the set of constants appearing as attribute values in the facts of $S$. }

 Note that two distinct witnessing assignments to an answer may give rise to the same witness. Thus, we consider the \textit{bag of witnesses to an answer}, i.e., the bag consisting of witnesses arising from all witnessing assignments to that answer, where each witness $S$ is accompanied by its multiplicity, an integer denoting the number of witnessing assignments that gave rise to $S$. Finally, we define the \textit{bag of witnesses to a conjunctive query} as the bag union of the bags of witnesses over all answers to $q$ on $\mathcal I$ (in the bag union the multiplicities of the same set are added). The \textit{bag of witnesses to a union} $q:= q_1\cup \cdots \cup q_k$ of conjunctive queries is the bag union of the bags of witnesses to each conjunctive query $q_i$ in $q$. The bag of witnesses will be used in computing the range consistent answers to aggregation queries. In effect, the bag of witnesses corresponds to the provenance polynomials of conjunctive queries and their unions \cite{DBLP:conf/pods/GreenKT07,DBLP:journals/sigmod/KarvounarakisG12}.

It is easy to verify that both the key equal groups and the bag of the witnesses can be computed using SQL queries. In Section \ref{sec:sum-count} and Section \ref{sec:min-max}, we give reductions to compute the range consistent answers to aggregation queries without grouping, and in Section \ref{sec:with-grouping}, we describe an iterative algorithm that uses these reductions to handle aggregation queries with grouping.

\subsection{Answering Queries with \texttt{SUM} and \texttt{COUNT}}\label{sec:sum-count}
Let $\mathcal{R}$ be a database schema with one key constraint per relation, and $Q$ be the aggregation query\smallskip\\
\centerline{$Q :=\; \texttt{SELECT } f \texttt{ FROM } T(U, A),$}\smallskip\\
where $f$ is one of the operators \texttt{COUNT(*)}, \texttt{COUNT($A$)}, \texttt{SUM($A$)}, 
and $T(U, A)$ is a relation expressed as a union of conjunctive queries over $\mathcal{R}$. We now reduce the range consistent answers $\cons{Q}$  of $Q$ to \pmaxsat{} and to \wpmaxsat{}.

\subsubsection{Reductions to \pmaxsat{} and \wpmaxsat{}}
\begin{reduction}\label{reduction1}
Let $Q := \texttt{SELECT $f$ FROM } T(U, A)$ be an aggregation query, where $f$ is one of the operators \texttt{COUNT(*)}, \texttt{COUNT($A$)}, and \texttt{SUM($A$)}. Let $\mathcal{I}$ be an $\mathcal R$-instance and $\mathcal{G}$ be the set of key-equal groups of facts of $\mathcal{I}$. For each fact $f_i$ of $\mathcal{I}$, introduce a boolean variable $x_i$. Let $\mathcal{W}$ be the bag of witnesses to  the query $q^*$ on
$\mathcal I$, where
\begin{equation*}
 q^* :=
    \begin{cases}
      \exists U \exists A\; T(U, A) & \text{if } f \text{ is } \texttt{COUNT(*)}\\
    \exists U \; T(U, A)  & \text{if } f \text{ is } \texttt{COUNT($A$)} \text{ or } \texttt{SUM($A$)}.
 \end{cases}
\end{equation*}

Construct a  partial CNF-formula $\phi$ (if $f$ is \texttt{COUNT(*)} or  \texttt{COUNT($A$)}) or a weighted partial CNF-formula $\phi$ (if $f$ is \texttt{SUM(A)})  as follows:
\begin{enumerate}
\item [(1)] For each $G_j \in \mathcal{G}$,
\begin{itemize}
\item construct a hard clause $\alpha_j := \underset{f_i \in G_j}{\lor} x_i$.
\item for each pair $(f_m, f_n)$ of facts in $G_j$ such that $m \neq n$, construct a hard clause $\alpha_j^{mn} := (\neg x_m \lor \neg x_n)$.
\end{itemize}
\item [(2a)] 
If $f$ is 
\texttt{COUNT(*)} or \texttt{COUNT($A$)}, then
for each witness $W_j \in \mathcal{W}$, 
construct a soft  clause $\beta_j$,  where $$\beta_j = \bigg(\underset{f_i \in W_j}{\lor} \neg x_i, m_j\bigg).$$
Construct a  partial CNF-instance $$\phi = \bigg(\overset{|\mathcal{G}|}{\underset{j=1}{\land}}\alpha_j\bigg)\land\bigg(\overset{|\mathcal{G}|}{\underset{j=1}{\land}}\bigg(\bigwedge_{\substack{f_m \in \mathcal{G}_j \\f_n \in \mathcal{G}_j}}\alpha^{mn}_j\bigg)\bigg)\land\bigg(\overset{|\mathcal{W}|}{\underset{j=1}{\land}}\beta_j\bigg).$$

\item [(2b)]
If $f$ is \texttt{SUM($A$)},
let $\mathcal{W}_P$ and $\mathcal{W}_N$ be the subsets of $\mathcal{W}$ such that for each $W_j \in \mathcal{W}$, we have $W_j \in \mathcal{W}_P$ iff $q^*(W_j) > 0$, and $W_j \in \mathcal{W}_N$ iff $q^*(W_j) < 0$.
Let also $w_j = m_j *||q^*(W_j)||$,  where $||q^*(W_j)||$ is the absolute value of $q^*(W_j)$. Construct a weighted soft clause $\beta_j$ and a conjunction $\gamma_j$ of hard clauses as follows. If  $W_j \in \mathcal{W_N}$, introduce a new variable $y_j$ and let
\begin{align*}
    \beta_j &= (y_j, w_j) \text{ and }\\ 
    \gamma_j&=\bigg(\Big(\underset{f_i \in W_j}{\lor} \neg x_i\Big)\lor y_j\bigg) \land \bigg(\underset{f_i \in W_j}{\land}(\neg y_j \lor x_i)\bigg);
\end{align*}
otherwise, let $\beta_j = \bigg(\underset{f_i \in W_j}{\lor} \neg x_i, w_j\bigg)$ and do not construct $\gamma_j$.\\
Construct a weighted partial CNF-instance $$\phi = \bigg(\overset{|\mathcal{G}|}{\underset{j=1}{\land}}\alpha_j\bigg)\land\bigg(\overset{|\mathcal{G}|}{\underset{j=1}{\land}}\bigg(\bigwedge_{\substack{f_m \in \mathcal{G}_j \\f_n \in \mathcal{G}_j}}\alpha^{mn}_j\bigg)\bigg)\land\bigg(\overset{|\mathcal{W}|}{\underset{j=1}{\land}}\beta_j\bigg)\land\bigg({\underset{W_j \in \mathcal{W_N}}{\land}}\gamma_j\bigg).$$

\ignore{
\item For each $W_j \in \mathcal{W}$, let $w_j=1$ if $f$ is 
\texttt{COUNT(*)} or \texttt{COUNT($A$)},
and let $w_j = ||q^*(W_j)||$ if $f$ is \texttt{SUM($A$)}, where $||q^*(W_j)||$ is the absolute value of $q^*(W_j)$. Construct a weighted soft clause $\beta_j$ (and a conjunction $\gamma_j$ of hard clauses) as follows. If $f$ is \texttt{SUM($A$)} and $W_j \in \mathcal{W_N}$, introduce a new variable $y_j$ and let
\begin{align*}
    \beta_j &= (y_j, w_j) \text{ and }\\ 
    \gamma_j&=\bigg(\Big(\underset{f_i \in W_j}{\lor} \neg x_i\Big)\lor y_j\bigg) \land \bigg(\underset{f_i \in W_j}{\land}(\neg y_j \lor x_i)\bigg);
\end{align*}
otherwise, let $\beta_j = \bigg(\underset{f_i \in W_j}{\lor} \neg x_i, w_j\bigg)$.
}

\end{enumerate}

\end{reduction}

\noindent\textbf{Purpose of the components of  $\phi$ in Reduction \ref{reduction1}}
\begin{itemize}[noitemsep,leftmargin=*]
    \item Each $\alpha_j$-clause encodes the \textit{``at-least-one"} constraint for each key-equal group $G_j$ in the sense that satisfying $\alpha_j$ requires setting at least one variable corresponding to a fact in $G_j$ to true. Similarly, each $\alpha_j^mn$-clause encodes the \textit{``at-most-one"} constraint for $G_j$. In effect, every assignment that satisfies all $\alpha$-clauses sets exactly one variable corresponding to the facts from each key-equal group to true, and thus uniquely corresponds to a repair of $\mathcal I$.
    \item Satisfying a $\beta_j$-clause constructed in Step 2a requires setting at least one variable corresponding to the facts of a witness $W_j$ to $q^*$ on $\mathcal I$ to false. Thus, if $s$ is an assignment that satisfies all $\alpha$-clauses, then $\beta_j$ is satisfied by $s$ if and only if $W_j \not\in \mathcal J$, where $\mathcal J$ is a repair corresponding to $s$.
    \item The $\beta_j$-clauses constructed in Step 2b serve the same purpose as the ones from Step 2a, but here they are constructed only for the witnesses in $\mathcal W_P$. For the witnesses in $\mathcal W_N$, the $\beta_j$-clauses encode the condition that $\beta_j$ is satisfied if and only if all variables corresponding to the facts in $W_j$ are set to true. The hard $\gamma_j$-clauses are used solely to express the equivalence $y_j \leftrightarrow (\underset{f_i \in W_j}{\land} x_i)$ in conjunctive normal form. 
\end{itemize}

The number of $\alpha$-clauses is $O(n)$, where $n$ is the size of the database; the number of $\beta$-clauses and $\gamma$-clauses combined is $O(n^k)$, where $k$ is the number of relation symbols in $Q$.

\begin{proposition}\label{prop1}
Let $Q := \texttt{SELECT $f$ FROM } T(U, A)$ be an aggregation query, where $f$ is one of the operators \texttt{COUNT(*)}, \texttt{COUNT($A$)}, and \texttt{SUM($A$)}. In a maximum (a minimum) satisfying assignment of the \wpmaxsat{}-instance $\phi$ constructed using Reduction \ref{reduction1}, the sum of weights of the falsified clauses is the glb-answer (lub-answer) in the range consistent answers $\cons{Q}$ on $\mathcal{I}$.
\end{proposition}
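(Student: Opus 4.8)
The plan is to establish a weight-preserving correspondence between the repairs of $\mathcal{I}$ and the assignments of $\phi$ that satisfy all hard clauses, and then argue that optimizing the satisfied weight over these assignments optimizes $Q$ over the repairs. \textbf{Step 1 (Repairs versus hard-satisfying assignments).} First I would show that the assignments $s$ satisfying every hard clause of $\phi$ are in bijection with the repairs $\mathcal{J}$ of $\mathcal{I}$, via $s(x_i)=1 \iff f_i \in \mathcal{J}$. As explained in the discussion of Reduction~\ref{reduction1}, the clause $\alpha_j$ forces at least one variable of each key-equal group $G_j$ to be true while the clauses $\alpha_j^{mn}$ force at most one, so $s$ selects exactly one fact from each $G_j$; this is precisely a maximal consistent subinstance, i.e.\ a subset repair. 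In the \texttt{SUM($A$)} case I would also note that the hard clauses $\gamma_j$ pin down the auxiliary variables through the equivalence $y_j \leftrightarrow \bigwedge_{f_i \in W_j} x_i$, so each repair still determines a unique hard-satisfying assignment and conversely. Hence $\phi$ is satisfiable and any optimum of $\phi$ ranges over exactly the repairs of $\mathcal{I}$.

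\textbf{Step 2 (Falsified weight versus $Q(\mathcal{J})$).} Fix a repair $\mathcal{J}$ and its assignment $s$. Since the hard clauses are satisfied, only soft $\beta$-clauses can be falsified. Say a witness $W_j$ \emph{fires} in $\mathcal{J}$ iff $W_j \subseteq \mathcal{J}$. For \texttt{COUNT(*)} and \texttt{COUNT($A$)}, the clause $\beta_j=(\bigvee_{f_i\in W_j}\neg x_i,\,m_j)$ is falsified exactly when $W_j$ fires, so the total falsified weight is $\sum_{W_j\subseteq\mathcal{J}} m_j = Q(\mathcal{J})$, because the bag of witnesses contained in $\mathcal{J}$ is exactly the provenance of $q^*$ on $\mathcal{J}$. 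For \texttt{SUM($A$)} I would split by sign: for a positive witness $W_j\in\mathcal{W}_P$ the clause $\beta_j$ is falsified iff $W_j$ fires, contributing $w_j=m_j\,q^*(W_j)>0$; for a negative witness $W_j\in\mathcal{W}_N$ the $\gamma_j$-clauses make $\beta_j=(y_j,w_j)$ falsified iff $W_j$ does \emph{not} fire, contributing $w_j=m_j\,\|q^*(W_j)\|$. Summing and using $\|q^*(W_j)\|=-q^*(W_j)$ on $\mathcal{W}_N$, the falsified weight equals $Q(\mathcal{J}) + \sum_{W_j\in\mathcal{W}_N} w_j$, i.e.\ it equals $Q(\mathcal{J})$ up to the fixed, repair-independent constant $\sum_{W_j\in\mathcal{W}_N}w_j$.

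\textbf{Step 3 (Optimization correspondence).} Under the bijection of Step~1, the falsified weight equals $Q(\mathcal{J})$ for the \texttt{COUNT} operators and the strictly increasing affine function $Q(\mathcal{J})+\text{const}$ for \texttt{SUM($A$)}; in either case the falsified weight and $Q(\mathcal{J})$ are simultaneously minimized and simultaneously maximized over the repairs. A maximum satisfying assignment of $\phi$ maximizes the satisfied weight, hence minimizes the falsified weight, hence attains $\min_{\mathcal{J}}Q(\mathcal{J})=\glb(Q,\mathcal{I})$; dually, a minimum satisfying assignment maximizes the falsified weight and attains $\lub(Q,\mathcal{I})$. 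For the \texttt{COUNT} operators the falsified weight equals these bounds exactly, and for \texttt{SUM($A$)} it equals them after subtracting the known constant $\sum_{W_j\in\mathcal{W}_N}w_j$.

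I expect the main obstacle to be the \texttt{SUM($A$)} case with negative summands: MaxSAT weights are required to be positive, so a negative witness cannot simply contribute a negative falsified weight. The crux is to verify that the $y_j$/$\gamma_j$ gadget correctly reverses the firing condition (falsified iff the witness does \emph{not} fire) and that the resulting discrepancy is exactly the constant $\sum_{W_j\in\mathcal{W}_N}w_j$, independent of $\mathcal{J}$, so that it shifts every objective value uniformly and therefore does not change which repair is optimal.
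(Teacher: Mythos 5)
Your proposal is correct and follows essentially the same route as the paper's proof: hard clauses put assignments in correspondence with repairs, and the soft-clause weight tracks $Q(\mathcal{J})$ up to a repair-independent constant (the paper writes this as $Q(\mathcal{J}) = Q(\mathcal{I}_p) - g(\phi,s)$ with $g$ the satisfied weight, which is the complementary form of your falsified-weight identity $Q(\mathcal{J}) + \sum_{W_j\in\mathcal{W}_N}w_j$). Your explicit observation that for \texttt{SUM($A$)} the falsified weight equals the bound only after subtracting the constant $\sum_{W_j\in\mathcal{W}_N}w_j$ is a fair and slightly more careful reading of the gadget than the proposition's literal wording, but it is the same argument.
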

\begin{proof}
Let $Q := \texttt{SELECT COUNT(*) FROM } T(U, A)$, and let $s$ be an assignment of the formula $\phi$ constructed using Reduction \ref{reduction1}. Let $g(\phi, s)$ denote the sum of weights of the soft clauses of $\phi$ satisfied by $s$. Construct a database sub-instance $\mathcal{J}$ from $s$ such that $f_i \in \mathcal{J}$ if and only if $s(x_i) = 1$. The hard clauses of $\phi$ constructed in Step (1) of Reduction \ref{reduction1} encode the condition that exactly one fact from each key-equal group of facts of $\mathcal{I}$ is in $\mathcal{J}$, ensuring that $\mathcal{J}$ is a repair of $\mathcal{I}$. Moreover, the soft clauses of $\phi$ falsified by $s$ have a one-to-one correspondence with the witnesses to $q^*$ in $\mathcal{J}$. Therefore, we have that $g(\phi, s) = |\mathcal{W}| - Q(\mathcal{J})$. Since $|\mathcal{W}|$ does not depend on $\mathcal{J}$, the answer $Q(\mathcal{J})$ is minimized (i.e., $Q(\mathcal{J})$ is a $glb$-answer in $\cons{Q}$ on $\mathcal{I}$) when $s$ is a maximum satisfying assignment. Essentially the same argument works for the case where  $Q := \texttt{SELECT COUNT($A$) FROM } T(U, A)$.
A dual argument to this proves that a repair of $\mathcal{I}$ constructed from a minimum satisfying assignment of $\phi$ realizes the $lub$-answer in $\cons{Q}$ on $\mathcal{I}$.

Now, let $Q := \texttt{SELECT SUM($A$) FROM } T(U, A)$. Construct a repair $\mathcal{J}$ of $\mathcal{I}$ from $s$ by choosing $f_i \in \mathcal{J}$ if and only if $s(x_i) = 1$. Also, construct a database instance $\mathcal{I}_p$ as follows. For every fact $f \in \mathcal{I}$, let $f \in \mathcal{I}_p$ if and only if $f \in W_j$ for some $W_j \in \mathcal{W}_p$. Thus, $Q(\mathcal{I}_p)$ is the sum of values of the aggregation attribute evaluated on the witnesses in $\mathcal{W}_p$. Observe that, for every $W_j \in \mathcal{W}_P$, the clause $\beta_j$ is falsified by $s$ if and only if $W_j \in \mathcal{J}$. Similarly, for every $W_j \in \mathcal{W}_N$, the clause $\beta_j$ is satisfied by $s$ if and only if $W_j \in \mathcal{J}$. Therefore, we have that, 
\begin{align*}
Q(\mathcal{J}) &= Q(\mathcal{I}_p) - \Sigma_{W_j \in \mathcal{W}_P \land s(\beta_j) = 1} (w_j) - \Sigma_{W_j \in \mathcal{W}_N \land s(\beta_j) = 1} (w_j)\\
&=Q(\mathcal{I}_p) - g(\phi,s)
\end{align*}
Since $Q(\mathcal{I}_p)$ does not depend on $\mathcal{J}$, the answer $Q(\mathcal{J})$ is minimized (i.e., $Q(\mathcal{J})$ is a $glb$-answer in $\cons{Q}$ on $\mathcal{I}$) when $s$ is a maximum satisfying assignment. A dual argument to this proves that a repair of $\mathcal{I}$ constructed from a minimum satisfying assignment of $\phi$ realizes the $lub$-answer in $\cons{Q}$ on $\mathcal{I}$.
\end{proof}

\ignore{
\begin{corollary}\label{cor1}
Let $Q := \texttt{SELECT $f$ FROM } T(U, w)$, where $f$ is either \texttt{COUNT(*)} or \texttt{COUNT($w$)}. In a minimum satisfying assignment to the \pmaxsat{} instance $\phi$ constructed using Reduction \ref{reduction1}, the number of falsified clauses is the LUB-answer in $\cons{Q}$ on $\mathcal{I}$.
\end{corollary}
\begin{proof}
It follows from a dual argument to that of Proposition \ref{prop1} that a repair of $\mathcal{I}$ constructed from a minimum satisfying assignment to $\phi$ realizes the $LUB$-answer in $\cons{Q}$ on $\mathcal{I}$.
\end{proof}
}

\begin{example}\label{exmp1}
Let $\mathcal{I}$ be a database instance from Table \ref{toydb}, and $Q$ be the following aggregation query which counts the number of customers who have an account in their own city:\ignore{\smallskip\\
\texttt{\tab[1cm]SELECT COUNT(*)\\\tab[1cm]FROM CUST, ACC, CUSTACC\\\tab[1cm]WHERE CUST.CID = CUSTACC.CID\\\tab[2cm]AND ACC.ACCID = CUSTACC.ACCID\\\tab[2cm]AND CUST.CITY = ACC.CITY}\smallskip\\}
{\smallskip\\
\texttt{\tab[1cm]SELECT COUNT(*)\\\tab[1cm]FROM CUST, ACC, CUSTACC\\\tab[1cm]WHERE CUST.CID = CUSTACC.CID\\\tab[1.5cm]AND ACC.ACCID = CUSTACC.ACCID\\\tab[1.5cm]AND CUST.CITY = ACC.CITY}\smallskip\\}
From Reduction \ref{reduction1}, we construct the following clauses:
\begin{itemize}[leftmargin=0cm]
\item[] $\alpha$-clauses: $x_1, (x_2 \lor x_3), x_4, x_5, x_6, x_7, (x_8 \lor x_9), x_{10}$;
\item[] $\alpha^{mn}$-clauses: $(\neg x_2 \lor \neg x_3), (\neg x_8 \lor \neg x_9)$;
\item[] $\beta$-clauses: $(\neg x_1 \lor \neg x_6, 1), (\neg x_2 \lor \neg x_7, 1), (\neg x_3 \lor \neg x_9, 1)$.
\end{itemize}
Observe that it is okay to omit the variables corresponding to the facts in \texttt{CUSTACC} since \texttt{CUSTACC} does not violate $\Sigma$. A maximum satisfying assignment to the \pmaxsat{} instance $\phi$ constructed from above clauses is $x_i =0$ for $i \in \{2, 9\}$, and $x_i = 1$ otherwise. It falsifies one clause, namely, $(\neg x_1 \lor \neg x_6, 1)$. Similarly, an assignment $x_i =0$ for $i \in \{2, 8\}$, and $x_i = 1$ otherwise is a minimum satisfying assignment to the \pminsat{} instance $\phi$, and it falsifies two clauses, namely, $(\neg x_1 \lor \neg x_6, 1)$ and $(\neg x_3 \lor \neg x_9, 1)$. Thus, $\cons{Q, \mathcal{I}}$ w.r.t.\ range semantics is $[1, 2]$ by Proposition \ref{prop1}.
\end{example}

\ignore{
\begin{proposition}\label{prop2}
Let $Q := \texttt{SELECT SUM($w$) FROM } T(U, w)$. In a maximum satisfying assignment to the \wpmaxsat{} instance $\phi$ constructed using Reduction \ref{reduction1}, the sum of weights of the falsified clauses is the GLB-answer in $\cons{Q}$ on $\mathcal{I}$.
\end{proposition}
\begin{proof}
Let $s$ be a maximum satisfying assignment to the \wpmaxsat{} instance $\phi$ constructed using Reduction \ref{reduction1}. Construct a repair $\mathcal{J}$ of $\mathcal{I}$ from $s$ by choosing $f_i \in \mathcal{J}$ if and only if $s_{x_i} = 1$. Observe that, exactly one clause in $\text{CNF}\overline{(C)}$ is satisfied in every assignment that falsifies $C$, and no clause in $\text{CNF}\overline{(C)}$ is satisfied in every assignment that satisfies $C$. Let $P$ and $N$ denote the sets of clauses of $\phi$ corresponding to witnesses to the query $q^*(A) := \exists U \; T(U, A)$ on which $q^*$ evaluates to a positive and a negative answer respectively. Observe that, clauses in $P$ falsified by $s$ and those in $N$ satisfied by $s$ precisely correspond to the witnesses to $q$ present in $\mathcal{J}$. Therefore, we have that $Q(J) = \sum_{\beta_j \in P \land s(\beta_j) = 0} wt(\beta_j) - \sum_{\beta_j \in N \land s(\beta_j) = 1} wt(\beta_j)$. Since $s$ is a maximum satisfying assignment, it minimizes $Q(J)$; thus $\mathcal{J}$ must be a repair of $\mathcal{I}$ that realizes the GLB-answer in $\cons{Q}$ on $\mathcal{I}$.
\end{proof}

\begin{corollary}\label{cor2}
Let $Q := \texttt{SELECT SUM($w$) FROM } T(U, w)$. In a minimum satisfying assignment to the \wpmaxsat{} instance $\phi$ constructed using Reduction \ref{reduction1}, the sum of weights of the falsified clauses is the LUB-answer in $\cons{Q}$ on $\mathcal{I}$.
\end{corollary}
\begin{proof}
It follows from a dual argument to that of Proposition \ref{prop2} that a repair of $\mathcal{I}$ constructed from a minimum satisfying assignment to $\phi$ realizes the $LUB$-answer in $\cons{Q}$ on $\mathcal{I}$.
\end{proof}
}

\begin{example}\label{exmp2}
Let us again consider the database instance $\mathcal{I}$ from Table \ref{toydb}, and the following aggregation query $Q$:\ignore{\smallskip\\
\texttt{\tab[1cm]SELECT SUM(ACC.BAL)\\\tab[1cm]FROM CUST, ACC, CUSTACC\\\tab[1cm]WHERE CUST.CID = CUSTACC.CID\\\tab[2cm]AND ACC.ACCID = CUSTACC.ACCID\\\tab[2cm]AND CUST.CNAME = `Mary'}\smallskip\\}
{\smallskip\\
\texttt{\tab[1cm]SELECT SUM(ACC.BAL)\\\tab[1cm]FROM CUST, ACC, CUSTACC\\\tab[1cm]WHERE CUST.CID = CUSTACC.CID\\\tab[1.5cm]AND ACC.ACCID = CUSTACC.ACCID\\\tab[1.5cm]AND CUST.CNAME = `Mary'}\smallskip\\}
The hard clauses constructed using Reduction \ref{reduction1} are same as the ones from Example \ref{exmp1}. The rest of the clauses are as follows:
\begin{itemize}[leftmargin=0cm]
\item[] $\beta$-clauses: $(\neg x_2 \lor \neg x_7, 1000)$, $(\neg x_3 \lor \neg x_7, 1000)$, $(\neg x_2 \lor \neg x_8, 1200)$, $(\neg x_3 \lor \neg x_8, 1200)$, $(y_1, 100)$, $(y_2, 100)$.
\item[] $\gamma$-clauses: $(\neg x_2 \lor \neg x_9 \lor y_1)$, $(\neg y_1 \lor x_2)$, $(\neg y_1 \lor x_9)$, $(\neg x_3 \lor \neg x_9 \lor y_2)$, $(\neg y_2 \lor x_3)$, $(\neg y_2 \lor x_9)$.
\end{itemize}
The witnesses $\{f_2, f_9, f_{13}\}$ and $\{f_3, f_9, f_{13}\}$ belong to $\mathcal{W}_N$ because the account balance is -100 in both cases, so we introduce new variables $y_1$ and $y_2$ respectively, and construct hard $\gamma$-clauses as described above. The $\beta$-clauses corresponding to these witnesses are $(y_1, 100)$ and $(y_2, 100)$. We omit $x_{13}$ in all of these clauses since \texttt{CUSTACC} does not violate $\Sigma$. Note that $Q(\mathcal{I}_p) = 4400$. An assignment in which $x_8 = 0$ and $x_9 =1$ is a maximum satisfying assignment to the \pmaxsat{} instance $\phi$ constructed. The sum of satisfied soft clauses by this assignment is 3500 since it satisfies two clauses with weights 1200 each, one with weight 1000 and one with weight 100. Thus, by Proposition \ref{prop1}, we have that $Q(\mathcal{J}) = 4400 - 3500 = 900$ where $\mathcal{J}$ is a repair corresponding to the assignment in consideration. Similarly, setting $x_8 = 1$ and $x_9 = 0$ yields a minimum satisfying assignment in which the sum of satisfied soft clauses is 2200, since it satisfies one clause with weight 1200 and one with weight 1000, indicating that $\cons{Q, \mathcal{I}} = [900, 2200]$.
\end{example}

\subsubsection{Handling \texttt{DISTINCT}}

Let $Q := \texttt{SELECT $f$ FROM } T(U, A)$ be an aggregation query, where $f$ is either \texttt{COUNT(DISTINCT $A$)} or \texttt{SUM(DISTINCT $A$)}. Solving a \pmaxsat{} or a \wpmaxsat{} instance constructed using Reduction \ref{reduction1} may yield incorrect $glb$ and $lub$ answers to $Q$, if the database contains multiple witnesses with the same value for attribute $A$. For example, consider the database instance $\mathcal{I}$ from Table \ref{toydb}, and a query\smallskip\\
\centerline{$Q:= \texttt{SELECT COUNT(DISTINCT ACC.TYPE) FROM ACC}$}\smallskip\\
The correct $glb$ and $lub$-answers in $\cons{Q, \mathcal{I}}$ are both 2, but solutions to the \pmaxsat{} and \pminsat{} instances constructed using Reduction \ref{reduction1} yield both answers as 4. The reason behind this is that the soft clauses $\neg x_6$ and $\neg x_7$ both correspond to the account type Checking, and similarly $\neg x_8$, $\neg x_9$, and $\neg x_{10}$ all correspond to the account type Saving. The hard clauses in the formula ensure that $x_6$, $x_7$, $x_{10}$, and one of $x_8$ and $x_9$ are true, thus counting both Checking and Saving account types exactly twice in every satisfying assignment to the formula. This can be handled by modifying the $\beta$-clauses in Reduction \ref{reduction1} as follows.

Let $\mathcal{A}$ denote a set of distinct answers to the query $q^*(A):= \exists U \; T(U, A)$. For each answer $b \in \mathcal{A}$, let $\mathcal{W}^b$ denote a subset of $\mathcal{W}$ such that for every witness $W \in \mathcal{W}^b$, we have that $q^*(W) = b$. The idea is to use auxiliary variables to construct one soft clause for every distinct answer $b \in \mathcal{A}$, such that it is true if and only if no witness in $\mathcal{W}^b$ is present in a repair corresponding to the satisfying assignment. First, for every witness $W^b_j \in \mathcal{W}^b$, we introduce an auxiliary variable $z^b_j$ that is true if and only if $W^b_j$ is not present in the repair. Then, we introduce an auxiliary variable $v^b$ which is true if and only if all $z^b$-variables are true. These constraints are encoded in the set $H^b$ returned by Algorithm \ref{alg:handle-distinct}, and are forced by making clauses in $H^b$ hard. For every answer $b \in \mathcal{A}$, Algorithm \ref{alg:handle-distinct} also returns one $\beta^b$-clause, which serves the same purpose as the $\beta$-clauses in Reduction \ref{reduction1}. Now, a \pmaxsat{} or a \wpmaxsat{} instance can be constructed by taking in conjunction all $\alpha$-clauses from the key-equal groups, the hard $\gamma$-clauses if any, the hard clauses from all $H^b$-sets, and all soft $\beta^b$-clauses. With this, it is easy to see that a maximum (\textit{or minimum}) satisfying assignment to \pmaxsat{} or \wpmaxsat{} instance give us the $glb$-answer (\textit{or $lub$-answer}) in $\cons{Q}$. This is illustrated in Example \ref{exmp3}.

\begin{algorithm}
\caption{Handling \texttt{DISTINCT}}\label{alg:handle-distinct}
\begin{algorithmic}[1]
\Procedure{handleDistinct}{$\mathcal{W}^b$}
\State \textbf{let} $H^b = \emptyset$\hfill //Empty set of clauses
\For{$W^b_j \in \mathcal{W}^b$}
\State $H^b = H^b \bigcup \Big\{\Big(\neg z^b_j \lor \Big(\underset{f_i \in W^b_j}{\bigvee} \neg x_i\Big)\Big)\Big\}$
\For{$f_i \in W^b_j$}
\State $H^b = H^b \bigcup\; \{(z^b_j \lor x_i)\}$
\EndFor
\EndFor
\State $H^b = H^b \bigcup \Big\{\Big(\neg v^b \lor \Big(\underset{W^b_j \in \mathcal{W}^b}{\bigvee} \neg z^b_j\Big)\Big)\Big\}$
\For{$W^b_j \in \mathcal{W}^b$}
\State $H^b = H^b \bigcup\; \{(\neg v^b \lor z^b_j)\}$
\EndFor
\State \textbf{let} $\beta^b = (v^b, 1)$
\If{($f$ is \texttt{SUM(DISTINCT $A$)})}
\State $\beta^b = (v^b, ||b||)$
\If{$b < 0$} $\beta^b = (\neg v^b, ||b||)$
\EndIf
\EndIf
\State \Return $H^b, \beta^b$
\EndProcedure
\end{algorithmic}
\end{algorithm}

\begin{example}\label{exmp3}
Consider the following aggregation query $Q$ on the database instance $\mathcal{I}$ from Table \ref{toydb}:\smallskip\\
\texttt{\tab[1cm]SELECT COUNT(DISTINCT ACC.TYPE) FROM ACC}\smallskip\\
We have that $\mathcal{A} = \{\text{`Checking'}, \text{`Saving'}\}$. Let us denote these two answers by $a_1$ and $a_2$ respectively. Since every witness to the query consists of a single fact, every $y^a$-variable is equivalent to a single literal, for example, $y^{a_1}_1 \leftrightarrow \neg x_6$ and $y^{a_1}_2 \leftrightarrow \neg x_7$. As a result, it is unnecessary to introduce any $y^a$-variables at all. Thus, we construct the following clauses from Reduction \ref{reduction1} and Algorithm \ref{alg:handle-distinct}:
\begin{itemize}[leftmargin=0cm]
\item[] $\alpha$-clauses: $x_6, x_7, (x_8 \lor x_9), x_{10}$; $\alpha^{mn}$-clauses: $(\neg x_8 \lor \neg x_9)$;
\item[] $H^{a_1}: (x_6 \lor x_7 \lor v^{a_1}), (\neg v^{a_1} \lor \neg x_6), (\neg v^{a_1} \lor \neg x_7)$;
\item[] $H^{a_2}: (x_8 \lor x_9 \lor x_{10} \lor v^{a_2}), (\neg v^{a_2} \lor \neg x_8), (\neg v^{a_2} \lor \neg x_9), (\neg v^{a_2} \lor \neg x_{10})$;
\item[] $\beta$-clauses: $(v^{a_1}, 1), (v^{a_2}, 1)$
\end{itemize}
The maximum and minimum satisfying assignments to the \pmaxsat{} and \pminsat{} instances constructed using these clauses falsify both $\beta$-clauses, since $\cons{Q, \mathcal{I}}$ w.r.t.\ range semantics is $[2, 2]$.
\end{example}

\ignore{
\subsubsection{Aggregation with \texttt{MIN}$(A)$, \texttt{MAX}$(A)$}
\begin{proposition}\label{prop3}
Let $Q := \texttt{SELECT MIN($A$) FROM } T(U, A)$. Then, $Q(\mathcal{I})$ is the GLB-answer in $\cons{Q}$ on $\mathcal{I}$.
\end{proposition}
\begin{proof}
Observe that, no repair of $\mathcal{I}$ can produce an answer to $Q$ that is strictly smaller than the one returned by $Q(\mathcal{I})$, and there exists a repair $\mathcal{J}$ of $\mathcal{I}$ such that $Q(\mathcal{J}) = Q(\mathcal{I})$.
\end{proof}

\begin{corollary}\label{cor3}
Let $Q := \texttt{SELECT MAX($A$) FROM } T(U, A)$. Then, evaluating $Q(\mathcal{I})$ is the LUB-answer in $\cons{Q}$ on $\mathcal{I}$.
\end{corollary}
\begin{proof}
It follows from a dual argument of Proposition \ref{prop3}.
\end{proof}
In what follows, we give an iterative algorithm to compute the other bound in $\cons{Q}$. The idea of the algorithm is as follows. Suppose $Q := \texttt{SELECT MIN($A$) FROM } T(U, A)$. We say that $q^*(A) := \exists U\; T(U, A)$ is the underlying union of conjunctive queries of $Q$. For some answer $b \in q(\mathcal{I})$, if we construct a \sat{} instance $\phi$ using the $\alpha$-clauses from the key-equal groups of the database and the $\beta$-clauses using all minimal witnesses to $q$ on which $q$ evaluates to an answer strictly smaller than $b$, then $\phi$ is satisfiable if and only if $b$ is smaller than the LUB-answer in $\cons{Q}$. Thus, we can find the LUB-answer using a simple binary search as shown in Algorithm \ref{alg:lub-min-iterative}, where we construct and solve one \sat{} instance at every iteration.

\begin{algorithm}[ht]
\caption{LUB-answer to \texttt{MIN}$(A)$ via Iterative SAT}
\begin{flushleft}
Let $\mathcal{G}$ be the set of key-equal groups of facts of the inconsistent database instance $\mathcal{I}$. For each fact $f_i$ of $\mathcal{I}$, introduce a boolean variable $x_i$. Let $\mathcal{A} = \{A_1, \cdots, A_{|\mathcal{A}|}\}$ be the set of distinct answers to a conjunctive query $q^*(A) := \exists U \; T(U, A)$ on $\mathcal{I}$, such that $A_i < A_{i+1}$ for $1 \leq i < |\mathcal{A}|$. Let us denote by $\mathcal{W}_i$ the set of minimal witnesses to $q$ on which $q$ evaluates to the answer $A_i$.
\end{flushleft}

\noindent\makebox[\linewidth]{\rule{\columnwidth}{0.4pt}} 

\begin{algorithmic}[1]
\Procedure{LubAnswerSAT}{}
\State \textbf{let} $S_1 = \emptyset, S_2 = \emptyset$ \hfill // Empty sets of clauses
\State \textbf{let} $l = 0, r = |\mathcal{A}|, m, ans$

\For{$G_j \in \mathcal{G}$}
\State $S_1 = S_1 \bigcup \bigg\{\bigg(\underset{f_i \in G_j}{\bigvee} x_i\bigg)\bigg\}$
\EndFor

\While{$(l \leq r)$}
\State $m = (l + r)/2$
\State $S_2 = \emptyset$
\For{$j = 1$ to $m$}
\State $S_2 = S_2 {\underset{W \in \mathcal{W}_j}{\bigcup}}\Bigg\{\bigg({\underset{f_i \in W}{\bigvee}}\neg x_i\bigg)\Bigg\}$
\EndFor
\State \textbf{end for}
\State \textbf{let} $\phi = \underset{C \in (S_1 \cup S_2)}{\bigwedge} (C)$
\If{($\phi$ is satisfiable)}
\State $l = m + 1$
\Else
\State $ans = m$
\State $r = m - 1$
\EndIf
\EndWhile
\State \textbf{end while}
\State \Return $A_{ans}$
\EndProcedure
\end{algorithmic}\label{alg:lub-min-iterative}
\end{algorithm}

\begin{proposition}\label{prop4}
Let $Q := \texttt{SELECT MIN($A$) FROM } T(U, A)$ be an aggregation query, and $\mathcal{I}$ be a database instance. Algorithm \ref{alg:lub-min-iterative} returns the LUB-answer in $\cons{Q, \mathcal{I}}$.
\end{proposition}
\begin{proof}
Similar to the previous reductions, Algorithm \ref{alg:lub-min-iterative} introduces variables corresponding to the facts of the database to construct a \sat{} instance $\phi$. Hence, a repair $\mathcal{J}$ of $\mathcal{I}$ can be constructed from every satisfying assignment $s$ of $\phi$ by arbitrarily choosing exactly one fact $f_i$ from each key-equal group of $\mathcal{I}$ such that $s(x_i) = 1$.
Let $\mathcal{A} = \{A_1, \cdots, A_{lub}, \cdots, A_{|\mathcal{A}|}\}$ denote the set of distinct answers to a conjunctive query $q^*(A) := \exists U \; T(U, A)$ on $\mathcal{I}$, where $A_{lub}$ is the LUB-answer to $Q$ on $\mathcal{I}$. For a witness $W$ to $q$, a clause $({\underset{f_i \in W}{\lor}}\neg x_i)$ in the set $S_2$ is satisfied by an assignment $s$ if and only if $W$ is not present in any repair corresponding to $s$. At every iteration, formula $\phi$ in Algorithm \ref{alg:lub-min-iterative} consists of all clauses from $S_1$ in conjunction to the clauses corresponding to all witnesses to $q$ on which $q$ evaluates to a smaller number than $A_m$ for some $m$. Therefore, $\phi$ satisfiable if and only if $A_m < A_{lub}$. It is easy to see that, for the smallest $m$ for which the formula becomes unsatisfiable, we have that $A_m = A_{lub}$. Algorithm \ref{alg:lub-min-iterative} performs a simple binary search to find such $m$.
\end{proof}

\begin{corollary}\label{cor4}
Let $Q := \texttt{SELECT MAX($A$) FROM } T(U, A)$ be an aggregation query, and $\mathcal{I}$ be a database instance. Algorithm \ref{alg:lub-min-iterative} returns the GLB-answer in $\cons{Q, \mathcal{I}}$.
\end{corollary}
\begin{proof}
It follows from a dual argument to that of Proposition \ref{prop4}. The only change we need to make in Algorithm \ref{alg:lub-min-iterative} is to provide input set $\mathcal{A}$ in descending order instead of ascending.
\end{proof}
}
\subsection{Answering Queries with \texttt{MIN} and \texttt{MAX}}\label{sec:min-max}
Let $\mathcal{R}$ be a database schema with one key constraint per relation, and $Q$ be the aggregation query\smallskip\\
\centerline{$Q :=\; \texttt{SELECT } f \texttt{ FROM } T(U, A),$}\smallskip\\ where $f$ is one of the operators \texttt{MIN($A$)} and \texttt{MAX($A$)}, and $T(U, A)$ is a relation expressed as a union of conjunctive queries over $\mathcal{R}$. The semantics of the range consistent answers to aggregation queries with \texttt{MIN} and \texttt{MAX} operators are similar to aggregation queries with \texttt{SUM} or \texttt{COUNT} operators, but here we need to address one additional special case. We illustrate this special case using Example \ref{exmp:special-case-min-max}.

\begin{example}\label{exmp:special-case-min-max}
Consider the database instance $\mathcal I$ from Table \ref{table:run} and two aggregation queries $Q_1$ and $Q_2$ as follows.\smallskip

\texttt{$Q_1:=$ SELECT SUM(ACCOUNTS.BAL) FROM ACCOUNTS\\\tab[1.25cm]WHERE ACCOUNTS.CITY = `SF'}\smallskip

\texttt{$Q_2:=$ SELECT MIN(ACCOUNTS.BAL) FROM ACCOUNTS\\\tab[1.25cm]WHERE ACCOUNTS.CITY = `SF'}

It is clear that the range consistent answers to $\cons{Q_1, \mathcal I} = [-100, 0]$. The \textit{lub}-answer of 0 in $\cons{Q_1, \mathcal I}$ comes from a repair on which there is no account in the city of SF, and therefore the \texttt{SUM} function returns 0. For $Q_2$, however, the range consistent answers are unclear because the \texttt{MIN} function is not defined for empty sets.
\end{example}

In such scenarios, various different semantics can be considered. One natural semantics is that if there exists a repair on which the underlying conjunctive query evaluates to an empty set, we could say that there is no consistent answer to the aggregation query. Another one could be to return the interval [\textit{glb}, \textit{lub}] of values that come from the repairs on which the underlying conjunctive query evaluates to a non-empty set of answers, and additionally return the information about the existence of the repair on which the underlying conjunctive query returns the empty set of answers. The reductions we give in this Section can be used regardless of which of the two above-mentioned semantics is chosen.

In what follows, we first show that the \textit{glb}-answer to an aggregation query with the \texttt{MIN($A$)} operator can be computed in polynomial time in the size of the original inconsistent database instance $\mathcal I$ (Proposition \ref{prop:trivial-bound-min}). We then give an iterative \sat{}-based approach to compute the \textit{lub}-answer to an aggregation query with the \texttt{MIN($A$)} operator. We do not explicitly state methods to obtain the range consistent answers aggregation queries with the \texttt{MAX($A$)} operator since it is straightforward that the \textit{lub}-answer for \texttt{MIN($A$)} is a dual of the \textit{glb}-answer for \texttt{MAX($A$)} in the sense that computing the \textit{lub}-answer for the \texttt{MIN($A$)} operator yields the same result as negating all values of the aggregation attribute $A$ in the database and then computing the \textit{glb}-answer for the \texttt{MAX($A$)} operator.

\begin{proposition}\label{prop:trivial-bound-min}
Let $\mathcal R$ be a database schema, $\mathcal I$ an $\mathcal R$-instance, and $Q$ the aggregation query $\texttt{SELECT MIN($A$) FROM } T(U, A)$. Let $q_1$ be the union $q_1(A):= \exists U\; T(U, A)$ of conjunctive queries and $W_{\textit{glb}}$ be the witness to $q_1$ on $\mathcal I$ such that no two facts in $W_{\textit{glb}}$ are key-equal, and there is no $W'$ such that $q(W') < q(W_{\textit{glb}})$ and no two facts in $W'$ are key-equal. Then, $q_1(W_{\textit{glb}})$ is the \textit{glb}-answer in $\cons{Q, \mathcal I}$.
\end{proposition}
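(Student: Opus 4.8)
The plan is to prove the two inequalities that together certify $q_1(W_{\textit{glb}})$ as the greatest lower bound of the set of possible \texttt{MIN} values: first that $q_1(W_{\textit{glb}})$ is a lower bound on $Q(\mathcal{J})$ over all repairs $\mathcal{J}$ (on which $q_1$ is non-empty), and second that this bound is actually attained by some repair. The linchpin of both directions is a characterization of exactly which witnesses can appear inside a repair, which I would isolate as a standalone claim: a witness $W$ to $q_1$ is contained in some repair of $\mathcal{I}$ if and only if no two facts of $W$ are key-equal.

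First I would establish this characterization. The forward direction is immediate, since a repair is consistent and hence contains at most one fact from each key-equal group, so any witness it contains has no two key-equal facts. For the backward direction I would construct a repair containing a given key-consistent $W$: for every key-equal group $G$ of $\mathcal{I}$, if $G$ meets $W$ then (because $W$ has no two key-equal facts, exactly one fact of $W$ lies in $G$) place that fact into $\mathcal{J}$; otherwise place an arbitrary fact of $G$ into $\mathcal{J}$. The resulting $\mathcal{J}$ selects exactly one fact per key-equal group, so it is a repair, and it contains all of $W$ by construction.

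With the characterization in hand, the lower bound is short. Fix any repair $\mathcal{J}$ with $q_1(\mathcal{J}) \ne \emptyset$, and let $W \subseteq \mathcal{J}$ be a witness realizing $Q(\mathcal{J}) = q_1(W)$, the minimum aggregation value among witnesses present in $\mathcal{J}$. Since $W \subseteq \mathcal{J}$ and $\mathcal{J}$ is consistent, $W$ has no two key-equal facts, so by the defining minimality of $W_{\textit{glb}}$ we get $q_1(W) \ge q_1(W_{\textit{glb}})$, hence $Q(\mathcal{J}) \ge q_1(W_{\textit{glb}})$. For attainment, I would apply the backward direction of the characterization to $W_{\textit{glb}}$ itself to obtain a repair $\mathcal{J}^\ast$ containing $W_{\textit{glb}}$; then $q_1(W_{\textit{glb}})$ is among the aggregation values witnessed in $\mathcal{J}^\ast$, so $Q(\mathcal{J}^\ast) \le q_1(W_{\textit{glb}})$, and combined with the lower bound this gives $Q(\mathcal{J}^\ast) = q_1(W_{\textit{glb}})$. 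Together these show that $q_1(W_{\textit{glb}})$ is the \textit{glb}-answer in $\cons{Q, \mathcal{I}}$.

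The step I expect to require the most care is the backward construction of the characterization, specifically the worry that filling the remaining key-equal groups with arbitrary facts might accidentally create a witness of smaller aggregation value than $q_1(W_{\textit{glb}})$, which would push $Q(\mathcal{J}^\ast)$ below the claimed glb. This potential difficulty is in fact neutralized by the lower bound argument itself: every witness inside $\mathcal{J}^\ast$ is key-consistent and therefore has value $\ge q_1(W_{\textit{glb}})$, so no such smaller witness can exist. I would also flag the boundary assumption $q_1(\mathcal{I}) \ne \emptyset$ (equivalently, that at least one key-consistent witness exists, so that $W_{\textit{glb}}$ is well-defined and some repair has a non-empty answer); under the empty-answer semantics discussed above for \texttt{MIN}, repairs $\mathcal{J}$ with $q_1(\mathcal{J}) = \emptyset$ are simply excluded from the range.
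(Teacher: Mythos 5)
Your proposal is correct and follows essentially the same route as the paper's proof: the paper likewise observes that any witness with value below $q_1(W_{\textit{glb}})$ must contain two key-equal facts and so cannot survive into any repair, and that $W_{\textit{glb}}$ itself, being key-consistent, is contained in some repair, which yields attainment. You simply make explicit the repair construction and the check that the arbitrarily completed repair cannot harbor a smaller witness, details the paper leaves implicit.
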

\begin{proof}
For every witness $W'$ to $q_1$ on $\mathcal I$ such that $q_1(W') < q_1(W_{\textit{glb}})$, we have that no repair of $\mathcal I$ contains $W'$ because $W'$ contains at least two key-equal facts. Moreover, since no two facts in $W_{\textit{glb}}$ are key-equal, there exists a repair $\mathcal J$ of $\mathcal I$ such that $W_{\textit{glb}} \in \mathcal J$. Therefore, $q_1(W_{\textit{glb}})$ must be the smallest possible answer to $Q$ on $\mathcal I$, i.e., the \textit{glb}-answer in $\cons{Q, \mathcal I}$. Since the number of witnesses to $q_1$ is polynomial in the size of $\mathcal I$, a desired witness $W_{\textit{glb}}$ can be obtained efficiently from the result of evaluating $q_1$ on $\mathcal I$.
\end{proof}
\ignore{
\begin{corollary}\label{cor:trivial-bound-max}
Let $\mathcal R$ be a database schema, $\mathcal I$ an $\mathcal R$-instance, and $Q$ the aggregation query $\texttt{SELECT MAX($A$) FROM } T(U, A)$. Let $q_1$ be the union $q_1(A):= \exists U\; T(U, A)$ of conjunctive queries and $W_{\textit{lub}}$ be the witness to $q_1$ on $\mathcal I$ such that no two facts in $W_{\textit{lub}}$ are key-equal, and there is no $W'$ such that $q(W') > q(W_{\textit{lub}})$ and no two facts in $W'$ are key-equal. Then, $q_1(W_{\textit{lub}})$ is the \textit{lub}-answer in $\cons{Q, \mathcal I}$.
\end{corollary}
\begin{proof}
It follows from a dual argument to that of Proposition \ref{prop:trivial-bound-min}.
\end{proof}}

To compute the range consistent answers to aggregation queries with \texttt{MIN($A$)} and \texttt{MAX($A$)} operators, we opt for an iterative \sat{} solving approach. In what follows, we formalize the construction of the \sat{} instance for the first iteration (Construction \ref{const:min-iterative-sat}) and give Algorithm \ref{alg:min-iterative-sat} that computes the \textit{lub}-answer in $\cons{Q, \mathcal I}$ by constructing and solving \sat{} instances in subsequent iterations.

\begin{construction}\label{const:min-iterative-sat}
Given an $\mathcal R$-instance $\mathcal I$, construct a CNF formula $\phi$ as follows. For each fact $f_i$ of $\mathcal I$, introduce a boolean variable $x_i$. Let $\mathcal{G}$ be the set of key-equal groups of facts of $\mathcal I$, and $\mathcal W = \{W_1, \cdots, W_m\}$ denote the set of minimal witnesses to a conjunctive query $q$ on $\mathcal I$, where $q(w) := \exists \vec{u}\; T(\vec{u}, w)$. Assume that the set $\mathcal W$ is sorted in descending order of the answers, i.e., for $1 \leq i < m$, we have that $q(W_i) \geq q(W_{i+1})$.
\begin{itemize}[noitemsep]
\item For each $G_j \in \mathcal{G}$, construct a clause $\alpha_j = \underset{f_i \in G_j}{\lor} x_i$.
\item Construct a CNF formula $\phi = \overset{|\mathcal{G}|}{\underset{j=1}{\land}}\alpha_j$.
\end{itemize}
\end{construction}

\begin{algorithm}[ht]
\caption{Computing the \textit{lub}-answer in $\cons{Q, \mathcal I}$ for \texttt{MIN} via Iterative \sat{}}\label{alg:min-iterative-sat}
{\fontsize{11}{12}\selectfont
\begin{algorithmic}[1]
\Procedure{LubAnswer-IterativeSAT}{$\phi, \mathcal W$}
\State \textbf{let} $v = q(W_1), j = 1$
\While{$j \leq |\mathcal W|$}
\If{$v = q(W_j)$}
\State \textbf{let} $\phi = \phi \land \Big({\underset{f_i \in W_j}{\lor}}\neg x_i\Big)$
\State \textbf{let} $j = j + 1$
\Else
\If{\textsc{UNSAT}$(\phi)$}
\State \textbf{return} $q(W_{j-1})$
\Else \State \textbf{let} $v = q(W_j)$
\EndIf
\EndIf
\EndWhile
\State \textbf{return} $q(W_{|\mathcal W|})$
\EndProcedure
\end{algorithmic}}
\end{algorithm}
\ignore{
\begin{proof}
\textit{(Correctness).} At the beginning of Algorithm \ref{alg:min-iterative-sat}, the CNF formula $\phi$ is trivially satisfiable since it only contains positive literals. At each iteration of the while-loop, the clauses corresponding to the witnesses to the smallest remaining potential answer are attached in conjunction with $\phi$. If the formula $\phi$ remains satisfiable after adding the clauses corresponding to all minimal witnesses with a potential answer $a$, then there exists a repair $\mathcal J$ of $\mathcal I$ such that $a \notin q(\mathcal J)$, and also for all potential answers $a' \leq a$, we have that $a' \notin q(\mathcal J)$. If the formula becomes unsatisfiable after adding the clauses corresponding to all witnesses with a potential answer $a$, it means that there exists no repair $\mathcal J$ of $\mathcal I$ such that $a \notin q(\mathcal J)$ and $a' \notin q(\mathcal J)$ for all $a' \leq a$. Since the clauses are added in the ascending order of the answers, the answer for which $\phi$ becomes unsatisfiable for the first time must be the \textit{lub}-answer in $\cons{Q}$. In essence, Algorithm \ref{alg:min-iterative-sat} works like a linear search on a sorted array.

\textit{(Termination).} Yet to be written.
\end{proof}

\ignore{\begin{table}
\begin{tabular}{|c|c|c|c|c|c|}
 \hline \#& Reduction to & \# of clauses&\# of vars &\# of solver calls & Downside\\
 \hline
 1   & W.\ P.\ MaxSAT    &$O(n^k)$& $O(n)$&1&Weights increase exponentially\\
2&   Itr.\ SAT (linear search) &$O(n^k)$ &$O(n)$&$\leq n^k$& Too many SAT calls\\
 3 &Itr.\ SAT (binary search) &$O(n^k)$&$O(n)$&$\leq k \cdot log_2 (n)$& -\\
 4 &P.\ MaxSAT &$O(n^k)$&$O(n^{2k})$&1& Too many vars?\\
 \hline
\end{tabular}
\end{table}
}
}
\begin{proposition}\label{prop4}
Let $Q := \texttt{SELECT MIN($A$) FROM } T(U, A)$ be an aggregation query, and $\mathcal{I}$ be a database instance. Algorithm \ref{alg:min-iterative-sat} returns the \textit{lub}-answer in $\cons{Q, \mathcal{I}}$.
\end{proposition}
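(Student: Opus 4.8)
The plan is to reduce the correctness of Algorithm~\ref{alg:min-iterative-sat} to a single satisfiability-versus-avoidance equivalence and then read off the \textit{lub}-answer as the first answer value at which this equivalence breaks. Throughout I write $q$ for the underlying union of conjunctive queries, so that for a repair $\mathcal{J}$ with $q(\mathcal{J})$ nonempty we have $Q(\mathcal{J}) = \min\{q(W) : W \text{ a witness}, W \subseteq \mathcal{J}\}$.

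First I would prove the lemma that underlies every iteration. For a set $\mathcal{S}$ of witnesses, let $\phi_{\mathcal{S}}$ be the formula of Construction~\ref{const:min-iterative-sat} conjoined with the clause $\bigvee_{f_i \in W}\neg x_i$ for each $W \in \mathcal{S}$. I claim $\phi_{\mathcal{S}}$ is satisfiable if and only if some repair $\mathcal{J}$ of $\mathcal{I}$ contains no witness from $\mathcal{S}$. The backward direction is routine: the indicator assignment of $\mathcal{J}$ selects exactly one fact per key-equal group, satisfying every $\alpha_j$, and each avoided witness has a deselected fact, satisfying the added clause. The forward direction is the delicate one, and I expect it to be the \textbf{main obstacle}, because Construction~\ref{const:min-iterative-sat} emits only the ``at-least-one'' clauses $\alpha_j$ and deliberately omits the ``at-most-one'' clauses used in Reduction~\ref{reduction1}. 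Hence a satisfying assignment $s$ need not describe a repair, only a fact set $T = \{f_i : s(x_i)=1\}$ that meets every key-equal group. The resolution is a downward-monotonicity observation: a witness lies in $T$ only if all of its facts are selected, so thinning $T$ to any repair $\mathcal{J} \subseteq T$ (one fact per group) can only destroy witnesses, never create them; thus $\mathcal{J}$ still avoids every member of $\mathcal{S}$ that $T$ avoided. Stating this cleanly is precisely why the omission of the at-most-one clauses is sound here.

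With the lemma in hand I would run the main argument through a loop invariant, processing the witnesses grouped by answer value from the smallest value upward. Since $Q(\mathcal{J}) \ge a$ holds exactly when $\mathcal{J}$ avoids every witness of value below $a$, the lemma yields: just before the algorithm tests satisfiability upon reaching a new, larger value $v'$, the current $\phi$ forbids exactly the witnesses of value below $v'$, so $\phi$ is satisfiable iff some repair attains $Q(\mathcal{J}) \ge v'$. Consequently the first $v'$ at which the test returns \textsc{UNSAT} certifies that no repair reaches $v'$, while the immediately preceding successful test exhibits a repair that avoids all witnesses of value below the previous value and hence reaches it; together these pin the \textit{lub}-answer to that previous value, which is exactly the returned quantity $q(W_{j-1})$. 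Monotonicity of the growing clause set guarantees a single satisfiable-to-unsatisfiable transition, so this first $v'$ is well defined.

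Finally I would dispatch termination and the degenerate case. Termination holds because the only iteration that does not advance $j$ is an \textsc{else} branch with a satisfiable test, which merely reassigns $v := q(W_j)$ and is then immediately followed by an \textsc{if} branch that increments $j$; so between consecutive increments there is at most one stalling iteration, and $j$ is bounded by $|\mathcal{W}|$. If every test succeeds and the loop exhausts $\mathcal{W}$, then all witnesses can be simultaneously avoided, which is exactly the situation in which some repair yields an empty answer to $q$ (the special case discussed before the proposition); the algorithm returns the top value $q(W_{|\mathcal{W}|})$, to be interpreted under whichever empty-answer semantics one has fixed. The \textit{glb}-answer is already supplied by Proposition~\ref{prop:trivial-bound-min}, and the treatment of \texttt{MAX($A$)} follows from the negation duality noted before the proposition.
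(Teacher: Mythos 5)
Your proposal is correct and follows essentially the same route as the paper's proof: the $\alpha$-clauses together with the negated-witness clauses encode ``some repair avoids exactly the witnesses forbidden so far,'' the clauses are added in increasing order of the answer value, and the first transition to \unsat{} pins the \textit{lub}-answer to the previously tested value. You additionally make explicit two points the paper leaves implicit --- that omitting the at-most-one clauses is sound because thinning a hitting set down to a repair can only destroy witnesses, and that the loop terminates --- which strengthens rather than changes the argument.
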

\begin{proof}
The $\alpha$-clauses of $\phi$ make sure that a repair $\mathcal{J}$ of $\mathcal{I}$ can be constructed from every assignment $s$ of $\phi$ that satisfies the $\alpha$-clauses, by arbitrarily choosing exactly one fact $f_i$ from each key-equal group of $\mathcal{I}$ such that $s(x_i) = 1$.
Let $\mathcal{A} = \{A_1, \cdots, A_{lub}, \cdots, A_{|\mathcal{A}|}\}$ denote the set of distinct answers to a conjunctive query $q(w) := \exists \vec{u} \; T(\vec{u}, w)$ on $\mathcal{I}$, where $A_{lub}$ is the \textit{lub}-answer to $Q$ on $\mathcal{I}$. For a witness $W$ to $q$, a clause $({\underset{f_i \in W}{\lor}}\neg x_i)$ is satisfied by an assignment $s$ if and only if $W$ is not present in any repair constructed from $s$. At iteration $j$ of the while-loop, if the formula $\phi$ is checked for satisfiability (line 8 of Algorithm \ref{alg:min-iterative-sat}), the formula contains the $\alpha$-clauses corresponding to the key-equal groups of $\mathcal I$ in conjunction to all clauses corresponding to the minimal witnesses to $q$ on which the $q$ evaluates to an answer strictly smaller than $q(W_j)$. At this point, if the formula $\phi$ is satisfiable, then there exists a repair $\mathcal J$ of $\mathcal I$ such that $q(W_{j-1}) \notin q(\mathcal J)$, and also for all potential answers $A_i \leq q(W_{j-1})$, we have that $A_i \notin q(\mathcal J)$. On the other hand, if the formula is unsatisfiable, then there exists no repair $\mathcal J$ of $\mathcal I$ such that $q(W_{j-1}) \notin q(\mathcal J)$ and $A_i \notin q(\mathcal J)$ for all $A_i \leq q(W_{j-1})$. Since the clauses are added in the ascending order of the answers, we have that $\phi$ satisfiable at iteration $j$ if and only if $q(W_{j-1}) < A_{lub}$. Therefore, if $\phi$ becomes unsatisfiable for the first time at iteration $j$, it must be the the case that $q(W_{j-1})$ is the \textit{lub}-answer in $\cons{Q, \mathcal I}$.
\end{proof}

\subsubsection*{Why not a Binary Search?}~
In essence, Algorithm \ref{alg:min-iterative-sat} works like a linear search on a sorted array. It may sound appealing to perform the binary search instead of the linear search for obvious reasons. Clearly, the \sat{} solver will only have to solve $O(log_2\; |\mathcal A|)$ instances of \sat{} instead of $O(|\mathcal A|)$ instances of \sat{}. The problem with this approach is the following. Observe that, on an average, half of the \sat{} instances that the solver needs to solve in the binary search approach will be unsatisfiable. In the linear search approach, however, all but the last instance given to the solver are satisfiable. Typically, the proofs of unsatisfiability produced by the \sat{} solvers are significantly large compared to the proofs of satisfiability as the unsatisfiability of an instance needs to be proven with a refutation tree that can be exponential in size of the formula, while just one satisfying assignment is enough to prove the satisfiability of an instance. As a result, \sat{} solvers typically take considerably long amounts of time to solve unsatisfiable instances but they are very quick on most real-world satisfiable instances. Therefore, in practice, the linear search often works better than the binary search.

\subsection{Answering Queries with Grouping}\label{sec:with-grouping}
Let $Q$ be the aggregation query
\smallskip\\\centerline{$Q :=\; \texttt{SELECT }Z, f \texttt{ FROM } T(U, Z, w) \texttt{ GROUP BY }Z,$}\smallskip\\
where $f$ is one of \texttt{COUNT}$(*)$, \texttt{COUNT}$(A)$, \texttt{SUM}$(A)$, \texttt{MIN}$(A)$, or \texttt{MAX}$(A)$, and $T(U, A)$ is a relation expressed by a union of conjunctive queries on $\mathcal{R}$. We refer to the attributes in $Z$ as the grouping attributes. For aggregation queries with grouping, it does not seem feasible to reduce $\cons{Q}$ to a single \pmaxsat{} or a \wpmaxsat{} instance because for each group of consistent answers, the GLB-answer and the LUB-answer may realize in different repairs of the inconsistent database. To illustrate this, consider the database from Table \ref{toydb} and a query $Q:= \texttt{SELECT COUNT(*) FROM CUST GROUP BY CUST.CITY}$. Notice that, the GLB-answers (LA, 2) and (SF, 1) in $\cons{Q}$ come from two different repairs of relation \texttt{CUST}, namely, $\{f_1, f_3, f_4, f_5\}$ and $\{f_1, f_2, f_4, f_5\}$ respectively.
However, the reductions from the preceding section can be used to compute the bounds to each consistent group of answers independently. For a given aggregation query $Q$ with grouping, we first compute the consistent answers to an underlying conjunctive query $q(Z) := \exists U, A\; T(U, Z, A)$. Then, for each answer $b$ in $\cons{q}$, we compute the GLB and LUB-answers to the query $Q' :=\; \texttt{SELECT } f \texttt{ FROM } T(U, Z, A) \land (Z = b)$ via \pmaxsat{} or \wpmaxsat{} solving as shown in Algorithm \ref{alg:grouping}.

\begin{algorithm}[h]
\caption{Consistent Answers to Queries With Grouping}
\begin{flushleft}
Let $\mathcal{I}$ be an inconsistent database instance, and $Q$ be an aggregation query of the form $Q :=\; \texttt{SELECT }Z, f \texttt{ FROM } T(U, Z, A) \texttt{ GROUP BY }Z$.
\end{flushleft}

\noindent\makebox[\linewidth]{\rule{\columnwidth}{0.4pt}} 

\begin{algorithmic}[1]
\Procedure{ConsAggGrouping}{$Q$}
\State \textbf{let} $Ans = \emptyset$
\State \textbf{let} $q(Z) := \exists U, A\; T(U, Z, A)$
\State \textbf{let} $\mathcal{A}_c = \cons{q, \mathcal{I}}$

\For{$b \in \mathcal{A}_c$}
\State \textbf{let} $Q' :=\; \texttt{SELECT } f \texttt{ FROM } T(U, Z, A) \land (Z = b)$
\State \textbf{let} $[GLB_A, LUB_A] = \cons{Q', \mathcal{I}}$
\State $A_{ans} = A_{ans} \cup (b, [GLB_A, LUB_A])$
\EndFor
\State \Return $A_{ans}$
\EndProcedure
\end{algorithmic}\label{alg:grouping}
\end{algorithm}

As noted earlier, the bags of witnesses used in the preceding reductions capture the provenance of unions of conjunctive queries in the provenance polynomials model of \cite{DBLP:conf/pods/GreenKT07,DBLP:journals/sigmod/KarvounarakisG12}. In \cite{DBLP:conf/pods/AmsterdamerDT11}, it was shown that a stronger provenance model is needed to express the provenance of aggregation queries, a model that uses a tensor product combining annotations with values. A future direction of research is to investigate whether this stronger provenance model can be used to produce more direct reductions of the range consistent answers to SAT.

\section{Beyond Key Constraints}\label{beyond-key-constraints}
Key constraints and functional dependencies are important special cases of \emph{denial constraints} (DCs), which are expressible by first-order formulas of the form
$\forall x_1, ..., x_n \neg (\varphi(x_1, ..., x_n) \land \psi(x_1, ..., x_n)),$
or, equivalently,
$\forall x_1, ..., x_n  (\varphi(x_1, ..., x_n) \rightarrow \neg \psi(x_1, ..., x_n)),$
where $\varphi(x_1, ..., x_n)$ is a conjunction of atomic formulas and $\psi(x_1, ..., x_n)$ is a conjunction of  expressions of the form $(x_i\; \mbox{op}\; x_j)$ with each $\mbox{op}$  a built-in predicate, such as $=, \neq, <, >, \leq, \geq$.
In words, a denial constraint prohibits a set of tuples that satisfy certain conditions from appearing together in a database instance. If $\Sigma$ is a fixed finite set of denial constraints and $Q$ is an aggregation query without grouping, then the following problem is in coNP: given a database instance $\mathcal{I}$ and a number $t$, is $t$ the \textit{lub}-answer \textit{(or the \textit{glb}-answer)} in $\cons{Q,\mathcal{I}}$ w.r.t.\ $\Sigma$? 
This is so because to check that $t$ is not the \textit{lub}-answer \textit{(or the \textit{glb}-answer)}, we guess a repairs $\mathcal{J}$ of $\mathcal{I}$ and verify that $t > Q(\mathcal{J})$ \textit{(or $t > Q(\mathcal{J})$)}. 
In all preceding reductions, the $\alpha$-clauses capture the inconsistency in the database arisen due to the key violations to enforce every satisfying assignment to uniquely correspond to a repair of the initial inconsistent database instance. Importantly, the $\alpha$-clauses are independent of the input query. In what follows, we provide a way to construct clauses to capture the inconsistency arising due to the violations of denial constraints. Thus, replacing the $\alpha$-clauses in the reductions from Section \ref{main-body} by the ones provided below allows us to compute consistent answers over databases with a fixed finite set of arbitrary denial constraints. The reduction relies on the notions of \textit{minimal violations} and \textit{near-violations} to the set of denial constraints that we introduce next.

$\bullet$~ 
Assume that $\Sigma$ is a set of denial constraints,  $\mathcal{I}$ is an $\mathcal{R}$-instance, and  $S$ is a sub-instance of $\mathcal{I}$. We say that $S$ is a \textit{minimal violation} to $\Sigma$, if $S \not\models \Sigma$  and for every set $S' \subset S$, we have that $S' \models \Sigma$.

\smallskip
$\bullet$~ Let  $\Sigma$ be a set of denial constraints,  $\mathcal{I}$  an $\mathcal{R}$-instance,  $S$  a sub-instance of $\mathcal{I}$, and  $f$  a fact of $\mathcal{I}$. We say that $S$ is a \textit{near-violation} w.r.t.\ $\Sigma$ and $f$ if $S \models \Sigma$ and $S \cup \{f\}$ is a minimal violation to $\Sigma$. As a special case, if $\{f\}$ itself is a minimal violation to $\Sigma$,  we say that there is exactly one near-violation w.r.t. $f$, and it is the singleton  $\{f_{true}\}$, where $f_{true}$ is an auxiliary fact.



Let $\mathcal{R}$ be a database schema, $\Sigma$ be a fixed finite set of denial constraints on $\mathcal R$, $Q$ be an aggregation query without grouping, and $\mathcal{I}$ be an $\mathcal{R}$-instance.

\begin{reduction}\label{reduction3}
Given an $\mathcal R$-instance $\mathcal{I}$, compute the  sets:
\begin{enumerate}
\item $\mathcal{V}$: the set of minimal violations to $\Sigma$ on $\mathcal{I}$.
\item $\mathcal{N}^i$: the set of near-violations to $\Sigma$, on $\mathcal{I}$, w.r.t. each fact $f_i \in I$.
\end{enumerate}

\noindent For each fact $f_i$ of $\mathcal{I}$, introduce a boolean variable $x_i$, $1\leq i\leq n$. For the auxiliary fact $f_{true}$, introduce a constant  $x_{true} = true$, and for each $N^i_j \in \mathcal{N}^i$, introduce a boolean variable $p^i_j$.
\begin{enumerate}
\item For each $V_j \in \mathcal{V}$, construct a clause $\alpha_j = \underset{f_i \in V_j}{\lor} \neg x_i$.
\item For each $f_i \in I$, construct a clause $\gamma_i = x_i \lor \bigg(\underset{N^i_j \in \mathcal{N}^i}{\lor}p^i_j\bigg)$.
\item For each variable $p^i_j$, construct an expression $\theta^i_j = p^i_j \leftrightarrow \bigg(\underset{f_d \in N^i_j}{\land} x_d\bigg)$.
\item Construct the following boolean formula $\phi$:
$${\phi = \bigg(\overset{|\mathcal{V}|}{\underset{i=1}{\land}}\alpha_i\bigg)\land \bigg(\overset{|I|}{\underset{i=1}{\land}}\bigg(\Big(\overset{|\mathcal{N}^i|}{\underset{j=1}{\land}}\theta^i_j\Big)\land \gamma_i\bigg)\bigg)}$$
\end{enumerate}
\end{reduction}

\begin{proposition}\label{prop5} The boolean formula $\phi$ constructed using Reduction \ref{reduction3}  can be transformed to an equivalent CNF-formula $\phi$ whose size is polynomial in the size of $\mathcal{I}$. The satisfying assignments to $\phi$ and the repairs of $\mathcal{I}$ w.r.t.\ $\Sigma$ are in one-to-one correspondence.
\end{proposition}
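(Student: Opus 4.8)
The plan is to establish the two assertions of the proposition separately: first, that $\phi$ is equivalent to a CNF-formula whose size is polynomial in $n = |\mathcal{I}|$, and second, that the satisfying assignments of this formula are in one-to-one correspondence with the repairs of $\mathcal{I}$ w.r.t.\ $\Sigma$. For the size claim, I would note that the $\alpha_j$- and $\gamma_i$-clauses are already clauses (disjunctions of literals), so the only non-clausal part is each biconditional $\theta^i_j \equiv \big(p^i_j \leftrightarrow \bigwedge_{f_d \in N^i_j} x_d\big)$, which I rewrite in the standard equivalence-preserving way as $\big(p^i_j \lor \bigvee_{f_d \in N^i_j} \neg x_d\big) \land \bigwedge_{f_d \in N^i_j}(\neg p^i_j \lor x_d)$, i.e.\ $|N^i_j|+1$ clauses. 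Polynomiality then follows from the fact that $\Sigma$ is fixed: each denial constraint mentions a bounded number $c$ of atoms, so every minimal violation and every near-violation contains at most $c$ facts. Hence $|\mathcal{V}| = O(n^c)$, the total number of near-violations over all facts is $O(n^c)$ as well, each $\theta^i_j$ expands into $O(1)$ clauses of bounded width, and the number of auxiliary variables $p^i_j$ is polynomial.

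Next I would define the two maps. To a repair $\mathcal{J}$ I associate the assignment $s_{\mathcal{J}}$ given by $s_{\mathcal{J}}(x_i) = 1$ iff $f_i \in \mathcal{J}$, with every $p^i_j$ then forced by $\theta^i_j$ (namely $p^i_j = 1$ iff $N^i_j \subseteq \mathcal{J}$). Conversely, to any assignment $s$ satisfying $\phi$ I associate the sub-instance $\mathcal{J}_s = \{f_i : s(x_i) = 1\}$. Since the $\theta^i_j$-clauses make each $p^i_j$ a function of the $x$-variables, a satisfying assignment is determined entirely by its restriction to the $x$-variables; this immediately makes $s \mapsto \mathcal{J}_s$ and $\mathcal{J} \mapsto s_{\mathcal{J}}$ mutually inverse once we check each lands in the intended set, so the correspondence will be a genuine bijection.

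The core of the argument is verifying that these maps respect the two sides. For $\mathcal{J} \mapsto s_{\mathcal{J}}$: consistency of $\mathcal{J}$ means no minimal violation $V_j$ is contained in $\mathcal{J}$, so some $f_i \in V_j$ has $x_i = 0$ and $\alpha_j$ is satisfied; the $\theta^i_j$ hold by construction; and for $\gamma_i$, if $f_i \notin \mathcal{J}$ then maximality forces $\mathcal{J} \cup \{f_i\} \not\models \Sigma$, so $\mathcal{J}\cup\{f_i\}$ contains a minimal violation, which (as $\mathcal{J}$ is consistent) must contain $f_i$; deleting $f_i$ yields a near-violation $N^i_j \subseteq \mathcal{J}$, whence $p^i_j = 1$. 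For $s \mapsto \mathcal{J}_s$: the $\alpha_j$-clauses forbid any minimal violation from lying inside $\mathcal{J}_s$, so $\mathcal{J}_s \models \Sigma$; and if $f_i \notin \mathcal{J}_s$ (i.e.\ $x_i = 0$), then $\gamma_i$ forces some $p^i_j = 1$, and $\theta^i_j$ then places the near-violation $N^i_j$ inside $\mathcal{J}_s$, so $N^i_j \cup \{f_i\}$ is a minimal violation witnessing $\mathcal{J}_s \cup \{f_i\} \not\models \Sigma$; hence $\mathcal{J}_s$ is maximal and is a repair.

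I expect the maximality direction to be the main obstacle, since it is where the roles of near-violations and the clauses $\gamma_i,\theta^i_j$ must be combined precisely, and where the definition of near-violation (as a minimal violation minus one fact) is used in both directions. I would also treat separately the degenerate case where $\{f_i\}$ is itself a minimal violation, handled by the auxiliary fact $f_{true}$ with $x_{true} = \mathit{true}$: the corresponding $p^i_j$ is then constantly true, so $\gamma_i$ is discharged automatically, while the singleton clause $\alpha_j = \neg x_i$ forces $x_i = 0$, correctly excluding $f_i$ from every repair.
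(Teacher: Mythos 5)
Your proof is correct and follows essentially the same route as the paper's: bound the number and width of the $\alpha$-, $\gamma$-, and $\theta$-clauses using the fact that $\Sigma$ is fixed (so violations and near-violations have bounded cardinality), expand each biconditional $\theta^i_j$ into its standard CNF, and establish the bijection by reading repairs off the $x$-variables, using the $\alpha$-clauses for consistency and the $\gamma$/$\theta$-clauses for maximality. If anything, you spell out the maximality argument (a fact excluded from a repair must complete a minimal violation, whose remainder is a near-violation contained in the repair) and the determination of the $p$-variables by the $x$-variables more explicitly than the paper does.
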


\begin{proof}
Let $n$ be the number of facts of $\mathcal{I}$. Let $d_1$ be the smallest number such that there exists no denial constraint in $\Sigma$ whose number of database atoms is bigger than $d_1$. Also, let $d_2$ be the smallest number such that there exists no conjunctive query in $Q$ whose number of database atoms is bigger than $d_2$. Since $\Sigma$ and $Q$ are not part of the input to \cons{$Q$}, the quantities $d_1$ and $d_2$ are fixed constants.  We also have  that $|\mathcal{V}| \leq n^{d_1}$, $|\mathcal{N}^i| \leq n^{d_1}$ for $1 \leq i \leq n$, $|\mathcal{A}| \leq n^{d_2}$, and $|\mathcal{W}^l| \leq n^{d_2}$ for $1 \leq l \leq |\mathcal{A}|$. The number of $x$-, $y$-, and $p$-variables in $\phi'$ is therefore bounded by $n$, $n^{d_1 + 1}$, and $n^{d_2}$, respectively. The formula $\phi'$ contains as many $\alpha$-clauses as $|\mathcal{V}|$, and none of the $\alpha$-clause's length exceeds $n$. Similarly, there are at most $n^{d_2}$ $\beta$-clauses, and none of their lengths exceeds $d_2+1$. The number of $\gamma$-clauses is precisely $n$, and each $\gamma$-clause is at most $n^{d_1+1} + 1$ literals long. There are as many $\theta$-expressions as there are $y$-variables. Every $\theta$-expression is of the form $y \leftrightarrow (x_1 \land ... \land x_d)$, where $d$ is a constant obtained from the number of facts in the corresponding near-violation. Each $\theta$-expression can be equivalently written in a constant number of CNF-clauses as $((\neg y \lor x_1) \land ... \land (\neg y \lor x_d)) \land (\neg x_1 \lor ... \lor \neg x_d \lor y)$, in which the length each clause is constant. Thus, one can transform $\phi'$ into an equivalent CNF-formula $\phi$ with size polynomial in the size of $\mathcal{I}$.

For the second part of Proposition \ref{prop5}, consider a satisfying assignment $s$ to $\phi$ and construct a database instance $\mathcal{J}$ such that $f_i \in \mathcal{J}$ if and only if $s(x_i) = 1$. The $\alpha$-clauses assert that no minimal violation to $\Sigma$ is present in $\mathcal{J}$, i.e., $\mathcal{J}$ is a consistent subset of $\mathcal{I}$. The $\gamma$-clauses and the $\theta$-expressions encode the condition that, for every fact $f \in \mathcal{I}$, either $f \in \mathcal{J}$ or at least one near-violation w.r.t.\ $\Sigma$ and $f$ is in $\mathcal{J}$, making sure that $\mathcal{J}$ is indeed a repair of $\mathcal{I}$. In the other direction, one can construct a satisfying assignment $s$ to $\phi$ from a repair $\mathcal{J}$ of $\mathcal{I}$ by setting $s(x_i) = 1$ if and only if $f_i \in \mathcal{J}$.
\end{proof}
\section{Experimental Evaluation}
\label{sec:experiments}
We evaluate the performance of AggCAvSAT over both synthetic and real-world databases. The first set of experiments includes a comparison of AggCAvSAT with an existing SQL-rewriting-based CQA system, namely, ConQuer, over synthetically generated TPC-H databases having one key constraint per relation. This set of experiments is divided into two parts, based on the method used to generate the inconsistent database instances. In the first part, we use the \texttt{DBGen} tool from TPC-H and artificially inject inconsistencies in the generated data;  in the second part, we employ the PDBench inconsistent database generator from MayBMS \cite{Antova08} (see Section \ref{sec:datasets} for details). Next, we assess the scalability of AggCAvSAT by varying the size of the database and the amount of inconsistency present in it\ignore{ using our database instance generator}. Lastly, to evaluate the performance of the reductions from Section \ref{beyond-key-constraints}, we use a real-world Medigap \cite{medigap} dataset that has three functional dependencies and one denial constraint. All experiments were carried out on a machine running on Intel Core i7 2.7 GHz, 64 bit Ubuntu 16.04, with 8GB of RAM. We used Microsoft SQL Server 2017 as an underlying DBMS, and MaxHS v3.2 solver \cite{Davies2011} for solving the \wpmaxsat{} instances. The AggCAvSAT system is implemented in Java 9.04 and its code is open-sourced at a GitHub repository \url{https://github.com/uccross/cavsat} via a BSD-style license. Various features of AggCAvSAT, including its graphical user interface, are presented in a short paper in the demonstration track of the 2021 SIGMOD conference \cite{DixitK2021}.

\subsection{Experiments with TPC-H Data and Queries}
\subsubsection{Datasets}\label{sec:datasets}  For the first part of the experiments, the data is generated using the \texttt{DBGen} data generation tool from the TPC-H Consortium. The TPC-H schema comes with exactly one key constraint per relation, which was ideal for comparing AggCAvSAT against ConQuer \cite{Fuxman05, Fuxman07} (the only existing system for computing the range consistent answers to aggregation queries), because ConQuer does not support more than one key constraint per relation or  classes of integrity constraints broader than keys. The \texttt{DBGen} tool generates consistent data, so we artificially injected inconsistency by updating the key attributes of randomly selected tuples from the data with the values taken from existing tuples of the same relation. The sizes of the key-equal groups that violate the key constraints were uniformly distributed between two and seven. The database instances were generated in such a way that every repair had the specified size. We experimented with varied degrees of inconsistency, ranging from 5\% up to 35\% of the tuples violating a key constraint, 
and with a variety of repair sizes, starting from 500 MB (4.3 million tuples) up to 5 GB (44 million tuples).  For the second part, we employed the PDBench database generator from MayBMS \cite{Antova08} to generate four inconsistent database instances with varying degrees of inconsistency (see Table \ref{pdbench-tpch}). In all four instances, the data is generated in such a way that every repair is of size 1 GB.

\begin{table}[t]
\caption{Percentage of inconsistency in the TPC-H database instances generated using PDBench} \label{pdbench-tpch}
\begin{tabular}{|l|c|c|c|c|}\hline
&\multicolumn{4}{c|}{\textbf{Inconsistency}}\\\hline
\textbf{Table}&\textbf{Inst.\ 1}&\textbf{Inst.\ 2}&\textbf{Inst.\ 3}&\textbf{Inst.\ 4}\\\hline
\texttt{CUSTOMER}&4.42\%&8.5\%&16.14\%&29.49\%\\
\texttt{LINEITEM}&6.36\%&12.09\%&22.53\%&39.82\%\\
\texttt{NATION}&7.69\%&0\%&7.69\%&7.69\%\\
\texttt{ORDERS}&3.51\%&6.77\%&12.87\%&23.9\%\\
\texttt{PART}&4.93\%&9.33\%&17.66\%&32.16\%\\
\texttt{PARTSUPP}&1.53\%&2.96\%&5.77\%&11.13\%\\
\texttt{REGION}&0\%&0\%&0\%&0\%\\
\texttt{SUPPLIER}&3.69\%&7.44\%&14.11\%&26.51\%\\\hline
\textbf{Overall}&5.36\%&10.25\%&19.29\%&34.72\%\\\hline\hline
&\multicolumn{4}{c|}{\textbf{Database size and Repair size (in GB)}}\\\hline
&1.04 \& 1.00&1.07 \& 1.01&1.14 \& 1.02&1.3 \& 1.02\\\hline\hline
&\multicolumn{4}{c|}{\textbf{Size of the Largest Key-equal Groups}}\\\hline
&8 tuples &16 tuples&16 tuples&32 tuples\\\hline
\end{tabular}
\end{table}

\subsubsection{Queries} The standard TPC-H specification comes with 22 queries (constructed using the \texttt{QGen} tool). Here, we focus on queries 1, 3, 4, 5, 6, 10, 12, 14, and 19; the other 13 queries have features such as nested subqueries, left outer joins, and negation that are beyond the aggregation queries defined in Section \ref{sec-aggregation}. In Section \ref{sec-exp-w/o-grouping}, we describe our results for queries without grouping. Since six out of the nine queries under consideration contained the \texttt{GROUP BY} clause, we removed  it and added appropriate conditions in the \texttt{WHERE} clause based on the original grouping attributes to obtain queries without grouping. We refer to these queries as $Q'_1, Q'_3,\cdots, Q'_{19}$. The definitions of these queries are given in Table \ref{tpch-data-queries}.

\begin{table}[ht]
\scriptsize
\caption{TPC-H-inspired aggregation queries w/o grouping}
\centering
\begin{tabular}{|p{0.03 \linewidth}|p{0.9\linewidth}|} \hline
\#&Query\\ \hline
$Q'_1$&\texttt{SELECT SUM(LINEITEM.L\_QUANTITY) FROM LINEITEM WHERE LINEITEM.L\_SHIPDATE <= dateadd(dd, -90, cast(`1998-12-01' as datetime)) AND LINEITEM.L\_RETURNFLAG = `N' AND LINEITEM.L\_LINESTATUS = `F'}\\
$Q'_3$&\texttt{SELECT SUM(LINEITEM.L\_EXTENDEDPRICE*(1-LINEITEM.L\_DISCOUNT)) FROM CUSTOMER, ORDERS, LINEITEM WHERE CUSTOMER.C\_MKTSEGMENT = 'BUILDING' AND CUSTOMER.C\_CUSTKEY = ORDERS.O\_CUSTKEY AND LINEITEM.L\_ORDERKEY = ORDERS.O\_ORDERKEY AND ORDERS.O\_ORDERDATE < `1995-03-15' AND LINEITEM.L\_SHIPDATE > `1995-03-15' AND LINEITEM.L\_ORDERKEY = 988226 AND ORDERS.O\_ORDERDATE = `1995-02-01' AND ORDERS.O\_SHIPPRIORITY = 0}\\
$Q'_4$&\texttt{SELECT COUNT(*) FROM ORDERS WHERE ORDERS.O\_ORDERDATE >= `1993-07-01' AND ORDERS.O\_ORDERDATE < dateadd(mm,3, cast(`1993-07-01' as datetime)) AND ORDERS.O\_ORDERPRIORITY = `1-URGENT'}\\
$Q'_5$&\texttt{SELECT SUM(LINEITEM.L\_EXTENDEDPRICE*(1-LINEITEM.L\_DISCOUNT)) FROM CUSTOMER, ORDERS, LINEITEM, SUPPLIER, NATION, REGION WHERE CUSTOMER.C\_CUSTKEY = ORDERS.O\_CUSTKEY AND LINEITEM.L\_ORDERKEY = ORDERS.O\_ORDERKEY AND LINEITEM.L\_SUPPKEY = SUPPLIER.S\_SUPPKEY AND CUSTOMER.C\_NATIONKEY = SUPPLIER.S\_NATIONKEY AND SUPPLIER.S\_NATIONKEY = NATION.N\_NATIONKEY AND NATION.N\_REGIONKEY = REGION.R\_REGIONKEY AND REGION.R\_NAME = `ASIA' AND ORDERS.O\_ORDERDATE >= `1994-01-01' AND ORDERS.O\_ORDERDATE < DATEADD(YY, 1, cast(`1994-01-01' as datetime)) AND NATION.N\_NAME = `INDIA'}\\
$Q'_6$&\texttt{SELECT SUM(LINEITEM.L\_EXTENDEDPRICE*LINEITEM.L\_DISCOUNT) FROM LINEITEM WHERE LINEITEM.L\_SHIPDATE >= `1994-01-01' AND LINEITEM.L\_SHIPDATE < dateadd(yy, 1, cast(`1994-01-01' as datetime)) AND LINEITEM.L\_DISCOUNT BETWEEN .06 - 0.01 AND .06 + 0.01 AND LINEITEM.L\_QUANTITY < 24}\\
$Q'_{10}$&\texttt{SELECT SUM(LINEITEM.L\_EXTENDEDPRICE*(1-LINEITEM.L\_DISCOUNT)) FROM CUSTOMER, ORDERS, LINEITEM, NATION WHERE CUSTOMER.C\_CUSTKEY = ORDERS.O\_CUSTKEY AND LINEITEM.L\_ORDERKEY = ORDERS.O\_ORDERKEY AND ORDERS.O\_ORDERDATE >= `1993-10-01' AND ORDERS.O\_ORDERDATE < dateadd(mm, 3, cast(`1993-10-01' as datetime)) AND LINEITEM.L\_RETURNFLAG = `R' AND CUSTOMER.C\_NATIONKEY = NATION.N\_NATIONKEY AND CUSTOMER.C\_CUSTKEY = 77296 AND CUSTOMER.C\_NAME = `Customer\#000077296' AND CUSTOMER.C\_ACCTBAL = 1250.65 AND CUSTOMER.C\_PHONE = `12-248-307-9719' AND NATION.N\_NAME = `BRAZIL'}\\
$Q'_{12}$&\texttt{SELECT COUNT(*) FROM ORDERS, LINEITEM WHERE ORDERS.O\_ORDERKEY = LINEITEM.L\_ORDERKEY AND LINEITEM.L\_SHIPMODE = `MAIL' AND (ORDERS.O\_ORDERPRIORITY = `1-URGENT' OR ORDERS.O\_ORDERPRIORITY = `2-HIGH') AND LINEITEM.L\_COMMITDATE < LINEITEM.L\_RECEIPTDATE AND LINEITEM.L\_SHIPDATE < LINEITEM.L\_COMMITDATE AND LINEITEM.L\_RECEIPTDATE >= `1994-01-01' AND LINEITEM.L\_RECEIPTDATE < dateadd(mm, 1, cast(`1995-09-01' as datetime))}\\
$Q'_{14}$&\texttt{SELECT SUM(LINEITEM.L\_EXTENDEDPRICE*(1-LINEITEM.L\_DISCOUNT)) FROM LINEITEM, PART WHERE LINEITEM.L\_PARTKEY = PART.P\_PARTKEY AND LINEITEM.L\_SHIPDATE >= `1995-09-01' AND LINEITEM.L\_SHIPDATE < dateadd(mm, 1, `1995-09-01') AND PART.P\_TYPE LIKE `PROMO\%\%'}\\
$Q'_{19}$&\texttt{SELECT SUM(LINEITEM.L\_EXTENDEDPRICE*(1-LINEITEM.L\_DISCOUNT)) FROM LINEITEM, PART WHERE (PART.P\_PARTKEY = LINEITEM.L\_PARTKEY AND PART.P\_BRAND = `Brand\#12' AND PART.P\_CONTAINER IN (`SM CASE', `SM BOX', `SM PACK', `SM PKG') AND LINEITEM.L\_QUANTITY >= 1 AND LINEITEM.L\_QUANTITY <= 1 + 10 AND PART.P\_SIZE BETWEEN 1 AND 5 AND LINEITEM.L\_SHIPMODE IN (`AIR', `AIR REG') AND LINEITEM.L\_SHIPINSTRUCT = `DELIVER IN PERSON') OR (PART.P\_PARTKEY = LINEITEM.L\_PARTKEY AND PART.P\_BRAND =`Brand\#23' AND PART.P\_CONTAINER IN (`MED BAG', `MED BOX', `MED PKG', `MED PACK') AND LINEITEM.L\_QUANTITY >= 10 AND LINEITEM.L\_QUANTITY <= 10 + 10 AND PART.P\_SIZE BETWEEN 1 AND 10 AND LINEITEM.L\_SHIPMODE IN (`AIR', `AIR REG') AND LINEITEM.L\_SHIPINSTRUCT = `DELIVER IN PERSON') OR (PART.P\_PARTKEY = LINEITEM.L\_PARTKEY AND PART.P\_BRAND = `Brand\#34' AND PART.P\_CONTAINER IN ( `LG CASE', `LG BOX', `LG PACK', `LG PKG') AND LINEITEM.L\_QUANTITY >= 20 AND LINEITEM.L\_QUANTITY <= 20 + 10 AND PART.P\_SIZE BETWEEN 1 AND 15 AND LINEITEM.L\_SHIPMODE IN (`AIR', `AIR REG') AND LINEITEM.L\_SHIPINSTRUCT = `DELIVER IN PERSON')}\\
\hline\end{tabular}
\label{tpch-data-queries}
\end{table}

\subsubsection{Results on Queries without Grouping}\label{sec-exp-w/o-grouping}
In the first set of experiments, we computed the range consistent answers of the TPC-H-inspired aggregation queries without grouping via WPMaxSAT solving over a database instance with 10\% inconsistency and having repairs of size 1 GB (8 million tuples). Figure \ref{wo-grouping} shows that much of the evaluation time used by AggCAvSAT
is consumed in encoding the CQA instance into a WPMaxSAT instance, while the solver comparatively takes less time to compute the optimal solution. Note that, $Q'_5$ is not in the class $C_\textit{aggforest}$ and thus ConQuer cannot compute its range consistent answers. AggCAvSAT performs better than ConQuer on seven out of the remaining eight queries.
\begin{figure}[ht]
\begin{tikzpicture}[every axis/.style={height = 4.5cm, width=\linewidth,ybar stacked, xlabel= TPC-H-inspired aggregation queries w/o grouping, xtick=data, xticklabels={$Q'_1$,$Q'_3$,$Q'_4$,$Q'_5$,$Q'_6$,$Q'_{10}$,$Q'_{12}$,$Q'_{14}$,$Q'_{19}$}, legend cell align = {left}, bar width=4.5pt, ymin=0,ymax=5, y label style={at={(0.1,0.5)}}, x label style={at={(0.5,0)}}, ylabel=Eval.\ time (seconds),legend columns=2, legend style={draw=none, fill=none,/tikz/every even column/.append style={column sep=1cm}, at={(0.95, 0.98)}}}]
\begin{axis}[bar shift=-6pt]
\addplot[fill=cyan, draw=none,postaction={pattern=crosshatch dots, samples=20}] coordinates {(1,1.973)(2,0.07)(3,0.432)(4,0.453)(5,2.55)(6,0.221)(7,2.235)(8,0.97)(9,0.481)};\label{encoding}
\addplot[fill=black, draw=none] coordinates {(1,0.547)(2,0.025)(3,0.071)(4,0.091)(5,2.011)(6,0.026)(7,0.254)(8,0.3)(9,0.03)};\label{solving}
\end{axis}
\begin{axis}[hide axis]
\addplot[fill=lightgray, draw=none] coordinates {(1,3.43)(2,0.12)(3,0.71)(4,0)(5,5.972)(6,0.18)(7,3.22)(8,1.3)(9,0.71)}
node[above] at (axis cs:4,0){\small $\times$}node[rotate=90, yshift=0.02cm] at (axis cs:5,3.4) {\textbf{\small 5.98 secs $\rightarrow$}}; \label{conquer}
\end{axis}
\begin{axis}[bar shift=6pt, hide axis]
\addlegendimage{/pgfplots/refstyle=encoding}\addlegendentry{WPMaxSAT enc.}
\addlegendimage{/pgfplots/refstyle=solving}\addlegendentry{WPMaxSAT sol.}
\addlegendimage{/pgfplots/refstyle=conquer}\addlegendentry{ConQuer}
\addplot[fill=red, draw=none] coordinates {(1,0.648)(2,0.072)(3,0.199)(4,0.307)(5,0.52)(6,0.133)(7,0.601)(8,0.343)(9,0.271)};
\addlegendentry{Original query}
\end{axis}
\end{tikzpicture}
\caption{AggCAvSAT vs.\ ConQuer on TPC-H data generated using the \texttt{DBGen}-based tool (10\% inconsistency, 1 GB repairs)}
\label{wo-grouping}
\end{figure}

Next, we compared the performance of AggCAvSAT and ConQuer on database instances generated using PDBench. Figure \ref{wo-grouping-pdbench} shows that  AggCAvSAT performs better than ConQuer on PDBench instances with low inconsistency. As the inconsistency increases, the \wpmaxsat{} solver requires considerably long time to compute the optimal solutions (especially for $Q'_6$, $Q'_{12}$, and $Q'_{14}$). One  reason  is that the sizes of key-equal groups in PDBench instances with higher inconsistency percentage are large, which translates into clauses of large sizes in the \wpmaxsat{} instances, hence the solver works hard to solve them. Also, K\"{u}gel's reduction \cite{Kuegel12} from \wpminsat{} to \wpmaxsat{} significantly increases the size of the CNF formula, resulting in higher time for the \textit{lub}-answers to the queries.

\begin{figure}[ht]
\centering
\begin{subfigure}[b]{0.5\textwidth}
\begin{tikzpicture}[every axis/.style={height=4.3cm,width=\linewidth,ybar stacked, xtick=data, xticklabels={$Q'_1$,$Q'_3$,$Q'_4$,$Q'_5$,$Q'_6$,$Q'_{10}$,$Q'_{12}$,$Q'_{14}$,$Q'_{19}$}, legend cell align = {left}, bar width=4.5pt, ymin=0,ymax=5, y label style={at={(0.1,0.5)}}, x label style={at={(0.5,0)}}, ylabel=Eval.\ time (seconds),legend style={draw=none, fill=none, at={(0.48, 0.95)}}}]
\begin{axis}[bar shift=-6pt]
\addplot[fill=cyan, draw=none,postaction={pattern=crosshatch dots, samples=20}] coordinates
{(1,1.63)(2,0.08)(3,0.329)(4,0.478)(5,1.683)(6,0.176)(7,2.196)(8,0.768)(9,0.537)};\label{encoding1}
\addplot[fill=black, draw=none] coordinates{(1,0.106)(2,0.026)(3,0.034)(4,0.052)(5,0.411)(6,0)(7,0.105)(8,0.091)(9,0.028)};\label{solving1}
\end{axis}
\begin{axis}[hide axis]
\addplot[fill=lightgray, draw=none] coordinates {(1,3.89)(2,0.05)(3,0.71)(4,0)(5,4.71)(6,0.04)(7,2.67)(8,0.3)(9,0.22)}
node[above] at (axis cs:4,0){\small $\times$}node[yshift=0.37cm, draw=black] at (axis cs:8.73,3.92){\textbf{\small Instance 1}};\label{conquer1}
\end{axis}
\begin{axis}[bar shift=6pt, hide axis]
\addlegendimage{/pgfplots/refstyle=encoding}\addlegendentry{WPMaxSAT enc.}
\addlegendimage{/pgfplots/refstyle=solving}\addlegendentry{WPMaxSAT sol.}
\addlegendimage{/pgfplots/refstyle=conquer}\addlegendentry{ConQuer}
\addplot[fill=red, draw=none] coordinates {(1,0.8)(2,0.09)(3,0.14)(4,0.307)(5,0.52)(6,0.133)(7,0.4)(8,0.2)(9,0.1)};
\addlegendentry{Original query}
\end{axis}
\end{tikzpicture}
\end{subfigure}
\begin{subfigure}[b]{0.5\textwidth}
\begin{tikzpicture}[every axis/.style={height=4.3cm,width=\linewidth,ybar stacked, xtick=data, xticklabels={$Q'_1$,$Q'_3$,$Q'_4$,$Q'_5$,$Q'_6$,$Q'_{10}$,$Q'_{12}$,$Q'_{14}$,$Q'_{19}$}, legend cell align = {left}, bar width=4.5pt, ymin=0,ymax=4.5, y label style={at={(0.1,0.5)}}, x label style={at={(0.5,0)}}, ylabel=Eval.\ time (seconds),legend style={draw=none, fill=none, at={(0.48, 0.95)}}}]
\begin{axis}[bar shift=-6pt]
\addplot[fill=cyan, draw=none,postaction={pattern=crosshatch dots, samples=20}] coordinates
{(1,1.859)(2,0.078)(3,0.414)(4,0.39)(5,1.979)(6,0.166)(7,2.272)(8,0.818)(9,0.509)};\label{encoding1}
\addplot[fill=black, draw=none] coordinates{(1,0.214)(2,0.029)(3,0.045)(4,0.084)(5,1.454)(6,0.02)(7,0.131)(8,0.198)(9,0.037)};\label{solving1}
\end{axis}
\begin{axis}[hide axis]
\addplot[fill=lightgray, draw=none] coordinates {(1,4.17)(2,0.042)(3,0.71)(4,0)(5,4.002)(6,0.073)(7,3.22)(8,0.98)(9,0.39)}
node[above] at (axis cs:4,0){\small $\times$}node[yshift=0.37cm, draw=black] at (axis cs:8.73,3.52){\textbf{\small Instance 2}}; \label{conquer1}
\end{axis}
\begin{axis}[bar shift=6pt, hide axis]
\addlegendimage{/pgfplots/refstyle=encoding}\addlegendentry{WPMaxSAT enc.}
\addlegendimage{/pgfplots/refstyle=solving}\addlegendentry{WPMaxSAT sol.}
\addlegendimage{/pgfplots/refstyle=conquer}\addlegendentry{ConQuer}
\addplot[fill=red, draw=none] coordinates {(1,0.86)(2,0.072)(3,0.199)(4,0.307)(5,0.52)(6,0.133)(7,0.601)(8,0.343)(9,0.271)};
\addlegendentry{Original query}
\end{axis}
\end{tikzpicture}
\end{subfigure}
\begin{subfigure}[b]{0.5\textwidth}
\begin{tikzpicture}[every axis/.style={height=4.3cm,width=\linewidth,ybar stacked, xtick=data, xticklabels={$Q'_1$,$Q'_3$,$Q'_4$,$Q'_5$,$Q'_6$,$Q'_{10}$,$Q'_{12}$,$Q'_{14}$,$Q'_{19}$}, legend cell align = {left}, bar width=4.5pt, ymin=0,ymax=7, y label style={at={(0.1,0.5)}}, x label style={at={(0.5,0)}}, ylabel=Eval.\ time (seconds),legend style={draw=none, fill=none, at={(0.465, 1)}}}]
\begin{axis}[bar shift=-6pt]
\addplot[fill=cyan, draw=none,postaction={pattern=crosshatch dots, samples=20}] coordinates{(1,1.904)(2,0.2)(3,0.397)(4,0.446)(5,2.376)(6,0.128)(7,2.376)(8,0.914)(9,0.502)};\label{encoding1}
\addplot[fill=black, draw=none] coordinates{(1,0.698)(2,0.029)(3,0.065)(4,0.18)(5,7.065)(6,0.028)(7,0.332)(8,0.986)(9,0.038)}node[rotate=90, yshift=0.37cm] at (axis cs:4.9,5.1) {\textbf{\small 9.34 sec $\rightarrow$}};\label{solving1}
\end{axis}
\begin{axis}[hide axis]
\addplot[fill=lightgray, draw=none] coordinates {(1,4.06)(2,0.042)(3,0.71)(4,0)(5,5.15)(6,0.073)(7,2.98)(8,1.03)(9,0.5)}
node[above] at (axis cs:4,0){\small $\times$}node[yshift=0.37cm, draw=black] at (axis cs:8.73,5.5){\textbf{\small Instance 3}}; \label{conquer1}
\end{axis}
\begin{axis}[bar shift=6pt, hide axis]
\addlegendimage{/pgfplots/refstyle=encoding}\addlegendentry{WPMaxSAT enc.}
\addlegendimage{/pgfplots/refstyle=solving}\addlegendentry{WPMaxSAT sol.}
\addlegendimage{/pgfplots/refstyle=conquer}\addlegendentry{ConQuer}
\addplot[fill=red, draw=none] coordinates {(1,1.08)(2,0.072)(3,0.21)(4,0.31)(5,0.6)(6,0.13)(7,0.61)(8,0.343)(9,0.243)};
\addlegendentry{Original query}
\end{axis}
\end{tikzpicture}
\end{subfigure}
\begin{subfigure}[b]{0.5\textwidth}
\begin{tikzpicture}[every axis/.style={height=4.3cm,width=\linewidth,ybar stacked, xlabel= TPC-H-inspired aggregation queries w/o grouping, xtick=data, xticklabels={$Q'_1$,$Q'_3$,$Q'_4$,$Q'_5$,$Q'_6$,$Q'_{10}$,$Q'_{12}$,$Q'_{14}$,$Q'_{19}$}, legend cell align = {left}, bar width=4.5pt, ymin=0,ymax=13, y label style={at={(0.1,0.5)}}, x label style={at={(0.5,0)}}, ylabel=Eval.\ time (seconds),legend style={draw=none, fill=none, at={(0.4, 1)}}}]
\begin{axis}[bar shift=-6pt]
\addplot[fill=cyan, draw=none,postaction={pattern=crosshatch dots, samples=20}] coordinates{(1,2.029)(2,0.143)(3,0.414)(4,0.394)(5,5.412)(6,0.132)(7,2.288)(8,1.228)(9,0.489)};\label{encoding1}
\addplot[fill=black, draw=none] coordinates{(1,3.55)(2,0.03)(3,0.102)(4,1.134)(5,46.297)(6,0.03)(7,5.104)(8,11.188)(9,0.04)}node[rotate=90, yshift=0.37cm] at (axis cs:4.9,9.5) {\textbf{\small 49.1 sec $\rightarrow$}};\label{solving1}
\end{axis}
\begin{axis}[hide axis]
\addplot[fill=lightgray, draw=none] coordinates {(1,6.1)(2,0.12)(3,1.24)(4,0)(5,8.15)(6,0.09)(7,2.98)(8,1.03)(9,0.5)}
node[above] at (axis cs:4,0){\small $\times$}node[yshift=0.37cm, draw=black] at (axis cs:8.73,10.2){\textbf{\small Instance 4}}; \label{conquer1}
\end{axis}
\begin{axis}[bar shift=6pt, hide axis]
\addlegendimage{/pgfplots/refstyle=encoding}\addlegendentry{WPMaxSAT enc.}
\addlegendimage{/pgfplots/refstyle=solving}\addlegendentry{WPMaxSAT sol.}
\addlegendimage{/pgfplots/refstyle=conquer}\addlegendentry{ConQuer}
\addplot[fill=red, draw=none] coordinates {(1,1.08)(2,0.072)(3,0.21)(4,0.31)(5,0.6)(6,0.13)(7,0.61)(8,0.343)(9,0.243)};
\addlegendentry{Original query}
\end{axis}
\end{tikzpicture}
\end{subfigure}
\caption{AggCAvSAT vs.\ ConQuer on PDBench instances}
\label{wo-grouping-pdbench}
\end{figure}

\begin{table}[h]
\caption{Average size of the CNF formulas for $Q'_1$, $Q'_6$, and $Q'_{14}$}
\label{formula-size-varying-datasize}
\begin{subtable}{0.25\textwidth}
\centering
\begin{tabular}{c|c|c|c|c|} \cline{2-5}
{}&5\%&15\%&25\%&35\%\\ \cline{2-5}
$Q'_1$ &10.2& 34.3 & 60.6&95.3\\
$Q'_6$ &28.4& 96.2 & 175.0&271.2\\
$Q'_{14}$ &6.4& 21.1 & 40.6&62.3\\
\cline{2-5}\end{tabular}
\caption{\# of variables (in thousands)}
\label{formula-size-varying-incons-vars}
\end{subtable}%
\begin{subtable}{0.25\textwidth}
\centering
\begin{tabular}{|c|c|c|c|} \hline
5\%&15\%&25\%&35\%\\ \hline
27.6& 92.2 & 163.6&258.1\\
76.8& 259.9 & 472.9&734.0\\
15.6& 51.9 & 101.0&156.6\\
\hline\end{tabular}
\caption{\# of clauses (in thousands)}
\label{formula-size-varying-incons-clauses}
\end{subtable}\\%
\begin{subtable}{0.25\textwidth}
\centering
\begin{tabular}{c|c|c|c|} \cline{2-4}
{}&1 GB&3 GB&5 GB\\ \cline{2-4}
$Q'_1$ &21.3&  44.1&105.6\\
$Q'_6$ &60.9& 127.13&304.4\\
$Q'_{14}$ &13.9&32.9&67.7\\
\cline{2-4}\end{tabular}
\caption{\# of variables (in thousands)}
\label{formula-size-varying-size-vars}
\end{subtable}%
\begin{subtable}{0.25\textwidth}
\centering
\begin{tabular}{|c|c|c|} \hline
1 GB&3 GB&5 GB\\ \hline
57.7& 104.1 & 258.8\\
165.3& 300.7 & 823.1\\
34.0& 73.7 & 166.6\\
\hline\end{tabular}
\caption{\# of clauses (in thousands)}
\label{formula-size-varying-size-clauses}
\end{subtable}%
\end{table}

Next, we varied the inconsistency in the database instances created using the \texttt{DBGen}-based data generator while keeping the size of the database repairs constant (1 GB). Figure \ref{varying-inconsistency} shows that the evaluation time of AggCAvSAT stays well under ten seconds (except for $Q'_6$), even if there is more inconsistency in the data. Tables \ref{formula-size-varying-incons-vars} and \ref{formula-size-varying-incons-clauses} show the average size of the CNF formulas for the top three queries that exhibited the largest CNF formulas. The size of the formulas grows nearly linearly as the inconsistency present in the data grows. The CNF formulas for $Q'_6$ are significantly larger than the ones corresponding to the other queries since $Q'_6$ has high selectivity and it is posed against the single largest relation \texttt{LINEITEM} which has over 8.2 million tuples in an instance with 35\% inconsistency. This also explains why AggCAvSAT takes more time for computing its range consistent answers (Figure \ref{varying-inconsistency}). \ignore{The overall evaluation time for $Q'_{12}$ is high even though its CNF formula is not very big. This is because  $Q'_{12}$ involves a join between two large tables, namely, \texttt{LINEITEM} and \texttt{ORDERS}, which results in slow computation of the minimal witnesses, therefore affecting the encoding time.} In database instances with low inconsistency, the consistent answers to the queries having low selectivity (e.g., $Q'_3$, $Q'_{10}$) are sometimes contained in the consistent part of the data, and AggCAvSAT does not need to construct a \wpmaxsat{} instance at all.

\begin{figure}[h]
\begin{tikzpicture}
  \begin{axis}[width=\linewidth, grid=major,height=4.5cm, tick label style={font=\normalsize},legend columns=3, label style={font=\normalsize}, grid style={white}, xlabel={Percentage of inconsistency}, ylabel={Eval.\ time (seconds)}, y label style={at={(0.05,0.5)}}, ymax=20, y tick label style={/pgf/number format/.cd, fixed, fixed zerofill, precision=1}, legend style={draw=none, fill=none, font=\scriptsize,at={(0.53, 1)}},legend cell align={left}, every axis plot/.append style={thick}]
  \addplot[blue, mark=*] coordinates{(5,1.941) (10,2.45) (15,2.84) (20,3.6) (25,4.53)(30,5.223) (35,6.21)};\addlegendentry{$Q'_1$}
  \addplot[cyan, mark=triangle] coordinates{(5,0.102) (10,0.085) (15,0.109) (20,0.09) (25,0.118)(30,0.088) (35,0.103)};\addlegendentry{$Q'_3$}
  \addplot[orange, mark=o] coordinates{(5,0.451) (10,0.619) (15,0.763) (20,0.819) (25,1.03)(30,1.17) (35,1.591)};\addlegendentry{$Q'_4$}
  \addplot[black, mark=x] coordinates{(5,0.45) (10,0.553) (15,0.697) (20,0.891) (25,1.197)(30,1.85) (35,2.95)};\addlegendentry{$Q'_5$}
  \addplot[red, dashed, mark=*] coordinates{(5,2.412) (10,4.48) (15,7.2) (20,10.05) (25,12.3)(30,14.82) (35,18.12)};\addlegendentry{$Q'_6$}
   \addplot[purple, mark=square] coordinates{(5,0.225) (10,0.26) (15,0.256) (20,0.208) (25,0.331)(30,0.215) (35,0.483)};\addlegendentry{$Q'_{10}$}
	\addplot[violet, mark=star] coordinates{(5,2.312) (10,2.8) (15,3.12) (20,4.234) (25,5.22)(30,5.54) (35,5.9)};\addlegendentry{$Q'_{12}$}
  \addplot[brown, mark=diamond] coordinates{(5,0.982) (10,1.32) (15,1.61) (20,2.32) (25,3.6)(30,5.98) (35,8.02)};\addlegendentry{$Q'_{14}$}
  \addplot[red, mark=x] coordinates{(5,0.57) (10,0.621) (15,0.614) (20,0.691) (25,0.629)(30,0.67) (35,0.726)};\addlegendentry{$Q'_{19}$}
\end{axis}
\end{tikzpicture}
\caption{AggCAvSAT on TPC-H data generated using the \texttt{DBGen}-based tool (varying inconsistency, 1 GB repairs)}
\label{varying-inconsistency}
\end{figure}

We then evaluated AggCAvSAT's scalability by increasing the sizes of the databases while keeping the inconsistency to a constant 10\%. Figure \ref{varying-datasize} shows that the evaluation time of AggCAvSAT for queries $Q'_1$, $Q'_6$, and $Q'_{12}$ increases faster than that for the other queries. This is because the queries $Q'_1$ and $Q'_6$ are posed against \texttt{LINEITEM} while $Q'_{12}$ involves a join between \texttt{LINEITEM} and \texttt{ORDERS} resulting in AggCAvSAT spending more time on computing the bags of witnesses to these queries as the size of the database grows. Table \ref{formula-size-varying-size-vars} and \ref{formula-size-varying-size-clauses} show that the size of the CNF formulas grows almost linearly w.r.t.\ the size of the database. The largest CNF formula consisted of over 304 thousand variables and 823 thousand clauses and was exhibited by $Q'_6$ on a database  of size 5 GB (47 million tuples). The low selectivity of queries  $Q'_3$, $Q'_{10}$, and $Q'_{19}$ resulted in very small CNF formulas, even on large databases.

\begin{figure}[ht]
\begin{tikzpicture}
  \begin{axis}[width=\linewidth, grid=major, height=4.5cm, tick label style={font=\normalsize},legend columns=5, label style={font=\normalsize}, grid style={white}, xlabel={Size of the database repairs (in GB)}, ylabel={Eval.\ time (seconds)}, y label style={at={(0.05,0.5)}}, ymax=45, y tick label style={/pgf/number format/.cd, fixed, fixed zerofill, precision=0}, legend style={draw=none, fill=none, font=\scriptsize,at={(0.85, 1)}},legend cell align={left}, every axis plot/.append style={thick}]
  \addplot[blue, mark=*] coordinates{(0.5,1.296)(1,2.45)(2,4.021)(3,6.272)(4,14.12)(5,23.083)};\addlegendentry{$Q_1$}
  \addplot[cyan, mark=triangle] coordinates{(0.5,0.093)(1,0.085)(2,0.055)(3,0.077)(4,0.07)(5,0.07)};\addlegendentry{$Q_3$}
  \addplot[orange, mark=o] coordinates{(0.5,0.413)(1,0.619)(2,0.921)(3,1.648)(4,1.95)(5,3.092)};\addlegendentry{$Q_4$}
  \addplot[black, mark=x] coordinates{(0.5,0.386)(1,0.553)(2,0.89)(3,1.347)(4,1.873)(5,2.258)};\addlegendentry{$Q_5$}
  \addplot[red, dashed, mark=*] coordinates{(0.5,2.02)(1,4.48)(2,7.7)(3,12.124)(4,20.1)(5,42.0)};
\addlegendentry{$Q_6$}
  \addplot[purple, mark=square] coordinates{(0.5,0.112)(1,0.26)(2,0.31)(3,0.486)(4,0.591)(5,0.53)};\addlegendentry{$Q_{10}$}
  \addplot[violet, mark=star] coordinates{(0.5,1.345)(1,2.8)(2,5.11)(3,8.078)(4,13.18)(5,20.011)};\addlegendentry{$Q_{12}$}
  \addplot[brown, mark=diamond] coordinates{(0.5,0.742)(1,1.32)(2,1.91)(3,3.603)(4,7.654)(5,11.536)};\addlegendentry{$Q_{14}$}
  \addplot[red, mark=x] coordinates{(0.5,0.336)(1,0.621)(2,0.966)(3,1.557)(4,3.68)(5,6.189)};\addlegendentry{$Q_{19}$}
\end{axis}
\end{tikzpicture}
\caption{AggCAvSAT on TPC-H data generated using the \texttt{DBGen}-based tool (varying database sizes, 10\% inconsistency)}
\label{varying-datasize}
\end{figure}
\vspace{-0.2cm}

\subsubsection{Results on Queries with Grouping}
In this set of experiments, we focus on TPC-H queries 1, 3, 4, 5, 10, and 12 (see Table \ref{tpch-data-queries-grouping}), as the queries 6, 14, and 19 did not contain grouping. We evaluated the performance of AggCAvSAT and compared it to ConQuer on a database with 10\% inconsistency w.r.t.\ primary keys (Figure \ref{w-grouping}). The repairs are of size 1 GB. AggCAvSAT computes the consistent answers to the underlying conjunctive query using the reductions from \cite{DixitK19} which are, precisely, the consistent groups in the range consistent answers to the aggregation query. For each of these groups, it computes the \textit{glb}-answer and the \textit{lub}-answer using reductions to \wpmaxsat{}.

\begin{table}[ht]
\scriptsize
\caption{TPC-H-inspired Aggregation Queries w/ Grouping}
\centering
\begin{tabular}{|p{0.017\linewidth}|p{0.78\linewidth}|p{0.08\linewidth}|} \hline
\#&Query&Operator\\ \hline
$Q_1$&\texttt{SELECT LINEITEM.L\_RETURNFLAG, LINEITEM.L\_LINESTATUS, SUM(LINEITEM.L\_QUANTITY)
FROM LINEITEM WHERE LINEITEM.L\_SHIPDATE <= dateadd(dd, -90, cast(`1998-12-01' as datetime))
GROUP BY LINEITEM.L\_RETURNFLAG, LINEITEM.L\_LINESTATUS}
 &\texttt{SUM(A)}\\
$Q_3$&\texttt{SELECT TOP 10 LINEITEM.L\_ORDERKEY, SUM(LINEITEM.L\_EXTENDEDPRICE), ORDERS.O\_ORDERDATE, ORDERS.O\_SHIPPRIORITY FROM CUSTOMER, ORDERS, LINEITEM 
WHERE CUSTOMER.C\_MKTSEGMENT = `BUILDING' AND CUSTOMER.C\_CUSTKEY = ORDERS.O\_CUSTKEY AND LINEITEM.L\_ORDERKEY = ORDERS.O\_ORDERKEY AND ORDERS.O\_ORDERDATE < `1995-03-15' AND LINEITEM.L\_SHIPDATE > `1995-03-15' GROUP BY LINEITEM.L\_ORDERKEY, ORDERS.O\_ORDERDATE, ORDERS.O\_SHIPPRIORITY}
&\texttt{SUM(A)}\\
$Q_4$&\texttt{SELECT ORDERS.O\_ORDERPRIORITY, COUNT(*) AS O\_COUNT FROM ORDERS WHERE ORDERS.O\_ORDERDATE >= `1993-07-01' AND ORDERS.O\_ORDERDATE < dateadd(mm,3, cast(`1993-07-01' as datetime)) GROUP BY ORDERS.O\_ORDERPRIORITY}
&\texttt{COUNT(*)}\\
$Q_5$&\texttt{SELECT NATION.N\_NAME, SUM(LINEITEM.L\_EXTENDEDPRICE*(1-LINEITEM.L\_DISCOUNT)) AS REVENUE
FROM CUSTOMER, ORDERS, LINEITEM, SUPPLIER, NATION, REGION
WHERE CUSTOMER.C\_CUSTKEY = ORDERS.O\_CUSTKEY AND LINEITEM.L\_ORDERKEY = ORDERS.O\_ORDERKEY AND LINEITEM.L\_SUPPKEY = SUPPLIER.S\_SUPPKEY
AND CUSTOMER.C\_NATIONKEY = SUPPLIER.S\_NATIONKEY AND SUPPLIER.S\_NATIONKEY = NATION.N\_NATIONKEY AND NATION.N\_REGIONKEY = REGION.R\_REGIONKEY
AND REGION.R\_NAME = `ASIA' AND ORDERS.O\_ORDERDATE >= `1994-01-01' 
AND ORDERS.O\_ORDERDATE < DATEADD(YY, 1, cast(`1994-01-01' as datetime))
GROUP BY NATION.N\_NAME}&\texttt{SUM(A)}\\
$Q_{10}$&\texttt{SELECT TOP 20 CUSTOMER.C\_CUSTKEY, CUSTOMER.C\_NAME, SUM(LINEITEM.L\_EXTENDEDPRICE*(1-LINEITEM.L\_DISCOUNT)) AS REVENUE, CUSTOMER.C\_ACCTBAL,
NATION.N\_NAME, CUSTOMER.C\_ADDRESS, CUSTOMER.C\_PHONE
FROM CUSTOMER, ORDERS, LINEITEM, NATION
WHERE CUSTOMER.C\_CUSTKEY = ORDERS.O\_CUSTKEY AND LINEITEM.L\_ORDERKEY = ORDERS.O\_ORDERKEY AND ORDERS.O\_ORDERDATE>= `1993-10-01' AND
ORDERS.O\_ORDERDATE < dateadd(mm, 3, cast(`1993-10-01' as datetime)) AND
LINEITEM.L\_RETURNFLAG = `R' AND CUSTOMER.C\_NATIONKEY = NATION.N\_NATIONKEY
GROUP BY CUSTOMER.C\_CUSTKEY, CUSTOMER.C\_NAME, CUSTOMER.C\_ACCTBAL, CUSTOMER.C\_PHONE, NATION.N\_NAME, CUSTOMER.C\_ADDRESS}&\texttt{SUM(A)}\\
$Q_{12}$&\texttt{SELECT LINEITEM.L\_SHIPMODE,
COUNT(*) AS HIGH\_LINE\_COUNT
FROM ORDERS, LINEITEM
WHERE ORDERS.O\_ORDERKEY = LINEITEM.L\_ORDERKEY AND LINEITEM.L\_SHIPMODE IN (`MAIL',`SHIP') AND (ORDERS.O\_ORDERPRIORITY = `1-URGENT' OR ORDERS.O\_ORDERPRIORITY = `2-HIGH')
AND LINEITEM.L\_COMMITDATE < LINEITEM.L\_RECEIPTDATE AND LINEITEM.L\_SHIPDATE < LINEITEM.L\_COMMITDATE AND LINEITEM.L\_RECEIPTDATE >= `1994-01-01'
AND LINEITEM.L\_RECEIPTDATE < dateadd(mm, 1, cast(`1995-09-01' as datetime))
GROUP BY LINEITEM.L\_SHIPMODE}&\texttt{COUNT(*)}\\
\hline\end{tabular}
\label{tpch-data-queries-grouping}
\end{table}

\begin{figure}[ht]
\begin{tikzpicture}[every axis/.style={height=4.5cm,width=\linewidth,ybar stacked, xlabel= TPC-H-inspired aggregation queries with grouping, xtick=data, xticklabels={$Q_1$,$Q_3$,$Q_4$,$Q_5$,$Q_{10}$,$Q_{12}$}, legend cell align = {left}, legend columns=2, bar width=8pt, ymin=0,ymax=14, y label style={at={(0.1,0.5)}}, x label style={at={(0.5,0)}}, ylabel=Eval.\ time (seconds),legend style={draw=none, font=\scriptsize, fill=none, at={(1, 1)}}}]
\begin{axis}[bar shift=-10pt]
\addplot[fill=cyan, draw=none,  postaction={pattern=crosshatch dots, samples=20}] coordinates {(1,2.84)(2,5.418)(3,0.177)(4,0.628)(5,4.052)(6,0.737)};\label{encoding}
\addplot[fill=black, draw=none] coordinates {(1,0.077)(2,1.695)(3,0.016)(4,0.016)(5,0.455)(6,0.015)};\label{solving}
\addplot[fill=cyan!40!white, draw=none] coordinates {(1,12)(2,2.272)(3,2.08)(4,2.063)(5,1.257)(6,4.451)};\label{encoding1}
\addplot[fill=magenta, draw=none] coordinates {(1,0)(2,0.368)(3,0.678)(4,0.513)(5,0.612)(6,0.599)};\label{solving1}
\end{axis}
\begin{axis}[hide axis]
\addplot[fill=lightgray, draw=none] coordinates {(1,14)(2,6.3)(3,3.8)(4,0)(5,6.972)(6,8.237)}
node[above] at (axis cs:4,0) {$\times$}
node[rotate=90, yshift=0.02cm] at (axis cs:1,10) {\textbf{\small 2 mins $\rightarrow$}}; \label{conquer}
\end{axis}
\begin{axis}[bar shift=10pt, hide axis]
\addlegendimage{/pgfplots/refstyle=encoding}\addlegendentry{Encoding underlying CQ}
\addlegendimage{/pgfplots/refstyle=solving}\addlegendentry{Solving underlying CQ}
\addlegendimage{/pgfplots/refstyle=encoding1}\addlegendentry{Encoding for groups}
\addlegendimage{/pgfplots/refstyle=solving1}\addlegendentry{Solving for groups}
\addlegendimage{/pgfplots/refstyle=conquer}\addlegendentry{ConQuer rewriting}
\addplot[fill=red, draw=none] coordinates {(1,1.83)(2,0.827)(3,0.472)(4,1.487)(5,1.176)(6,1.026)}node[rotate=90, yshift=0.37cm] at (axis cs:1,9.65) {\textbf{\small 15 mins $\rightarrow$}};
\addlegendentry{Original query}
\end{axis}
\end{tikzpicture}
\caption{AggCAvSAT vs.\ ConQuer on TPC-H data generated using the \texttt{DBGen}-based tool (10\% inconsistency, 1 GB repairs)}
\label{w-grouping}
\end{figure}

\begin{figure}[ht]
    \centering
\begin{subfigure}[b]{0.5\textwidth}
\begin{tikzpicture}[every axis/.style={height=4.3cm,width=\linewidth,ybar stacked, xtick=data, xticklabels={$Q_1$,$Q_3$,$Q_4$,$Q_5$,$Q_{10}$,$Q_{12}$}, legend cell align = {left}, bar width=8pt, ymin=0,ymax=8, xmin=1.5, y label style={at={(0.1,0.5)}}, x label style={at={(0.5,0)}}, ylabel=Eval.\ time (seconds),legend columns = 2,legend style={draw=none, fill=none,font=\scriptsize, at={(0.72, 0.98)}}}]
\begin{axis}[bar shift=-10pt]
\addplot[fill=cyan, draw=none,  postaction={pattern=crosshatch dots, samples=20}] coordinates{(1,0)(2,1.388)(3,0.177)(4,0.527)(5,3.147)(6,0.769)};\label{encoding}
\addplot[fill=black, draw=none]coordinates{(1,0)(2,0.249)(3,0.015)(4,0.014)(5,0.368)(6,0.013)};\label{solving}
\addplot[fill=cyan!40!white, draw=none] coordinates {(1,0)(2,0.772)(3,1.432)(4,1.823)(5,0.565)(6,3.824)};\label{encoding1}
\addplot[fill=magenta, draw=none] coordinates {(1,0)(2,0.285)(3,0.212)(4,0.273)(5,0.467)(6,0.182)};\label{solving1}
\end{axis}
\begin{axis}[hide axis]
\addplot[fill=lightgray, draw=none] coordinates {(1,16)(2,3.78)(3,3.8)(4,0)(5,5.74)(6,7.8)}
node[above] at (axis cs:4,0) {$\times$}
node[rotate=90, yshift=0.02cm] at (axis cs:1,13.2) {\textbf{\small 72 secs $\rightarrow$}}node[yshift=0.37cm, draw=black] at (axis cs:5.9,6.3){\textbf{\small Instance 1}}; \label{conquer}
\end{axis}
\begin{axis}[bar shift=10pt, hide axis]
\addlegendimage{/pgfplots/refstyle=encoding}\addlegendentry{Encoding underlying CQ}
\addlegendimage{/pgfplots/refstyle=solving}\addlegendentry{Solving underlying CQ}
\addlegendimage{/pgfplots/refstyle=encoding1}\addlegendentry{Encoding for groups}
\addlegendimage{/pgfplots/refstyle=solving1}\addlegendentry{Solving for groups}
\addlegendimage{/pgfplots/refstyle=conquer}\addlegendentry{ConQuer rewriting}
\addplot[fill=red, draw=none] coordinates {(1,2.9)(2,0.627)(3,0.372)(4,1.54)(5,0.845)(6,1.2)};
\addlegendentry{Original query}
\end{axis}
\end{tikzpicture}
\end{subfigure}
\begin{subfigure}[b]{0.5\textwidth}
\begin{tikzpicture}[every axis/.style={height=4.3cm,width=\linewidth,ybar stacked, xtick=data, xticklabels={$Q_1$,$Q_3$,$Q_4$,$Q_5$,$Q_{10}$,$Q_{12}$}, legend cell align = {left}, bar width=8pt, ymin=0,ymax=8,xmin=1.5, y label style={at={(0.1,0.5)}}, x label style={at={(0.5,0)}}, ylabel=Eval.\ time (seconds),legend columns=2,legend style={draw=none,font=\scriptsize, fill=none, at={(0.72, 0.98)}}}]
\begin{axis}[bar shift=-10pt]
\addplot[fill=cyan, draw=none,  postaction={pattern=crosshatch dots, samples=20}] coordinates{(1,0)(2,1.201)(3,0.189)(4,0.706)(5,3.056)(6,0.709)};\label{encoding}
\addplot[fill=black, draw=none] coordinates{(1,0)(2,0.421)(3,0.017)(4,0.016)(5,0.423)(6,0.013)};\label{solving}
\addplot[fill=cyan!40!white, draw=none] coordinates {(1,0)(2,0.718)(3,1.302)(4,1.638)(5,0.837)(6,3.913)};\label{encoding1}
\addplot[fill=magenta, draw=none] coordinates {(1,0)(2,0.44)(3,0.458)(4,0.589)(5,0.312)(6,0.254)};\label{solving1}
\end{axis}
\begin{axis}[hide axis]
\addplot[fill=lightgray, draw=none] coordinates {(1,8)(2,3.3)(3,3.8)(4,0)(5,5.54)(6,7.57)}
node[above] at (axis cs:4,0) {$\times$}
node[rotate=90, yshift=0.02cm] at (axis cs:1,6.8) {\textbf{\small 64 secs $\rightarrow$}}node[yshift=0.37cm, draw=black] at (axis cs:5.9,6.3){\textbf{\small Instance 2}}; \label{conquer}
\end{axis}
\begin{axis}[bar shift=10pt, hide axis]
\addlegendimage{/pgfplots/refstyle=encoding}
\addlegendimage{/pgfplots/refstyle=solving}
\addlegendimage{/pgfplots/refstyle=encoding1}
\addlegendimage{/pgfplots/refstyle=solving1}
\addlegendimage{/pgfplots/refstyle=conquer}
\addplot[fill=red, draw=none] coordinates {(1,2.9)(2,0.527)(3,0.372)(4,1.54)(5,0.845)(6,1.026)};
\end{axis}
\end{tikzpicture}
\end{subfigure}
\begin{subfigure}[b]{0.5\textwidth}
\begin{tikzpicture}[every axis/.style={height=4.3cm,width=\linewidth,ybar stacked, xtick=data, xticklabels={$Q_1$,$Q_3$,$Q_4$,$Q_5$,$Q_{10}$,$Q_{12}$}, legend cell align = {left}, bar width=8pt, ymin=0,ymax=12, xmin=1.5, y label style={at={(0.1,0.5)}}, x label style={at={(0.5,0)}}, ylabel=Eval.\ time (seconds),legend columns=2,legend style={draw=none, font=\scriptsize, fill=none, at={(0.72, 0.98)}}}]
\begin{axis}[bar shift=-10pt]
\addplot[fill=cyan, draw=none,  postaction={pattern=crosshatch dots, samples=20}] coordinates{(1,0)(2,3.346)(3,0.284)(4,0.631)(5,6.189)(6,0.796)};\label{encoding}
\addplot[fill=black, draw=none]coordinates{(1,0)(2,0.948)(3,0.015)(4,0.009)(5,1.635)(6,0.013)};\label{solving}
\addplot[fill=cyan!40!white, draw=none] coordinates {(1,0)(2,0.933)(3,2.219)(4,2.936)(5,2.851)(6,4.332)};\label{encoding1}
\addplot[fill=magenta, draw=none] coordinates {(1,0)(2,0.389)(3,0.488)(4,0.768)(5,0.346)(6,0.813)};\label{solving1}
\end{axis}
\begin{axis}[hide axis]
\addplot[fill=lightgray, draw=none] coordinates {(1,13)(2,3.3)(3,3.8)(4,0)(5,5.54)(6,7.57)}
node[above] at (axis cs:4,0) {$\times$}
node[rotate=90, yshift=0.02cm] at (axis cs:1,11.2) {\textbf{\small 68 secs $\rightarrow$}}node[yshift=0.37cm, draw=black] at (axis cs:5.9,9.4){\textbf{\small Instance 3}}; \label{conquer}
\end{axis}
\begin{axis}[bar shift=10pt, hide axis]
\addlegendimage{/pgfplots/refstyle=encoding}
\addlegendimage{/pgfplots/refstyle=solving}
\addlegendimage{/pgfplots/refstyle=encoding1}
\addlegendimage{/pgfplots/refstyle=solving1}
\addlegendimage{/pgfplots/refstyle=conquer}
\addplot[fill=red, draw=none] coordinates {(1,2.9)(2,0.527)(3,0.372)(4,1.54)(5,0.845)(6,1.026)};
\end{axis}
\end{tikzpicture}
\end{subfigure}
\begin{subfigure}[b]{0.5\textwidth}
\begin{tikzpicture}[every axis/.style={height=4.3cm,width=\linewidth,ybar stacked, xtick=data,xlabel= TPC-H-inspired aggregation queries with grouping, xticklabels={$Q_1$,$Q_3$,$Q_4$,$Q_5$,$Q_{10}$,$Q_{12}$}, legend cell align = {left}, bar width=8pt, ymin=0,ymax=15,xmin=1.5, y label style={at={(0.1,0.5)}}, x label style={at={(0.5,0)}}, ylabel=Eval.\ time (seconds),legend columns=2,legend style={draw=none, font=\scriptsize, fill=none, at={(0.72, 0.98)}}}]
\begin{axis}[bar shift=-10pt]
\addplot[fill=cyan, draw=none,  postaction={pattern=crosshatch dots, samples=20}] coordinates{(1,0)(2,6.0256)(3,0.26)(4,0.597)(5,6.763)(6,0.763)};\label{encoding}
\addplot[fill=black, draw=none]coordinates{(1,0)(2,1.456)(3,0.016)(4,0.013)(5,4.558)(6,0.015)};\label{solving}
\addplot[fill=cyan!40!white, draw=none] coordinates {(1,0)(2,2.172)(3,2.686)(4,3.298)(5,2.151)(6,2.525)};\label{encoding1}
\addplot[fill=magenta, draw=none] coordinates {(1,0)(2,0.333)(3,0.563)(4,1.913)(5,0.323)(6,1.3251)};\label{solving1}
\end{axis}
\begin{axis}[hide axis]
\addplot[fill=lightgray, draw=none] coordinates {(1,16)(2,3.78)(3,3.8)(4,0)(5,5.74)(6,7.8)}
node[above] at (axis cs:4,0) {$\times$}
node[rotate=90, yshift=0.02cm] at (axis cs:1,13.2) {\textbf{\small 72 secs $\rightarrow$}}node[yshift=0.37cm, draw=black] at (axis cs:5.9,11.8){\textbf{\small Instance 4}}; \label{conquer}
\end{axis}
\begin{axis}[bar shift=10pt, hide axis]
\addlegendimage{/pgfplots/refstyle=encoding}
\addlegendimage{/pgfplots/refstyle=solving}
\addlegendimage{/pgfplots/refstyle=encoding1}
\addlegendimage{/pgfplots/refstyle=solving1}
\addlegendimage{/pgfplots/refstyle=conquer}
\addplot[fill=red, draw=none] coordinates {(1,2.9)(2,0.627)(3,0.372)(4,1.54)(5,0.845)(6,1.2)};
\end{axis}
\end{tikzpicture}
\end{subfigure}
\caption{AggCAvSAT vs.\ ConQuer on PDBench instances}
\label{w-grouping-pdbench}
\end{figure}

The overhead of computing the range consistent answers to aggregation queries with grouping is higher than that for the aggregation queries without grouping because for an aggregation query with grouping, AggCAvSAT needs to construct and solve twice as many \wpmaxsat{} instances as there are consistent groups, i.e., one for the \textit{lub}-answer and one for the \textit{glb}-answer per consistent group. For queries that involved the \texttt{SELECT TOP $k$} construct of SQL, we chose top $k$ consistent groups ordered by one or more grouping attributes present in the \texttt{ORDER BY} clause of the query. AggCAvSAT computes the range consistent answers to each query under ten seconds except for $Q_1$. It took under three seconds to compute the consistent groups of $Q_1$, but took over forty seconds to encode the range consistent answers of the groups and over fifteen minutes to solve the corresponding \wpmaxsat{} instances. This is because some consistent groups have over 3M tuples and so the \wpmaxsat{} instances have over 600 thousand variables and over 1.3 million clauses. ConQuer took slightly over two minutes to compute the range consistent answers to $Q_1$. We did not include $Q_1$ in experiments with larger databases and higher inconsistency.

Figure \ref{w-grouping-pdbench} shows the comparison of AggCAvSAT and ConQuer for aggregation queries with grouping on PDBench instances. For the database instance with the lowest amount of inconsistency, AggCAvSAT beats ConQuer on all queries, but as the inconsistency grows, AggCAvSAT takes longer time to encode and solve for the consistent groups of the queries $Q_3$ and $Q_{10}$.

\ignore{
\begin{figure}[ht]
\begin{tikzpicture}[every axis/.style={ybar stacked, xlabel= TPC-H-inspired aggregation queries with grouping, xtick=data, xticklabels={$Q_1$,$Q_3$,$Q_4$,$Q_5$,$Q_{10}$,$Q_{12}$}, legend cell align = {left}, bar width=8pt, ymin=0,ymax=13, y label style={at={(0.1,0.5)}}, x label style={at={(0.5,0)}}, ylabel=Eval.\ time (seconds),legend style={draw=none, fill=none, at={(0.72, 1)}}}]
\begin{axis}[bar shift=-10pt]
\addplot[fill=cyan, draw=none,  postaction={pattern=crosshatch dots, samples=20}] coordinates{(1,0)(2,4.346)(3,0.284)(4,0.631)(5,7.189)(6,0.796)};\label{encoding}
\addplot[fill=black, draw=none]coordinates{(1,0)(2,0.948)(3,0.015)(4,0.009)(5,1.635)(6,0.013)};\label{solving}
\addplot[fill=cyan!40!white, draw=none] coordinates {(1,0)(2,0.933)(3,2.219)(4,2.936)(5,2.851)(6,4.332)};\label{encoding1}
\addplot[fill=magenta, draw=none] coordinates {(1,0)(2,0.389)(3,0.488)(4,0.768)(5,0.346)(6,0.813)};\label{solving1}
\end{axis}
\begin{axis}[hide axis]
\addplot[fill=lightgray, draw=none] coordinates {(1,13)(2,3.3)(3,3.8)(4,0)(5,5.54)(6,7.57)}
node[above] at (axis cs:4,0) {$\times$}
node[rotate=90, yshift=0.02cm] at (axis cs:1,11.2) {\textbf{\small 68 secs $\rightarrow$}}; \label{conquer}
\end{axis}
\begin{axis}[bar shift=10pt, hide axis]
\addlegendimage{/pgfplots/refstyle=encoding}\addlegendentry{Encoding underlying CQ}
\addlegendimage{/pgfplots/refstyle=solving}\addlegendentry{Solving underlying CQ}
\addlegendimage{/pgfplots/refstyle=encoding1}\addlegendentry{Encoding for groups}
\addlegendimage{/pgfplots/refstyle=solving1}\addlegendentry{Solving for groups}
\addlegendimage{/pgfplots/refstyle=conquer}\addlegendentry{ConQuer rewriting}
\addplot[fill=red, draw=none] coordinates {(1,2.9)(2,0.527)(3,0.372)(4,1.54)(5,0.845)(6,1.026)};
\addlegendentry{Original query}
\end{axis}
\end{tikzpicture}
\caption{Evaluation time for computing the range consistent answers using AggCAvSAT on PDBench Instance 3}
\label{w-grouping-pdbench3}
\end{figure}

\begin{figure}[ht]
\begin{tikzpicture}[every axis/.style={ybar stacked, xlabel= TPC-H-inspired aggregation queries with grouping, xtick=data, xticklabels={$Q_1$,$Q_3$,$Q_4$,$Q_5$,$Q_{10}$,$Q_{12}$}, legend cell align = {left}, bar width=8pt, ymin=0,ymax=16, y label style={at={(0.1,0.5)}}, x label style={at={(0.5,0)}}, ylabel=Eval.\ time (seconds),legend style={draw=none, fill=none, at={(0.72, 1)}}}]
\begin{axis}[bar shift=-10pt]
\addplot[fill=cyan, draw=none,  postaction={pattern=crosshatch dots, samples=20}] coordinates{(1,0)(2,8.0256)(3,0.26)(4,0.597)(5,8.763)(6,0.763)};\label{encoding}
\addplot[fill=black, draw=none]coordinates{(1,0)(2,1.456)(3,0.016)(4,0.013)(5,4.558)(6,0.015)};\label{solving}
\addplot[fill=cyan!40!white, draw=none] coordinates {(1,0)(2,2.172)(3,2.686)(4,3.298)(5,2.151)(6,2.525)};\label{encoding1}
\addplot[fill=magenta, draw=none] coordinates {(1,0)(2,0.333)(3,0.563)(4,1.913)(5,0.323)(6,1.3251)};\label{solving1}
\end{axis}
\begin{axis}[hide axis]
\addplot[fill=lightgray, draw=none] coordinates {(1,16)(2,3.78)(3,3.8)(4,0)(5,5.74)(6,7.8)}
node[above] at (axis cs:4,0) {$\times$}
node[rotate=90, yshift=0.02cm] at (axis cs:1,13.2) {\textbf{\small 72 secs $\rightarrow$}}; \label{conquer}
\end{axis}
\begin{axis}[bar shift=10pt, hide axis]
\addlegendimage{/pgfplots/refstyle=encoding}\addlegendentry{Encoding underlying CQ}
\addlegendimage{/pgfplots/refstyle=solving}\addlegendentry{Solving underlying CQ}
\addlegendimage{/pgfplots/refstyle=encoding1}\addlegendentry{Encoding for groups}
\addlegendimage{/pgfplots/refstyle=solving1}\addlegendentry{Solving for groups}
\addlegendimage{/pgfplots/refstyle=conquer}\addlegendentry{ConQuer rewriting}
\addplot[fill=red, draw=none] coordinates {(1,2.9)(2,0.627)(3,0.372)(4,1.54)(5,0.845)(6,1.2)};
\addlegendentry{Original query}
\end{axis}
\end{tikzpicture}
\caption{Evaluation time for computing the range consistent answers using AggCAvSAT on PDBench Instance 4}
\label{w-grouping-pdbench4}
\end{figure}
}

In Figure \ref{varying-inconsistency-grouping}, we first plot the evaluation time of AggCAvSAT as the percentage of inconsistency in the data grows from 5\% to 35\% in the instances generated using the \texttt{DBGen}-based data generator. The size of the database repairs is kept constant at 1 GB (8 million tuples).
\begin{figure}[ht]
\begin{subfigure}[b]{0.5\textwidth}
\begin{tikzpicture}
  \begin{axis}[width=\linewidth, grid=major,height=4.1cm, tick label style={font=\normalsize},legend columns=5, label style={font=\normalsize}, grid style={white}, ylabel={Eval.\ time (seconds)}, y label style={at={(0.05,0.5)}}, ymax = 50, legend style={draw=none, fill=none, at={(0.9,1)}},legend cell align={left}, every axis plot/.append style={thick}]

  \ignore{\addplot[blue, mark=*] coordinates{(5,1.664) (10,2.074) (15,1.643) (20,1.7) (25,1.799)(30,1.964) (35,1.878)};\addlegendentry{$Q_1$}}
  \addplot[cyan, mark=triangle] coordinates{(5,4.775) (10,9.75) (15,14.337) (20,19.003) (25,26.387)(30,33.662) (35,41.048)};\addlegendentry{$Q_3$}
  \addplot[orange, mark=o] coordinates{(5,2.565) (10,2.95) (15,3.042) (20,4.284) (25,5.108)(30,6.473) (35,8.043)};\addlegendentry{$Q_4$}
  \addplot[black, mark=x] coordinates{(5,2.841) (10,3.218) (15,4.198) (20,5.312) (25,7.163)(30,11.3) (35,16.5)};\addlegendentry{$Q_5$}
  \addplot[purple, mark=square] coordinates{(5,4.774) (10,6.376) (15,9.192)(20,14.506)(25,21.112)(30,31.047)(35,44.027)};\addlegendentry{$Q_{10}$}
  \addplot[violet, mark=star] coordinates{(5,5.017) (10,5.801) (15,6.296) (20,7.39) (25,9.558)(30,13.419) (35,15.827)};\addlegendentry{$Q_{12}$}
\end{axis}
\end{tikzpicture}
\end{subfigure}
\begin{subfigure}[b]{0.5\textwidth}
\begin{tikzpicture}
  \begin{axis}[width=\linewidth, grid=major,height=4.1cm, tick label style={font=\normalsize},legend columns=5, label style={font=\normalsize}, grid style={white}, xlabel={Percentage of inconsistency}, ylabel={Number of SAT calls}, y label style={at={(0.05,0.5)}}, y tick label style={/pgf/number format/.cd, sci, sci zerofill, precision=0}, legend style={draw=none, fill=none,at={(0.9,1)}},legend cell align={left}, ymode=log, ymax=50000, every axis plot/.append style={thick}]
  \ignore{\addplot[blue, mark=*] coordinates{(5,1.664) (10,2.074) (15,1.643) (20,1.895) (25,1.799)(30,1.964) (35,1.878)};\addlegendentry{$Q_1$}}
  \addplot[cyan, mark=triangle] coordinates{(5,23) (10,29) (15,35) (20,34) (25,36)(30,52) (35,64)};\addlegendentry{$Q_3$}
  \addplot[orange, mark=o] coordinates{(5,19) (10,19) (15,31) (20,34) (25,41)(30,40) (35,41)};\addlegendentry{$Q_4$}
  \addplot[black, mark=x] coordinates{(5,71) (10,139) (15,364) (20,712) (25,1061)(30,2238) (35,4254)};\addlegendentry{$Q_5$}
   \addplot[purple, mark=square] coordinates{(5,416) (10,767) (15,1190) (20,1613) (25,2128)(30,2599) (35,3063)};\addlegendentry{$Q_{10}$}
	\addplot[violet, mark=star] coordinates{(5,94) (10,153) (15,490) (20,899) (25,1225)(30,1934) (35,2306)};\addlegendentry{$Q_{12}$}
\end{axis}
\end{tikzpicture}
\end{subfigure}
\caption{AggCAvSAT on TPC-H data generated using the \texttt{DBGen}-based tool (varying inconsistency, 1 GB repairs)}
\label{varying-inconsistency-grouping}
\end{figure}
Since AggCAvSAT constructs and solves many \wpmaxsat{} instances having varying sizes for an aggregation query involving grouping, we also plot the overall number of SAT calls made by the solver in Figure \ref{varying-inconsistency-grouping}. Note that the Y-axis has logarithmic scaling in the second plot of Figure \ref{varying-inconsistency-grouping}. There are ten consistent groups in the answers to $Q_3$, and just five and two consistent groups in the answers to $Q_5$ and $Q_{12}$ respectively. In each consistent group, the aggregation operator is applied over a much larger set of tuples in $Q_5$ and $Q_{12}$ than in $Q_3$. As a result, the evaluation time for $Q_3$ is high but the number of SAT calls is comparatively less, while AggCAvSAT makes more SAT calls for $Q_5$ and $Q_{12}$, even though their consistent answers are computed much faster. The query $Q_{10}$ requires long time to construct and solve the \wpmaxsat{} instances for its consistent groups due to its high selectivity and the presence of joins between four relations. The evaluation time of computing the range consistent answers to aggregation queries with grouping increases almost linearly w.r.t.\ the size of the database when the percentage of inconsistency is constant (Figure \ref{varying-datasize-grouping}). The second plot in Figure \ref{varying-datasize-grouping}
depicts the number of SAT calls made by the solver as the size of the database grows. Due to low selectivity, the answers to $Q_4$ are encoded into small CNF formulas even on databases with high inconsistency or large sizes, resulting in fast evaluations.

\begin{figure}[ht]
\begin{subfigure}[b]{0.5\textwidth}
\begin{tikzpicture}
  \begin{axis}[width=\linewidth, grid=major, height=4.1cm, tick label style={font=\normalsize},legend columns=5, label style={font=\normalsize}, grid style={white}, ylabel={Eval.\ time (seconds)}, y label style={at={(0.05,0.5)}}, ymax=65, y tick label style={/pgf/number format/.cd, fixed, fixed zerofill, precision=0}, legend style={draw=none, fill=none,at={(0.9,1)}},legend cell align={left}, every axis plot/.append style={thick}]
 
  \ignore{\addplot[blue, mark=*] coordinates{(0.5,0.898)(1,2.074)(2,0)(3,5.109)(4,7.942)(5,12.896)};\addlegendentry{$Q_1$}}
  \addplot[cyan, mark=triangle] coordinates{(0.5,5.177)(1,9.75)(2,19.754)(3,28.489)(4,41.797)(5,53.698)};\addlegendentry{$Q_3$}
  \addplot[orange, mark=o] coordinates{(0.5,1.792)(1,2.95)(2,3.648)(3,8.158)(4,10.15)(5,13.513)};\addlegendentry{$Q_4$}
  \addplot[black, mark=x] coordinates{(0.5,2.235)(1,3.218)(2,5.004)(3,8.243)(4,11.199)(5,14.442)};\addlegendentry{$Q_5$}
  \addplot[purple, mark=square] coordinates{(0.5,3.262)(1,6.376)(2,14.246)(3,24.872)(4,37.18)(5,52.309)};\addlegendentry{$Q_{10}$}
  \addplot[violet, mark=star] coordinates{(0.5,2.646)(1,5.801)(2,10.421)(3,15.87)(4,23.232)(5,38.312)};\addlegendentry{$Q_{12}$}
\end{axis}
\end{tikzpicture}
\end{subfigure}
\begin{subfigure}[b]{0.5\textwidth}
\begin{tikzpicture}
  \begin{axis}[width=\linewidth, grid=major,height=4.1cm, tick label style={font=\normalsize},legend columns=5, label style={font=\normalsize}, grid style={white}, xlabel={Size of the database repairs (in GB)}, ylabel={Number of SAT calls}, y label style={at={(0.05,0.5)}}, y tick label style={/pgf/number format/.cd, sci, sci zerofill, precision=0}, legend style={draw=none, fill=none,at={(0.9,1)}},legend cell align={left}, ymode=log, ymax=50000, every axis plot/.append style={thick}]
  \ignore{\addplot[blue, mark=*] coordinates{(0.5,0.898)(1,2.074)(2,0)(3,5.109)(4,7.942)(5,12.896)};\addlegendentry{$Q_1$}}
  \addplot[cyan, mark=triangle] coordinates{(0.5,23)(1,29)(2,33)(3,37)(4,49)(5,81)};\addlegendentry{$Q_3$}
  \addplot[orange, mark=o] coordinates{(0.5,15)(1,19)(2,25)(3,29)(4,38)(5,47)};\addlegendentry{$Q_4$}
  \addplot[black, mark=x] coordinates{(0.5,100)(1,139)(2,234)(3,369)(4,589)(5,745)};\addlegendentry{$Q_5$}
  \addplot[purple, mark=square] coordinates{(0.5,373)(1,767)(2,1525)(3,2259)(4,3011)(5,3645)};\addlegendentry{$Q_{10}$}
  \addplot[violet, mark=star] coordinates{(0.5,105)(1,153)(2,365)(3,588)(4,782)(5,1081)};\addlegendentry{$Q_{12}$}
\end{axis}
\end{tikzpicture}
\end{subfigure}
\caption{AggCAvSAT on TPC-H data generated using the \texttt{DBGen}-based tool (varying database sizes, 10\% inconsistency)}
\label{varying-datasize-grouping}
\end{figure}

\subsubsection{Discussion}
The experiments show that AggCAvSAT performed well across a broad range of queries and databases; it performed worse  on queries with high selectivity because, in such cases, very large CNF formulas were generated. AggCAvSAT slowed down on databases with high degree of inconsistency ($> 30\%$) and with key-equal groups of large sizes ($>15$). These are rather corner cases that should not be encountered  in real-world databases.

\subsection{Experiments with Real-world Data}
\subsubsection{Dataset} For this set experiments, we use the schema and the data from Medigap \cite{medigap}, an openly available real-world database about Medicare supplement insurance in the United States. We combine the data from 2019 and 2020 to obtain a database with over 61K tuples (Table \ref{real-world-data-schema}).  We evaluated the performance of Reduction \ref{reduction3}, since we consider two functional dependencies and one denial constraint on the Medigap schema, as shown in Table \ref{real-world-integrity-constraints}. The actual data was inconsistent so no  additional inconsistency was injected.
\begin{table}[ht]
\caption{Medigap real-world database}
\begin{subtable}{0.45\textwidth}
\centering
\begin{tabular}{|l|l|c|c|} \hline
Relation&Acronym&\# of attributes&\# of tuples\\ \hline
OrgsByState &OBS& 5 & 3872\\
PlansByState &PBS& 18 & 21002\\
PlansByZip &PBZ& 20 & 4748\\
PlanType &PT& 4 & 2434\\
Premiums &PR& 7 & 29148\\
SimplePlanType &SPT& 4 & 70\\
\hline\end{tabular}
\caption{Medigap schema}
\label{real-world-data-schema}
\end{subtable}%

\smallskip

\begin{subtable}{0.45\textwidth}
\centering
\begin{tabular}{|l|l|c|} \hline
Type&Constraint Definition&Inconsistency\\ \hline
FD & OBS (orgID $\rightarrow$ orgName)&2.58\%\\
FD & PBS (addr, city, abbrev $\rightarrow$ zip)&1.5\%\\
DC & $\forall\; t \in \text{PBS}$ ($t$.webAddr $\neq$ `')&0.15\%\\
\hline\end{tabular}
\caption{Integrity constraints and inconsistency}
\label{real-world-integrity-constraints}
\end{subtable}%
\label{real-data}
\end{table}

\vspace{-0.1cm}

\subsubsection{Queries}
We use twelve natural aggregation queries on the Medigap database that involve the aggregation operators \texttt{COUNT(*)}, \texttt{COUNT($A$)}, and \texttt{SUM($A$)}. We refer to these as $(Q^m_1, \cdots, Q^m_{12})$. The first six queries 
contain no grouping, while the rest of them do. The definitions of these queries are given in Table \ref{real-data-queries}.

\begin{table}[h]
\scriptsize
\caption{Aggregation Queries on Medigap database}
\centering
\begin{tabular}{|p{0.03\linewidth}|p{0.9\linewidth}|} \hline
\#&Query\\ \hline
$Q^m_1$&\texttt{SELECT COUNT(*) FROM OBS WHERE OBS.Name = `Continental General Insurance Company'}\\
$Q^m_2$&\texttt{SELECT COUNT(*) FROM PBZ, SPT WHERE PBZ.Description = SPT.Simple\_plantype\_name AND SPT.Contract\_year = 2020 AND SPT.Simple\_plantype = `B'}\\
$Q^m_3$&\texttt{SELECT SUM(PBZ.Over65) FROM PBZ WHERE PBZ.State\_name = `Wisconsin' AND PBZ.County\_name = `GREEN LAKE'}\\
$Q^m_4$&\texttt{SELECT SUM(PBZ.Community) FROM PBZ WHERE PBZ.State\_name = `New York'}\\
$Q^m_5$&\texttt{SELECT COUNT(PR.Premium\_range) FROM PR}\\
$Q^m_6$&\texttt{SELECT COUNT(PR.Premium\_range) FROM PT, PR WHERE PT.State\_abbrev = PR.State\_abbrev AND PT.Plan\_type = PR.Plan\_type AND PT.Contract\_year = PR.Contract\_year AND PT.Contract\_year = 2020 AND PT.Simple\_plantype = `K'}\\
$Q^m_7$&\texttt{SELECT SPT.Contract\_year, COUNT(*) FROM SPT GROUP BY SPT.Contract\_year ORDER BY SPT.Contract\_year DESC}\\
$Q^m_8$&\texttt{SELECT PBZ.State\_name, COUNT(*) FROM PBZ GROUP BY PBZ.State\_name
}\\
$Q^m_9$&\texttt{SELECT PBZ.Zip, SUM(PBZ.Community) FROM PBZ WHERE PBZ.State\_name = `New York' GROUP BY PBZ.Zip}\\
$Q^m_{10}$&\texttt{SELECT TOP 10 PBS.State\_name, SPT.Contract\_year, SUM(PBS.Under65) FROM PBS, SPT WHERE SPT.Simple\_plantype\_name = PBS.Description AND SPT.Simple\_plantype = `A' AND SPT.Language\_id = 1 GROUP BY PBS.State\_name, SPT.Contract\_year ORDER BY PBS.State\_name}\\
$Q^m_{11}$&\texttt{SELECT PR.Age\_category, COUNT(PR.Premium\_range) FROM PR GROUP BY PR.Age\_category ORDER BY PR.Age\_category}\\
$Q^m_{12}$&\texttt{SELECT TOP 10 PT.Simple\_plantype, COUNT(PR.Premium\_range) FROM PT, PR WHERE PT.State\_abbrev = PR.State\_abbrev AND PT.Plan\_type = PR.Plan\_type AND PT.Contract\_year = PR.Contract\_year and PT.Contract\_year = 2020 GROUP BY PT.Simple\_plantype ORDER BY PT.Simple\_plantype}\\
\hline\end{tabular}
\label{real-data-queries}
\end{table}

\subsubsection{Results on Real-world Database}\label{sec-exp-real-world}
\begin{figure}[ht]
\begin{tikzpicture}[every axis/.style={ybar stacked, xlabel= Real-world aggregation queries on Medigap database, xtick=data, height=4.5cm,width=\linewidth, xticklabels={$Q^m_1$,$Q^m_2$,$Q^m_3$,$Q^m_4$,$Q^m_5$,$Q^m_6$,$Q^m_7$,$Q^m_8$,$Q^m_9$,$Q^m_{10}$,$Q^m_{11}$,$Q^m_{12}$}, legend cell align = {left}, bar width=8pt, ymin=0,ymax=45, y label style={at={(0.1,0.5)}}, x label style={at={(0.5,0)}}, ylabel=Eval.\ time (seconds),legend style={draw=none, fill=none,font=\scriptsize, at={(0.5, 0.95)}}}]
\begin{axis}
\addplot[fill=cyan, draw=none,postaction={pattern=crosshatch dots, samples=20}] coordinates {(1,9.868)(2,11.243)(3,13.334)(4,12.696)(5,13.618)(6,14.218)(7,12.306)(8,12.603)(9,13.271)(10,14.123)(11,12.769)(12,13.427)};\label{encoding}
\addplot[fill=black, draw=none] coordinates {(1,0.495)(2,0.44)(3,0.427)(4,0.452)(5,0.771)(6,0.544)(7,0.326)(8,0.255)(9,0.229)(10,0.292)(11,0.311)(12,0.283)};\label{solving}
\addplot[fill=cyan!40!white, draw=none] coordinates {(1,0)(2,0)(3,0)(4,0)(5,0)(6,0)(7,19.555)(8,14.327)(9,14.402)(10,25.515)(11,20.393)(12,25.818)};\label{encoding1}
\addplot[fill=magenta, draw=none] coordinates {(1,0)(2,0)(3,0)(4,0)(5,0)(6,0)(7,0.723)(8,1.139)(9,0.537)(10,3.488)(11,2.204)(12,3.485)};\label{solving1}
\end{axis}
\begin{axis}[bar shift=10pt, hide axis]
\addlegendimage{/pgfplots/refstyle=encoding}\addlegendentry{Encoding underlying CQ}
\addlegendimage{/pgfplots/refstyle=solving}\addlegendentry{Solving underlying CQ}
\addlegendimage{/pgfplots/refstyle=encoding1}\addlegendentry{Encoding for groups}
\addlegendimage{/pgfplots/refstyle=solving1}\addlegendentry{Solving for groups}
\addplot[fill=red, draw=none] coordinates {(1,0)};
\end{axis}
\end{tikzpicture}
\caption{Evaluation time for computing the range consistent answers to real-world aggregation queries}
\label{real-data-evaluation}
\end{figure}

\vspace{-0.1cm}

In Figure \ref{real-data-evaluation}, we plot the overall time taken by AggCAvSAT to compute the range consistent answers to the twelve aggregation queries on the Medigap database.
Since the Medigap schema has functional dependencies and a denial constraint, the encoding of CQA into \wpmaxsat{} instances is based on Reduction \ref{reduction3}. Consequently, the size of the CNF formulas is much larger compared to that of the ones produced by Reduction \ref{reduction1}, resulting in longer encoding times. For all twelve queries, the encoding time is dominated by the time required to compute the near-violations and hence the $\gamma$-clauses. This part of the encoding time is equal for all queries, but the computation time for the witnesses depends on the query. The solver takes comparatively minuscule amount of time to compute the consistent answers to the underlying conjunctive query. For the queries $Q^m_7, \cdots, Q^m_{12}$, the \textit{glb}-answer and the \textit{lub}-answer are encoded and then solved for for each consistent group, causing high overhead. The longest evaluation time is taken by queries $Q^m_{10}$, $Q^m_{12}$, and $Q^m_6$ since they consist of 10, 10, and 6 consistent groups, respectively.

\begin{figure}[ht]
\begin{tikzpicture}[every axis/.style={xlabel= Real-world aggregation queries on Medigap database, xtick=data,height=4.5cm, width=0.5\textwidth, xticklabels={$Q^m_1$,$Q^m_2$,$Q^m_3$,$Q^m_4$,$Q^m_5$,$Q^m_6$,$Q^m_7$,$Q^m_8$,$Q^m_9$,$Q^m_{10}$,$Q^m_{11}$,$Q^m_{12}$}, legend cell align = {left}, ymin=0, y label style={at={(0.08,0.5)}}, x label style={at={(0.5,0)}},ymin=70000,ymax=120000, ylabel=Number of clauses,legend style={draw=none, fill=none, at={(0.95, 0.95)}}}]
\begin{axis}
\addplot[purple, mark=square] coordinates {(1,86156)(2,88474)(3,86135)(4,100214)(5,115274)(6,87446)(7,86160)(8,86154)(9,90823)(10,93448)(11,95978)(12,100915)};\label{clauses}
\addlegendentry{Number of clauses}
\end{axis}
\end{tikzpicture}
\caption{Number of clauses in a CNF formula capturing the consistent answers to the underlying conjunctive query}
\label{real-data-clauses}
\end{figure}

For these experiments, we did not compute the range consistent answers from the consistent part of the data first. Thus, for all CNF formulas, the number of variables for  is equal to the number of tuples in the data (about 61K). The number of clauses, however, varies depending on the query, as shown in Figure \ref{real-data-clauses}. The query $Q^m_5$ has the highest number of clauses since all tuples in the vwPremiums table are the minimal witnesses to its underlying conjunctive query.

The clauses arising from the inconsistency in the data can be constructed independently from the clauses arising from the witnesses to the queries. In the near future, we plan to parallelize their computation to improve AggCAvSAT's performance.

\section{Concluding Remarks}
First, we showed that computing the range consistent answers to an aggregation query involving the $\texttt{SUM(A)}$ operator can be NP-hard, even if the consistent answers to the underlying conjunctive query are SQL-rewritable.
We 
then 
designed, implemented, and evaluated AggCAvSAT, a SAT-based system for computing range consistent answers to aggregation queries involving 
$\texttt{COUNT(A)}$, $\texttt{COUNT(*)}$, $\texttt{SUM(A)}$, and grouping.
 It is the first system able to handle aggregation queries whose range consistent answers are not SQL-rewritable. Our experimental evaluation showed that AggCAvSAT is not only competitive with  systems such as ConQuer but it is also scalable. The experiments on the Medigap data showed that AggCAvSAT can handle real-world databases having integrity constraints beyond primary keys. The next step in this investigation is to first delineate the complexity of the range consistent answers to aggregation queries with the operator \texttt{\small AVG($A$)} and then enhance the capabilities of AggCAvSAT to compute the range consistent answers of such aggregation queries. Finally, we note that the SAT-based methods used here are applicable to broader classes of SQL queries, such as queries with nested subqueries, as long as denial constraints are considered. If broader classes of constraints are considered, such as universal constraints, then  the consistent answers of even conjunctive queries become $\Pi_2^p$-hard to compute \cite{DBLP:conf/icdt/ArmingPS16}, hence SAT-based methods are not applicable. In that case, Answer Set Programming solvers (for example, DLV \cite{DBLP:journals/tocl/LeonePFEGPS06} or Potassco \cite{DBLP:journals/aicom/GebserKKOSS11}) have to be used, instead of SAT solvers.
 \ignore{
 Finally, we note that for nested aggregation queries, the complexity of the range semantics may jump to levels of the polynomial hierarchy PH that are higher than NP. For such queries,  Answer Set Programming solvers (for example, DLV \cite{DBLP:journals/tocl/LeonePFEGPS06} or Potassco \cite{DBLP:journals/aicom/GebserKKOSS11}) have to be used, instead of SAT solvers.}

\balance


\smallskip

\noindent{\bf Acknowledgments}~
Dixit was supported by a Baskin School of Engineering Dissertation-Year Fellowship and by the Center for Research in Open Source Software (CROSS) at the UC Santa Cruz. Kolaitis was partially supported by NSF Award IIS: 1814152.
\bibliographystyle{ACM-Reference-Format}
\bibliography{arXiv-version}
\end{document}